\theoremstyle{plain}
\newtheorem{thm}{\protect\theoremname}
\theoremstyle{plain}
\newtheorem{prop}{\protect\propositionname}
\theoremstyle{plain}
\newtheorem{lem}{\protect\lemmaname}
\providecommand{\lemmaname}{Lemma}
\providecommand{\propositionname}{Proposition}
\providecommand{\theoremname}{Theorem}
\begin{document}
\begin{doublespace}
\begin{center}
\textbf{\textcolor{black}{\large{}David vs Goliath (You against the
Markets), }}{\large\par}
\par\end{center}

\begin{center}
\textbf{\textcolor{black}{\large{}A Dynamic Programming Approach to
Separate the Impact and Timing of Trading Costs}}{\large\par}
\par\end{center}

\begin{center}
\textbf{Ravi Kashyap (ravi.kashyap@stern.nyu.edu)}
\par\end{center}

\begin{center}
\textbf{City University of Hong Kong }
\par\end{center}

\begin{center}
\textbf{April 20, 2020}
\par\end{center}

\begin{center}
Keywords: Trading Cost; Market Impact; Execution; Zero Sum Game; Uncertainty;
Simulation; Dynamic Programming; Stochastic; Bellman Equation; Implementation
Shortfall
\par\end{center}

\begin{center}
JEL Codes: D81 Criteria for Decision-Making under Risk and Uncertainty;
D53 Financial Markets; C72 Noncooperative Games 
\par\end{center}

\begin{center}
AMS Subject Codes: 68T37 Reasoning under uncertainty; 49L20 Dynamic
programming method; 91A10 Noncooperative games 
\par\end{center}
\end{doublespace}

\begin{center}
\textbf{\textcolor{blue}{\href{https://doi.org/10.1016/j.physa.2019.122848}{Edited Version: Kashyap, R. (2020).  David vs Goliath (You against the Markets), A Dynamic Programming Approach to Separate the Impact and Timing of Trading Costs. Physica A: Statistical Mechanics and its Applications, 545, 122848. }}}\tableofcontents{}
\par\end{center}
\begin{doublespace}

\section{Abstract}
\end{doublespace}

\begin{doublespace}
We develop a fundamentally different stochastic dynamic programming
model of trading costs. Built on a strong theoretical foundation,
our model provides insights to market participants by splitting the
overall move of the security price during the duration of an order
into the Market Impact (price move caused by their actions) and Market
Timing (price move caused by everyone else) components. We derive
formulations of this model under different laws of motion of the security
prices, starting with a simple benchmark scenario and extending this
to include multiple sources of uncertainty, liquidity constraints
due to volume curve shifts and relating trading costs to the spread.

We develop a numerical framework that can be used to obtain optimal
executions under any law of motion of prices and demonstrate the tremendous
practical applicability of our theoretical methodology including the
powerful numerical techniques to implement them.\textit{ }Our decomposition
of trading costs into Market Impact and Market Timing allows us to
deduce the zero sum game nature of trading costs.\textit{ }It holds
numerous lessons for dealing with complex systems, wherein reducing
the complexity by splitting the many sources of uncertainty can lead
to better insights in the decision process.
\end{doublespace}

\pagebreak{}
\begin{doublespace}

\section{Introduction}
\end{doublespace}

\begin{doublespace}
The recent blockbuster book, David and Goliath: Underdogs, Misfits,
and the Art of Battling Giants (Gladwell 2013), talks about the advantages
of disadvantages, which in the legendary battle refers to (among other
things) the nimbleness that David possesses due to his smaller size
and lack of armor, that comes in handy while defeating the massive
and seemingly unbeatable Goliath. Despite the inspiring tone of the
story the efforts of the most valiant financial market participant
can seem puny and turn out to be inadequate, as it gets undone when
dealing with the gargantuan and mysterious temperament of uncertainty
in the markets. 
\end{doublespace}

Another main feature of the David versus Goliath story is the tool
(sling\footnote{A sling is a projectile weapon typically used to throw a blunt projectile
such as a stone, clay, or lead \textquotedbl sling-bullet\textquotedbl .
It is also known as the shepherd's sling. \href{https://en.wikipedia.org/wiki/Sling_(weapon)}{Sling (Weapon), Wikipedia Link}}) that David uses to defeat Goliath. In this article, we hope to provide
tools for market participants to contend with the Goliath-like uncertainty
in financial markets. A trader\textquoteright s conundrum is whether
(and how much) to trade during a given interval or wait for the next
interval when the price momentum is more favorable to his direction
of trading. But given the nature of uncertainty in the social sciences,
any weapon might prove to be insufficient compared to the sling that
delivered the fatal blow to Goliath, until perhaps, one can discern
the ability to read the minds of all the market participants. That
being said, the techniques in this paper will go a long way towards
helping participants and making their life easier when confronting
the markets. In addition, the mechanisms we provide can be useful
for combating uncertainty and aiding better decision making in many
areas of the social sciences. 

\begin{doublespace}
\textbf{\textit{We develop a new stochastic dynamic programming model
of trading costs (section \ref{sec:Alternative-and-Practical}) based
on the Bellman principle of optimality. Built on a strong theoretical
foundation, this model can provide insights to market participants
by splitting the overall move of the security price during the duration
of an order into the Market Impact (price move caused by their actions)
and Market Timing (price move caused by everyone else) components.}}
Plugging different distributions of prices and volumes into this framework
can help traders decide when to bear higher Market Impact by trading
more in the hope of offsetting the cost of trading at a higher price
later. We derive formulations of this model under different laws of
motion of the security prices. We start with a benchmark scenario
and extend this to include multiple sources of uncertainty, liquidity
constraints due to volume curve shifts and relating trading costs
to the spread (section \ref{sec:Extensions-Price-Motion}).

The unique aspect of our approach to trading costs is a method of
splitting the overall move of the security price during the duration
of an order into two components (Collins \& Fabozzi 1991; Treynor
1994; Yegerman \& Gillula 2014). One component gives the costs of
trading, that arise from the decision process that went into executing
that particular order, as captured by the price moves caused by the
executions that comprise that order. The other component gives the
costs of trading, that arise due to the decision process of all the
other market participants, during the time this particular order was
being filled. This second component is inferred, since it is not possible
to calculate it directly (at least with the present state of technology
and publicly available data) and it is the difference between the
overall trading costs and the first component, which is the trading
cost of the executions that make up that order alone. The first and
the second component arise due to competing forces, one from the actions
of a particular participant, and the other from the actions of everyone
else, that would be looking to fulfill similar objectives. 
\end{doublespace}

(Sections \ref{subsec:Deeper-Intuition-from}; \ref{subsec:Related-Literature})
seek to develop a deeper intuition for our methodology and review
the relevant literature. (Section \ref{sec:Dynamic-Recursive-Trading})
introduces the notation, terminology and has a discussion of foundational
concepts. (Sections \ref{sec:Alternative-and-Practical}; \ref{sec:Extensions-Price-Motion})
have the innovations from using our dynamic programming model under
different laws of motion of prices. \textbf{\textit{We develop a numerical
technique (section \ref{sec:Numerical-Framework-for}) that can be
used to obtain optimal executions under any law of motion of prices,
using a modification of the technique for pricing American options
(Longstaff \& Schwartz 2001). Our results demonstrate the tremendous
practical applicability of our theoretical framework including the
numerical techniques to implement them. The decomposition of trading
costs into Market Impact and Market Timing allows us to deduce the
zero sum game nature of trading costs (section \ref{subsec:Trading-Costs-as}).
It holds numerous lessons for dealing with complex systems, wherein
reducing the complexity by splitting the many sources of uncertainty
can lead to better insights in the decision process}}\footnote{\textbf{\textit{To elaborate on this, in any social system it would
be helpful to first distinguish the different participants and how
their actions contributes to uncertainty. If this is possible, then
understanding these components of uncertainty can sometimes help in
the analysis of social systems. For example, if we are looking to
analyze the shopping patterns in a mall, if we can distinguish shoppers
who buy on impulse and shoppers who buy after looking for discounts,
we might be better able to forecast sales and analyze this system
better. Also, our study can aid in the understanding of complex non-linear
phenomena, such as the evolution of prices in financial markets by
considering the price changes as being caused by multiple sources
of uncertainty. Such an approach of understanding the various sources
of uncertainty can be useful in the study of complicated physical
phenomena as well.}}}\textbf{\textit{.}}
\begin{doublespace}

\subsection{\label{subsec:Deeper-Intuition-from}Deeper Intuition from Realistic
Trading Situations }
\end{doublespace}

\begin{doublespace}
Naturally, it follows that each particular participant can only influence
to a greater degree the cost that arises from his actions as compared
to the actions of others over which he has lesser influence; but an
understanding of the second component can help him plan and alter
his actions to counter any adversity that might arise from the latter.
Any good trader would do this intuitively as an optimization process,
that would minimize costs over two variables direct impact and timing,
the output of which recommends either slowing down or speeding up
his executions. With our methodology, traders now actually have a
quantitative indicator to fine tune their decision process. When we
decompose the costs, it would be helpful to try and understand how
the two sub costs could vary as a proportion of the total. The volatility
in these two components, which would arise from different sources
(market conditions) would require different responses and hence would
affect the optimization problem mentioned above. Hence, based on an
understanding of the two components and the situation at hand, traders
would know which cost would be the more unpredictable one and hence
focus their efforts on minimizing the costs arising from that component.

The key innovation can be explained as follows: 
\end{doublespace}
\begin{enumerate}
\begin{doublespace}
\item A jump up in price on an execution that comprises a buy order is considered
adverse and attributed as impact, while a fall in price is not. Yes,
the price could fall further if not for the backstop provided by the
executions that comprise the buy order; but the key aspect to remember
here is the bilateral nature of trading. A price fall for the buyer
(or a benefit for him) is impact for the seller (and hence adverse);
and the seller bears the impact cost in this case. To understand this
better, we need to remember that if there is a lack of liquidity a
buyer can only bid up the price in the hopes of obtaining enough shares
to meet his demand and it is these jumps in price in a direction,
adverse to his direction of trading that are attributed as his market
impact.
\item Most trading cost models consider elaborate theories of the price
drifting around, but what actually happens during the transfer of
securities is one party, usually, has an upper hand and that is the
portion we look to measure as impact for the other party. The key
fallout from measuring impact this way is that we have a better way
to measure the effect of our actions from when we have a concrete
advantage, to when we are okay to put up with a certain disadvantage. 
\item The message from this reality is that despite our ambitions to optimize
the entire trading process, what we can control is the market impact
due to our trades; the market timing, which is the impact for our
counter parties is dependent on the decision process of these other
market participants and hence beyond the domain of what we can hope
(or choose) to optimize. 
\item While no measure of trading costs is perfect and complete, this methodology
goes a long way in actually providing tangible ways for someone to
understand the effect of their decision process and the associated
implementation of trades.
\end{doublespace}
\end{enumerate}
\begin{doublespace}
Another analogy to understand this methodology is to think of each
execution as effecting a state transition from one price level to
another. The impact is then the cost or charge involved to make the
state transition. We can also think of the change in price levels
as moving from one station to another in a train and the ticket price
is the cost involved to make this journey. If there is excess demand
to travel from one station to another, the ticket price, which is
the same for everyone at a particular point in time, changes accordingly
and only those that are willing to pay can make the journey. That
we are considering the state transitions for each execution at millisecond
intervals means that we are building from the bottom up and aggregating
smaller effects into an overall impact number for the order based
on the executions that comprise it. Theoretically since it is possible
that multiple parties could execute simultaneously (two or more buyers
and / or sellers on each side), the question of which of the parties
is more responsible for causing the price level to change and whether
there needs to be a proportional allocation of the price jump does
not set in, since all the parties are travelers on the same journey
and they all have to pay the ticket price. Though, for executions
that happen through a continuous auction process at larger intervals
of time, a proportional allocation based on the size of each parties
execution might be a possible alternative and will be pursued in later
papers.

(Figures \ref{fig:Reversion-Distributions-Shorter-Horizon}; \ref{fig:Reversion-Distributions-Longer-Horizon})
show the reversion\footnote{\begin{doublespace}
Reversion indicates the movement in prices, after an order has completed,
in the direction which is beneficial to the direction of trading.
For example it is considered positive if the movement is downwards
after a buy order has completed since buying moves the prices upwards.
\end{doublespace}
} in prices after an order has completed, broken down by volatility
and momentum buckets which are defined below. The full order sample
includes 148,812 institutional orders from 70+ countries with 17 countries
having at-least a thousand orders each \footnote{\begin{doublespace}
The actual financial market data cannot be disclosed due to confidentiality
reasons. The simulation data and related software can be made available
upon request.
\end{doublespace}
}. This global sample illustrates that the phenomenon is not restricted
to any single country. The reversion is based on two measures:
\end{doublespace}
\begin{enumerate}
\begin{doublespace}
\item In time, 5 minutes and 60 minutes after an order has completed.
\item In multiples of the order size, one times and five times the size
of the order.
\end{doublespace}
\end{enumerate}
\begin{doublespace}
The five Trade Momentum buckets are based on the side adjusted percentage
return during the order's trading interval:
\end{doublespace}
\begin{enumerate}
\begin{doublespace}
\item Significant Adverse (<-2\%) 
\item Adverse (-1/3\% thru -2\%) 
\item Neutral (-1/3\% thru +1/3\%) 
\item Favorable (+1/3\% thru 2\%) 
\item Significant Favorable (>+2\%) 
\end{doublespace}
\end{enumerate}
\begin{doublespace}
The four Trade Volatility buckets are based on the coefficient of
variation\footnote{\begin{doublespace}
In probability theory and statistics, the coefficient of variation
(CV), also known as relative standard deviation (RSD), is a standardized
measure of dispersion of a probability distribution or frequency distribution.
It is defined as the ratio of the standard deviation to the mean.
It shows the extent of variability in relation to the mean of the
population. \href{https://en.wikipedia.org/wiki/Coefficient_of_variation}{Coefficient of Variation, Wikipedia Link}
\end{doublespace}
}of prices during the execution horizon:
\end{doublespace}
\begin{enumerate}
\begin{doublespace}
\item High Volatility (>0.0050) 
\item Moderate Volatility (0.0010 thru 0.0050) 
\item Low Volatility (0.000000000000001 thru 0.0010) 
\item No Volatility (<= 0.000000000000001) 
\end{doublespace}
\end{enumerate}
\begin{doublespace}
(Figure \ref{fig:Reversion-Distributions-Shorter-Horizon}) has two
reversion indicators: 5 minutes on the left half and one times the
order size on the right half. (Figure \ref{fig:Reversion-Distributions-Longer-Horizon})
also has two reversion indicators: 60 minutes on the left half and
five times the order size on the right half. For each reversion indicator,
there are four volatility buckets on the X-axis and three momentum
buckets are on the Y-axis since we club together the orders in the
Significant Adverse and Adverse buckets; and the Significant Favorable
and Favorable buckets. Each order from the sample will be represented
as a bubble in all four reversion indicators and within one of the
four volatility buckets and one of the three momentum buckets. The
size of the bubbles indicates the relative magnitude of the order
and its position on the vertical axis signifies the reversion amount
in basis points. The box (grey region between two horizontal lines)
and the whisker (two horizontal lines separated by a vertical line)
capture the areas where 25\% and 75\% of the sample resides. 
\end{doublespace}

Not surprisingly, the momentum and reversion are higher in periods
of greater volatility, as seen more clearly from the measures based
on multiples of the order size (right half of Figures \ref{fig:Reversion-Distributions-Shorter-Horizon};
\ref{fig:Reversion-Distributions-Longer-Horizon}). The higher volatility
accentuates the efforts required to trade in such an environment.
This illustrates the issue that traders face and the optimization
process that is followed where they try to benefit from positive momentum
and try to avoid adverse momentum by trading more when adverse momentum
is anticipated, while being conscious of the level of volatility.
\begin{doublespace}

\subsection{\label{subsec:Related-Literature}Related Literature}
\end{doublespace}

\begin{doublespace}
\textbf{\textit{Building on the foundation laid by (Bertsimas \& Lo
1998), another popular way to decompose trading costs is into temporary
and permanent impact (see, Almgren \& Chriss 2001; Almgren 2003; and
Almgren, Thum, Hauptmann \& Li 2005). While the theory behind this
approach is extremely elegant and considers both linear and nonlinear
functions of the variables for estimating the impact, a practical
way to compute it requires measuring the price a certain interval
after the order. This interval is ambiguous and could lead to lower
accuracy while using this measure.}}

More recent extensions include: minimizing the mean and variance of
the costs of trading for the case of market orders only to derive
explicit formulas for the optimal trading strategies (Huberman \&
Stanzl 2005); considering quadratic variation as a reasonable risk
measure rather than variance, (Forsyth, Kennedy, Tse \& Windcliff
2012); the problem faced by an investor who must liquidate a given
basket of assets over a finite time horizon (Schied, Schöneborn \&
Tehranchi 2010); (Almgren \& Lorenz 2007) derive optimal strategies
where the execution accelerates when the price moves in the trader\textquoteright s
favor, and slows when the price moves adversely;\footnote{(Kissell \& Malamut 2006) term such adaptive strategies \textquotedblleft aggressive-in-the-money\textquotedblright ;
a \textquotedblleft passive-in-the-money\textquotedblright{} strategy
would react oppositely. They assume that the investor's utility has
constant absolute risk aversion (CARA) and that the asset prices are
given by a very general continuous-time, multi-asset price impact
model and show that the investor does no worse if he narrows his search
to deterministic strategies. CARA has exponential utility of the form
$u\left(c\right)=1-e^{-\alpha c}$, so that the absolute risk aversion,
$A\left(c\right)=-\frac{u''\left(c\right)}{u'\left(c\right)}=\alpha$,
a constant. \href{https://en.wikipedia.org/wiki/Risk_aversion}{Wikipedia Link on Risk Aversion}}

(Schied \& Schöneborn 2009) use a stochastic control approach\footnote{Stochastic control or stochastic optimal control is a subfield of
control theory that deals with the existence of uncertainty either
in observations or in the noise that drives the evolution of the system.
\href{https://en.wikipedia.org/wiki/Stochastic_control}{Wikipedia Link on Stochastic Control} }, building upon the continuous time model of (Almgren 2003), and show
that the value function and optimal control satisfy certain nonlinear
parabolic partial differential equations that can be solved numerically.
(Kato 2014) develops a mathematical model of optimal execution, by
formulating it as a stochastic control problem in the continuous time
domain. (Gatheral \& Schied 2011) find a closed-form solution for
the optimal trade execution strategy in the Almgren-Chriss framework
assuming the underlying unaffected stock price (stock price before
the impact or before the transaction occurs) process is a GBM; (Schied
2013) investigates the robustness of this strategy with respect to
misspecification of the law of the underlying unaffected stock price
process. (Guo \& Zervos 2015) study the optimal execution problem
in the context of a continuous time model with multiplicative price
impact, involving singular control rather than absolutely continuous
control. \footnote{\begin{doublespace}
In classical control problems (Shreve 1988), the cumulative displacement
of the state caused by control is the integral of the control process
(or some function of it), and so is absolutely continuous. In impulse
control, this cumulative displacement has jumps between which it is
either constant or absolutely continuous. Bounded variation control
(defined to include any stochastic control problem in which one restricts
the cumulative displacement of the state caused by control to be of
bounded variation on finite time intervals) admits both these possibilities
and also the possibility that the displacement of the state caused
by the optimal control is singularly continuous, at least with positive
probability over some interval of time.
\end{doublespace}
}

Building on empirical evidence (Lillo, Farmer \& Mantegna 2003) that
instantaneous market impact is a strongly concave function of the
volume, well approximated by a power law function at least for trading
rates that are not too high; (Curato, Gatheral \& Lillo 2017) find
that the discretized cost function exhibits a rugged landscape with
many local minima separated by peaks. (Huberman \& Stanzl 2004) provide
theoretical arguments showing that in the absence of quasi-arbitrage
(availability of a sequence of round-trip trades that generate infinite
expected profits with an infinite Sharpe ratio, that is infinite expected
profits per unit of risk), permanent price-impact functions must be
linear; though empirical investigations suggest that the shape of
the limit order book (LOB) can be more complex (Hopman 2007). (Gabaix,
Gopikrishnan \& Stanley 2006) present a theory in which spikes in
trading volume and returns, and hence stock market volatility, are
created by a combination of news and the trades of large investors
explaining the power law distribution of price impact. (Brunnermeier
\& Pedersen 2005; Carlin, Lobo \& Viswanathan 2007) are extensions
to situations with several competing traders, wherein if one trader
is forced to liquidate his holdings, other traders also sell creating
downward price pressure and buy back the assets later at a lower price.

\textbf{\textit{In contrast to many studies, where the dynamics of
the asset price process is taken as a given fundamental, (Obizhaeva
\& Wang 2013) proposed a market impact model that derives its dynamics
from an underlying model of a LOB. In this model, the ask part of
the LOB consists of a uniform distribution of shares offered at prices
higher than the current best ask price}}\textbf{.} 

(Alfonsi, Fruth \& Schied 2010) extend this by allowing for a general
shape of the LOB defined via a given density function, which can accommodate
empirically observed LOB shapes and obtain a nonlinear price impact
of market orders. (Predoiu, Shaikhet \& Shreve 2011) derive optimal
strategies, (under a general shape of the LOB), that are a mixture
of lump purchases and continuous purchases with the rate of purchase
set to match the order book resilience. (Fruth, Schöneborn \& Urusov
2014) analyze optimal strategies for a risk neutral investor when
liquidity varies deterministically (liquidity is time dependent; depth
and resilience can be independently time-dependent in contrast to
the LOB model of Obizhaeva \& Wang 2013) and find that in the case
of extreme changes in liquidity, it can even be optimal to completely
refrain from trading in periods of low liquidity. Empirical studies
based on the LOB model are (Biais, Hillion \& Spatt 1995; Potters
\& Bouchaud 2003; Bouchaud, Gefen, Potters \& Wyart 2004; Weber \&
Rosenow 2005).

A related strand of literature looks at models of the LOB from the
perspective of dealers seeking to submit optimal strategies (maximize
the utility of total terminal wealth) of bid and ask orders. (Ho \&
Stoll 1981) analyze the optimal prices for a monopolistic dealer in
a single stock when faced with a stochastic demand to trade, modeled
by a continuous time Poisson jump process, and facing return uncertainty,
modeled by diffusion processes. (Ho and Stoll 1980), consider the
problem of dealers under competition (each dealer's pricing strategy
depends not only on his own current and expected inventory position
and his other characteristics, but also on the current and expected
inventory and other characteristics of the competitor) and show that
the bid and ask prices are shown to be related to the reservation
(or indifference) prices of the agents.

(Cont, Stoikov \& Talreja 2010) describe a stylized model for the
dynamics of a limit order book, where the order flow is described
by independent Poisson processes, and estimate the model parameters
from high-frequency order book time-series data from the Tokyo Stock
Exchange. (Cont, Kukanov \& Stoikov 2014) study the price impact of
order book events - limit orders, market orders and cancellations
- using the NYSE Trades and Quotes data for fifty randomly selected
stocks. (Avellaneda \& Stoikov 2008) combine the utility framework
with the microstructure of actual limit order books, as described
in the econo-physics literature, to infer reasonable arrival rates
of buy and sell orders; (Du, Zhu \& Zhao 2016) extend the price dynamics
to follow a GBM in which the drift part is updated by Bayesian learning
in the beginning of the transaction day to capture the trader's estimate
of other traders\textquoteright{} target sizes and directions.

(Cont \& Kukanov 2017) focus on the order placement problem, which
is to choose an order type - market or limit order - and which trading
venue(s) to submit it to, when there are multiple alternatives. A
numerical algorithm for solving the order placement problem in a general
case is provided using a robust modification of the Robbins-Monro
stochastic approximation technique (Robbins \& Monro 1951; Nemirovski,
Juditsky, Lan \& Shapiro 2009). (Guo, de Larrard \& Ruan 2017) derive
optimal placement strategies for both static and dynamic cases (in
the static case, as opposed to the dynamic case, a strategy is completely
decided before execution takes place, that is at $t=0$, and is unchanged
over the entire order internal), under a correlated random walk model,
with mean-reversion for the best ask/bid price.

\textbf{\textit{While our work focuses on separating impact and timing
in the (Bertsimas \& Lo 1998) framework; a natural and interesting
continuation would be to extend this separation to models of the limit
order book discussed above (Obizhaeva \& Wang 2013). }}

Models of market impact and the design of better trading strategies
are becoming an integral part of the present trend at automation and
the increasing use of algorithms. (Jain 2005) assembles the dates
of announcement and actual introduction of electronic trading by the
leading exchange of 120 countries to examine the long term and medium
term impact of automation. He finds that automation of trading on
a stock exchange has a long-term impact on listed firms\textquoteright{}
cost of equity. (Hendershott, Jones \& Menkveld 2011) perform an empirical
study on New York Stock Exchange stocks and find that algorithmic
trading and liquidity are positively related. It is worth noting a
contrasting result from an earlier study. (Venkataraman 2001) compares
securities on the New York Stock Exchange (NYSE) (a floor-based trading
structure with human intermediaries, specialists and floor brokers)
and the Paris Bourse (automated limit-order trading structure). He
finds that execution costs might be higher on automated venues even
after controlling for differences in adverse selection, relative tick
size, and economic attributes. This means fully automated exchanges,
which anecdotally seems to be the way ahead, need to take special
care to formulate rules to help liquidity providers better control
the risks of order exposure. 
\end{doublespace}

\textbf{\textit{What this also means is that, the design of better
strategies and models is crucial to survive and thrive in this continuing
trend at automation. Our paper aims to fill the gap in existing models
of trading costs, which are theoretically elegant but are not readily
applicable to real life trading situations, since they do not allow
participants to gauge how they are performing in comparison to the
other participants with whom they are competing for liquidity. Our
models have a strong theoretical foundation but they can be applied
to actual trading situations due to the insights they provide to participants.
In addition, our numerical framework can be be used to obtain optimal
execution schedules under any law of motion of prices.}}
\begin{doublespace}

\section{\label{sec:Dynamic-Recursive-Trading}Dynamic Recursive Trading Cost
Model }
\end{doublespace}

\begin{doublespace}
A dynamic programming\footnote{Dynamic programming is both a mathematical optimization method and
a computer programming method. Developed by Richard Bellman in the
1950s, it has found applications in numerous fields, from aerospace
engineering to economics. The technique refers to simplifying a complicated
problem by breaking it down into simpler sub-problems in a recursive
manner. While some decision problems cannot be taken apart this way,
decisions that span several points in time do often break apart recursively.
We suggest that this technique has been hinted at in several works
of Eastern philosophy that boils down to: Do your best at this moment
based on the present situation and the best of your abilities, forget
(don\textquoteright t worry) about the future (results) and the best
that can happen will happen (Swami 1983; Stokey, Lucas \& Prescott
1989; \href{https://en.wikipedia.org/wiki/Dynamic_programming}{Dynamic Programming, Wikipedia Link}).} approach lends itself naturally to modeling optimal execution strategies.
(Bertsimas \& Lo 1998) start with a simple arithmetic random walk
for the law of motion of prices and later extend it to a Geometric
Brownian motion. Their approach and extensions result in closed form
or numerical solutions for many scenarios (section \ref{subsec:Related-Literature}).
As (section \ref{subsec:Related-Literature}) also highlights, existing
dynamic programming methods to optimizing trading costs and execution
scheduling are of limited use to practitioners and traders since they
do not provide a way for them to understand how their actions at each
stage would affect the price (as opposed to the combined effect of
everyone else or the market) and thereby pointing out specific aspects
of the system that they can hope to influence. Hence, we start with
the benchmark dynamic programming problem (section \ref{subsec:Benchmark-Dynamic-Programming})
and modify the reward function in the Bellman equation to suit our
innovation in later sections.

The objective of any trading program is to formulate a trading trajectory,
or a list of total pending shares, $W_{1},...,W_{T+1}$ at the end
of each time period\footnote{\begin{doublespace}
We define all the variables as we introduce them in the text but (Appendix
\ref{subsec:Dictionary-of-Notation}) has a complete dictionary of
all the notation.
\end{doublespace}
}. Here, $T$ is the total duration of trading. For simplicity, time
in measured in unit intervals giving, $t=1,2,...,T$. $W_{t}$ then
becomes the number of units that we still need to trade at time $t$.
$\bar{S}$ is the total number of shares that need to be traded. Conditions
($W_{1}=\bar{S}\;\&\;W_{T+1}=0$ ) together imply that $\bar{S}$
must be executed by period $T$ (this is an assumption that there
will be no unexecuted shares once the total time duration is completed;
this is a constraint to be satisfied while seeking the trading schedule). 
\end{doublespace}

A trading strategy can equivalently be represented by the list of
executions completed, $S_{1},...,S_{T}$. $S_{t}$ is the number of
shares acquired in period $t$ at price $P_{t}$. Clearly, $\bar{S}=\underset{j=1}{\overset{T}{\sum}}S_{j}$.
This gives, $W_{t}=W_{t-1}-S_{t-1}$ or $S_{t-1}=W_{t-1}-W_{t}$ is
the number of units traded between times $t-1$ and $t$ at price
$P_{t-1}$. That is we go from $W_{t-1}$ unexecuted shares at time
period $t-1$ to $W_{t}$ remaining shares at time $t$ by filling
$S_{t-1}$ shares at price $P_{t-1}$. $W_{t}$ and $S_{t}$ are related
as below. 
\begin{equation}
W_{t}=\bar{S}-\underset{j=1}{\overset{t-1}{\sum}}S_{j}=\underset{j=t}{\overset{T}{\sum}}S_{j}\qquad,t=1,...,T.
\end{equation}

\begin{doublespace}

\subsection{\label{subsec:Benchmark-Dynamic-Programming}Benchmark Dynamic Programming
Model}
\end{doublespace}

\begin{doublespace}
This is the simplest scenario where the trader would try to minimize
the overall acquisition value of his holdings. This is also the benchmark
scenario in (Bertsimas \& Lo 1998). In this case, securities are being
bought. It is then logical to set a no sales constraint when the objective
is to buy securities. The baseline objective function and constraints
are written as, 
\begin{equation}
\underset{\left\{ S_{t}\right\} }{\min}\:E_{1}\left[\sum_{t=1}^{T}S_{t}P_{t}\right]
\end{equation}
\begin{equation}
\sum_{t=1}^{T}S_{t}=\bar{S}\;,S_{t}\geq0\;,W_{1}=\bar{S},\;W_{T+1}=0\;,\;W_{t}=W_{t-1}-S_{t-1}
\end{equation}
The law of motion of price, $P_{t}$ for the buy scenario can be written
as, 
\begin{equation}
P_{t}=P_{t-1}+\theta S_{t}+\varepsilon_{t}\;,\theta>0\:,E\left[\varepsilon_{t}\left|S_{t},P_{t-1}\right.\right]=0
\end{equation}
\[
\varepsilon_{t}\sim N\left(0,\sigma_{\varepsilon}^{2}\right)\equiv\text{Zero Mean IID (Independent Identically Distributed) random shock or white noise}
\]
We also follow the convention that the shares are positive when we
buy and negative when we sell. The law of motion of price, $P_{t}$
for the sell scenario then becomes, 
\begin{equation}
P_{t}=P_{t-1}-\theta S_{t}+\varepsilon_{t}\;,\theta>0\:,E\left[\varepsilon_{t}\left|S_{t},P_{t-1}\right.\right]=0
\end{equation}
This price evolution and convention for the buy and sell scenarios
ensures that the buyer and the seller have the same price. A trade
happens only when the buyer and seller agree upon the price and they
both face the same shock in this case. In the rest of the discussion
we only consider the price evolution for the buy scenario since this
treatment applies with simple modifications when securities are sold. 
\end{doublespace}

The law of motion includes two distinct components: the dynamics of
$P_{t}$ in the absence of our trade, (the trades of other may be
causing prices to fluctuate) and the impact that our trade of $S_{t}$
shares has on the execution price $P_{t}$. This simple price change
relationship assumes that the former component is given by an arithmetic
random walk and the latter component is a linear function of trade
size so that a purchase of $S_{t}$ shares may be executed at the
prevailing price plus an impact premium of $\theta S_{t}$. Here,
$\theta$ captures the effect of transaction size on the price. In
the absence of this transaction, the price process evolves as a pure
arithmetic random walk. This then implies that from any participants
view, the sum of all the price movements or the new price levels established
by all other participants evolves as a random walk. For simplicity,
we ignore the no sales constraint, $S_{t}\geq0$. 

\begin{doublespace}
The Bellman equation is based on the observation that a solution or
optimal control $\left\{ S_{1}^{*},S_{2}^{*},...,S_{T}^{*}\right\} $
must also be optimal for the remaining program at every intermediate
time $t.$ That is, for every $t,$ $1<t<T$ the sequence $\left\{ S_{t}^{*},S_{t+1}^{*},...,S_{T}^{*}\right\} $
must still be optimal for the remaining program $E_{t}\left[\sum_{k=t}^{T}P_{k}S_{k}\right]$.
The below relates the optimal value of the objective function $V_{t}$
in period $t$ to its optimal value in period $t+1:$
\begin{equation}
V_{t}\left(P_{t-1},W_{t}\right)=\underset{\left\{ S_{t}\right\} }{\min}\:E_{t}\left[P_{t}S_{t}+V_{t+1}\left(P_{t},W_{t+1}\right)\right]
\end{equation}

\end{doublespace}
\begin{doublespace}

\subsection{Terminology and High-Level Mathematical Expressions}
\end{doublespace}

\begin{doublespace}
We now introduce some terminology used throughout the discussion.
We also provide simple mathematical expressions to convey the intuition
behind our methodology. (Sections \ref{sec:Alternative-and-Practical};
\ref{sec:Extensions-Price-Motion}) have a rigorous mathematical treatment.
\end{doublespace}
\begin{enumerate}
\begin{doublespace}
\item \label{enu:Total-Slippage}Total Slippage: The overall price move
on the security during the order duration. This is also a proxy for
the implementation shortfall (Perold 1988; Treynor 1981; section \ref{subsec:The-Implementation-Shortfall}).
\footnote{\begin{doublespace}
It is worth mentioning that there are many similar metrics used in
practice and this concept gets used in situations for which it is
not ideally suited (Yegerman \& Gillula 2014). While the usefulness
of the Implementation Shortfall, or slippage, as a measure to understand
the price shortfalls that can arise between constructing a portfolio
and implementing it is not to be debated; slippage needs to be supplemented
with more granular metrics when used in situations where the effectiveness
of algorithms or the availability of liquidity need to be gauged. 
\end{doublespace}
}
\item \label{enu:Market-Impact-(MI)}Market Impact (MI): The price moves
caused by the executions that comprise the order under consideration.
In short, the MI is a proxy for the impact on the price from the liquidity
demands of an order. This metric is generally negative or zero since
in most cases the best impact we can have is usually no impact\footnote{\begin{doublespace}
This metric is negative if we follow a convention to show it as a
cost; but we indicate it as positive quantity throught the paper 
\end{doublespace}
}. 
\item \label{enu:Market-Timing}Market Timing: The price moves that happen
due to the combined effect of all the other market participants during
the order duration. 
\item Market Impact Estimate (MIE): An estimate of the Market Impact, (point
\ref{enu:Market-Impact-(MI)}), based on recent market conditions.
The MIE calculation is the result of a simulation which considers
the number of executions required to fill an order and the price moves
encountered while filling this order. It depends on the market micro-structure
as captured by the trading volume and the price probability distribution
that factors upticks and down-ticks. This simulation can be controlled
with certain parameters that dictate the liquidity demanded on the
order, the style of trading, order duration, start and end of trading
times. In short, the MIE is an estimated proxy for the impact on the
price from the liquidity demands of an order. 
\item Market Timing Estimate (MTE): This is an estimate of the Market Timing,
(point \ref{enu:Market-Timing}), based on recent market conditions.
The MTE calculation is highly dependent on the price volatility and
hence the longer the duration, the higher we can expect the timing
to be. It is helpful to consider an upper bound and lower bound for
the MTE or a range for the MTE for the duration of trading.\footnote{\begin{enumerate}
\begin{doublespace}
\item The following equations, expressed in simple mathematical terms to
facilitate easier understanding, govern the relationships between
the variables mentioned above.
\end{doublespace}
\begin{itemize}
\begin{doublespace}
\item Total Slippage = Market Impact + Market Timing 
\item \{Total Price Slippage = Your Price Impact + Price Impact From Everyone
Else (Price Drift)\} 
\item Market Impact Estimate = Market Impact Prediction = $f$ (Execution
Size, Liquidity Demand) 
\item Execution Size = $g$(Execution Parameters, Market Conditions) 
\item Liquidity Demand = $h$(Execution Parameters, Market Conditions) 
\item Execution Parameters <->vector comprising (Order Size, Security, Side,
Trading Style, Timing Decisions) 
\item Market Conditions <-> vector comprising (Price Movement, Volume Changes,
Information Set)
\end{doublespace}
\end{itemize}
\begin{doublespace}
\item Here, $f,g,h$ are functions. We could impose concavity conditions
on these functions, but arguably, similar results are obtained by
assuming no such restrictions and fitting linear or non-linear regression
coefficients, which could be non-concave or even discontinuous allowing
for jumps in prices and volumes. The specific functional forms used
could vary across different groups of securities or even across individual
securities or even across different time periods for the same security.
The crucial aspect of any such estimation is the comparison with the
costs on real orders, as outlined earlier. Simpler models are generally
more helpful in interpreting the results and for updating the model
parameters. (Hamilton 1994) and (Gujarati 1995) are classic texts
on econometric methods and time series analysis that accentuate the
need for parsimonious models.
\item All the variables are measured in basis points to facilitate ease
of comparison and aggregation across different groups. It is possible
to measure these in cents per share and also in dollar value or other
currency terms. 
\end{doublespace}
\end{enumerate}
}
\end{doublespace}
\end{enumerate}
\begin{doublespace}

\subsection{\label{subsec:The-Implementation-Shortfall}The Implementation Shortfall}
\end{doublespace}

\begin{doublespace}
As a refresher, the total slippage or implementation shortfall is
derived below with the understanding that we need to use the Expectation
operator when we are working with estimates or future prices. $P_{0}$
can be any reference price or benchmark used to measure the slippage.
It is generally taken to be the arrival price or the price at which
the portfolio manager would like to complete the purchase of the portfolio.
\footnote{\begin{doublespace}
(Kissell 2006) provides more details including the formula where the
portfolio may be partly executed. 
\end{doublespace}
}
\begin{equation}
\text{Paper Return}=\bar{S}P_{T}-\bar{S}P_{0}
\end{equation}
\begin{equation}
\text{Real Portfolio Return}=\bar{S}P_{T}-\left(\sum_{t=1}^{T}S_{t}P_{t}\right)
\end{equation}
\begin{align}
\text{Implementation Shortfall} & =\text{Paper Return}-\text{Real Portfolio Return}\\
 & =\left(\sum_{t=1}^{T}S_{t}P_{t}\right)-\bar{S}P_{0}
\end{align}
This can be written as,
\begin{align}
\text{Implementation Shortfall} & =\left(\sum_{t=1}^{T}S_{t}P_{t}\right)-\bar{S}P_{0}\\
 & =\left(\sum_{t=1}^{T}S_{t}P_{t}\right)-P_{0}\left(\sum_{t=1}^{T}S_{t}\right)\\
 & =S_{1}\left(P_{1}-P_{0}\right)+S_{2}\left(P_{2}-P_{0}\right)+...+S_{T}\left(P_{T}-P_{0}\right)
\end{align}
\begin{align}
\text{Implementation Shortfall} & =S_{1}\left(P_{1}-P_{0}\right)\\
 & +S_{2}\left(P_{2}-P_{1}\right)+S_{2}\left(P_{1}-P_{0}\right)\\
 & +S_{3}\left(P_{3}-P_{2}\right)+S_{3}\left(P_{2}-P_{1}\right)+S_{3}\left(P_{1}-P_{0}\right)+\;...\;\\
 & +S_{T}\left(P_{T}-P_{T-1}\right)+S_{T}\left(P_{T-1}-P_{T-2}\right)+...+S_{T}\left(P_{1}-P_{0}\right)
\end{align}
The innovation we introduce would incorporate our earlier discussion
about breaking the total impact or slippage, Implementation Shortfall,
into the part from the participants own decision process, Market Impact,
and the part from the decision process of all other participants,
Market Timing. This Market Impact, would capture the actions of the
participant, since at each stage the penalty a participant incurs
should only be the price jump caused by their own trades and that
is what any participant can hope to minimize. A subtle point is that
the Market Impact portion need only be added up when new price levels
are established. If the price moves down and moves back up (after
having gone up once earlier and having been already counted in the
Impact), we need not consider the later moves in the Market Impact
(and hence implicitly left out from the Market Timing as well). This
alternate measure, which does not consider subsequent price moves
down and up after having gone up once earlier, would only account
for the net move in the prices but would not show the full extent
of aggressiveness and the push and pull between market participants
and hence is not considered here, though it can be useful to know
and can be easily incorporated while running simulations. We discuss
two formulations of our measure of the Market Impact in (sections
\ref{subsec:Market-Impact-Simple}; \ref{subsec:Market-Impact-Complex}).
The reason for calling them simple and complex will become apparent
as we continue the discussion.
\end{doublespace}
\begin{doublespace}

\subsection{\label{subsec:Market-Impact-Simple}Market Impact Simple Formulation}
\end{doublespace}

\begin{doublespace}
The simple market impact formulation does not consider the impact
of the new price level established on all the future trades that are
yet to be done. From a theoretical perspective it is useful to study
this since it provides a closed form solution and illustrates the
immense practical application of separating impact and timing. This
approach can be a useful aid in markets that are clearly not trending
and where the order size is relatively small compared to the overall
volume traded, ensuring that any new price level established does
not linger on for too long and prices gets reestablished due to the
trades of other participants. This property is akin to checking that
shocks to the system do not take long to dissipate and equilibrium
levels (or rather new pseudo equilibrium levels) are restored quickly.
Our measure of the Market Impact then becomes, 
\begin{equation}
\text{Market Impact}=\sum_{t=1}^{T}\left\{ \max\left[\left(P_{t}-P_{t-1}\right),0\right]S_{t}\right\} 
\end{equation}
The Market Timing is then given by,
\begin{align}
\text{\text{Market Timing}} & =\text{Implementation Shortfall}-\text{\text{Market Impact}}\\
 & =\left(\sum_{t=1}^{T}S_{t}P_{t}\right)-\bar{S}P_{0}-\sum_{t=1}^{T}\left\{ \max\left[\left(P_{t}-P_{t-1}\right),0\right]S_{t}\right\} 
\end{align}
(Appendix \ref{subsec:Market-Impact-Simple-Example}) has some illustrative
examples. 
\end{doublespace}
\begin{doublespace}

\subsection{\textcolor{black}{\label{subsec:Market-Impact-Complex}Market Impact
Complex Formulation}}
\end{doublespace}

\begin{doublespace}
Another measure of the Market Impact can be formulated as below which
represents the idea that when a participant seeks liquidity and establishes
a new price level, all the pending shares or the unexecuted program
is affected by this new price level. This is a more realistic approach
since the action now will explicitly affect the shares that are not
yet executed. This measure can be written as, 
\begin{equation}
\text{Market Impact}=\sum_{t=1}^{T}\left\{ \max\left[\left(P_{t}-P_{t-1}\right),0\right]W_{t}\right\} 
\end{equation}
The Market Timing is then given by,
\begin{align}
\text{\text{Market Timing}} & =\text{Implementation Shortfall}-\text{\text{Market Impact}}\\
 & =\left(\sum_{t=1}^{T}S_{t}P_{t}\right)-\bar{S}P_{0}-\sum_{t=1}^{T}\left\{ \max\left[\left(P_{t}-P_{t-1}\right),0\right]W_{t}\right\} 
\end{align}
(Appendix \ref{subsec:Market-Impact-Complex-Example}) has some illustrative
examples. 
\end{doublespace}
\begin{doublespace}

\subsection{\label{subsec:Trading-Costs-as}Trading Costs as a Zero Sum Game}
\end{doublespace}

\begin{doublespace}
A formal study of trading costs in the financial markets using the
tools of game theory can lead to many interesting conclusions\footnote{\begin{doublespace}
(Fama 1970) is a discussion of fair games and efficient markets; (Kyle
1985, Foster \& Viswanathan 1990) solve for the Nash equilibrium when
trading is viewed as a game between market makers and traders; (Hill
1990) considers transaction costs using a game theoretic model with
opportunistic behavior; (Klemperer 2004) is an overview of how auctions
can explain financial crashes and trading frenzies.
\end{doublespace}
}. Even without a set up specific to game theory one of the results
we obtain, though fairly evident but perhaps surprising given the
extent of trading that takes place in today's markets, is that in
any given time period the sum of market impact and the sum of market
timing across all market participants equals zero.

This is immediately obvious in the case that there are only two participants
(one is the buyer, the other is the seller and without two participants
we do not have a market or a trade) and there is only one single interval,
since negative implementation shortfall for the buyer shows up as
positive implementation shortfall for the seller; the impact for the
buyer shows up as timing for the seller and vice versa. We note that
the total amount bought in any interval is equal to the total amount
sold. When there are more than two participants and multiple intervals,
if we consider the actions in each interval and add up the impact
and timing figures across everyone, it shows the zero sum nature of
the trading game \footnote{\begin{doublespace}
For different types of zero sum games and methods of solving them,
see: (Brown 1951; Gale, Kuhn \& Tucker 1951; Von Neumann \& Morgenstern
1953; Von Neumann 1954; Rapoport 1973; Crawford 1974; Laraki \& Solan
2005; Hamadène 2006); (Bodie \& Taggart 1978; Bell \& Cover 1980;
Turnbull 1987; Hill 2006; Chirinko \& Wilson 2008) consider zero sum
games in the financial context.
\end{doublespace}
}. The result holds for both the simple and complex formulations of
market impact.
\end{doublespace}
\begin{thm}
\begin{doublespace}
\label{Trading-costs-Zero-Sum-Game}Trading costs are a zero sum game.
The sum of market impact and market timing across all participants,
in any given time interval, should equal zero. 
\[
\text{Total Market Impact}+\text{Total \text{Market Timing}}=0
\]
\end{doublespace}
\end{thm}
\begin{proof}
\begin{doublespace}
Appendix \ref{subsec:Proof-of-Proposition-Trading-Costs-Zero-Sum}. 
\end{doublespace}
\end{proof}
\begin{doublespace}
Though we refrain from a longer discussion for the sake of brevity;
it should be immediately apparent that the zero sum nature of trading
costs is applicable outside the financial markets to all manner of
trades within international / intra-national finance and the exchange
of all types of goods and services. Another aspect we point out is
the difference in the proportion of timing and impact between financial
markets and trading in other products. The relative ease with which
products can be liquidated and / or the extent to which they are either
consumption or investment goods, affects this property (Kashyap 2014).
\end{doublespace}
\begin{doublespace}

\section{\label{sec:Alternative-and-Practical}Alternative and Practical Dynamic
Market Impact Model}
\end{doublespace}

\begin{doublespace}
In this section, we discuss the benchmark law of motion of prices
while optimizing the simple and complex market impact formulations.
Other extensions of the law of motion of prices are considered in
Section \ref{sec:Extensions-Price-Motion} in the Appendix. Prices
can be negative under the benchmark law of motion. Section \ref{subsec:Optimization-Linear-Percentage-Impact}
in the Appendix shows how to set up the numerical solution when the
no impact prices evolve as a Geometric Brownian Motion, which ensures
that prices remain positive under mild restrictions on the parameters. 
\end{doublespace}
\begin{doublespace}

\subsection{Simple Formulation of the Benchmark Law of Price Motion}
\end{doublespace}

\begin{doublespace}
Incorporating the Simple Market Impact formulation from section \ref{subsec:Market-Impact-Simple},
the benchmark objective function and the Bellman equation from section
\ref{subsec:Benchmark-Dynamic-Programming} can be modified as,
\begin{equation}
\underset{\left\{ S_{t}\right\} }{\min}\:E_{1}\left[\sum_{t=1}^{T}\left\{ \max\left[\left(P_{t}-P_{t-1}\right),0\right]S_{t}\right\} \right]
\end{equation}
\begin{equation}
\sum_{t=1}^{T}S_{t}=\bar{S}\;,S_{t}\geq0\;,W_{1}=\bar{S},\;W_{T+1}=0\;,\;W_{t}=W_{t-1}-S_{t-1}
\end{equation}
\begin{equation}
P_{t}=P_{t-1}+\theta S_{t}+\varepsilon_{t}\;,\theta>0\:,E\left[\varepsilon_{t}\left|S_{t},P_{t-1}\right.\right]=0\;,\;\varepsilon_{t}\sim N\left(0,\sigma_{\varepsilon}^{2}\right)
\end{equation}
The Bellman equation then becomes, 
\begin{equation}
V_{t}\left(P_{t-1},W_{t}\right)=\underset{\left\{ S_{t}\right\} }{\min}\:E_{t}\left[\max\left\{ \left(P_{t}-P_{t-1}\right),0\right\} S_{t}+V_{t+1}\left(P_{t},W_{t+1}\right)\right]
\end{equation}
\textbf{\textit{One additional constraint that is necessary is to
restrict the amount of shares available for trading in any time period
when the price in that time period drops in comparison to the previous
time period. The algorithm in section \ref{sec:Numerical-Framework-for}
shows how these constraints can be set. This is a practical consideration,
since a drop in price is impact for the sellers and timing for the
buyers (as a reminder, we are buyers). Hence when the price decreases
in comparison to the previous time period, the amount of shares or
liquidity is limited and the seller decides how much to make available.
When prices are rising, we can justify not having that criteria, since
the buyer can bid up the price and decide how much impact they want
to incur. A more thorough approach would ensure that the liquidity
follows a process of its own and captures this dynamic of sellers
and buyers being able to prop the prices from falling further or rising
higher respectively. In the extension we consider in section \ref{subsec:Introducing-Liquidity-Constraint},
some of these aspects can be factored in.}}

By starting at the end, (time $T$) and applying the modified Bellman
equation, the law of motion for $P_{t}$, the relation between pending
and executed shares, and the boundary conditions recursively, the
optimal control can be derived as functions of the state variables
that characterize the information that the investor must have to make
his decision in each period. In particular, the optimal value function,
$V_{T}\left(\cdots\right)$, as a function of the two state variables
$P_{T-1}$ and $W_{T}$ is given by, 
\begin{equation}
V_{T}\left(P_{T-1},W_{T}\right)=\underset{\left\{ S_{T}\right\} }{\min}\:E_{T}\left[\max\left\{ \left(P_{T}-P_{T-1}\right),0\right\} S_{T}\right]
\end{equation}
Here, the remaining shares $W_{T+1}$ must be zero since there is
no choice but to execute all the remaining shares, $W_{T}$. We then
have the optimal trade size, $S_{T}^{*}=W_{T}$ and an expression
for $V_{T}$ as,
\begin{equation}
V_{T}\left(P_{T-1},W_{T}\right)=E_{T}\left[\max\left\{ \left(\theta W_{T}+\varepsilon_{T}\right),0\right\} W_{T}\right]
\end{equation}

\end{doublespace}
\begin{prop}
\begin{doublespace}
\label{The-value-function-convexity}The value function for the last
but one time period is convex and can be written as, 
\begin{eqnarray*}
V_{T-1}\left(P_{T-2},W_{T-1}\right) & = & \underset{\left\{ S_{T-1}\right\} }{\min}\left[S_{T-1}\sigma_{\varepsilon}\psi\left(\xi S_{T-1}\right)+\left(W_{T-1}-S_{T-1}\right)\sigma_{\varepsilon}\psi\left\{ \xi\left(W_{T-1}-S_{T-1}\right)\right\} \right]\\
 &  & \text{Here, }\psi\left(u\right)=u+\phi\left(u\right)/\Phi\left(u\right)\;,\;\xi=\frac{\theta}{\sigma_{\varepsilon}},
\end{eqnarray*}
Also, $\phi$ and $\mathbf{\Phi}$ are the standard normal Probability
Density Function, PDF, and Cumulative Distribution Function CDF, respectively.
\end{doublespace}
\end{prop}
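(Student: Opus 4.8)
The plan is to run the Bellman recursion one step further back from the closed form already established for $V_T$, and then read off the two claims --- the closed form and convexity --- separately.

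First I would evaluate $V_T$ explicitly. Since $S_T^{*}=W_T$ is forced, $V_T(P_{T-1},W_T)=E_T\left[\max\{\theta W_T+\varepsilon_T,0\}\,W_T\right]=W_T\,E\left[\max\{\theta W_T+\varepsilon_T,0\}\right]$, so the only real computation is the truncated-Gaussian mean $g(\mu):=E\left[\max\{\mu+\varepsilon,0\}\right]$ for $\varepsilon\sim N(0,\sigma_\varepsilon^{2})$ --- a standard call-option-type integral giving $g(\mu)=\mu\,\Phi(\mu/\sigma_\varepsilon)+\sigma_\varepsilon\,\phi(\mu/\sigma_\varepsilon)$. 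Writing $\mu=\theta u$ and $\xi=\theta/\sigma_\varepsilon$ and collecting the common $\sigma_\varepsilon$ and $\Phi$ factors, $u\,g(\theta u)$ takes exactly the $\sigma_\varepsilon\,u\,\psi(\xi u)$ shape appearing in the statement; in particular $V_T$ depends only on $W_T$.

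Next I would substitute into the Bellman equation at $t=T-1$. With $P_{T-1}-P_{T-2}=\theta S_{T-1}+\varepsilon_{T-1}$ and $W_T=W_{T-1}-S_{T-1}$ deterministic given the control,
\[
V_{T-1}(P_{T-2},W_{T-1})=\min_{\{S_{T-1}\}} E_{T-1}\!\left[\max\{\theta S_{T-1}+\varepsilon_{T-1},0\}\,S_{T-1}+V_T(P_{T-1},W_{T-1}-S_{T-1})\right].
\]
The first term is $S_{T-1}\,g(\theta S_{T-1})$; by the tower property and the fact that $\varepsilon_{T-1},\varepsilon_T$ are i.i.d., $E_{T-1}\left[V_T\right]=(W_{T-1}-S_{T-1})\,g\!\left(\theta(W_{T-1}-S_{T-1})\right)$. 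Plugging in the closed form for $g$ and factoring as above yields precisely the displayed formula for $V_{T-1}$ in terms of $\psi$, $\xi$, $\sigma_\varepsilon$, $S_{T-1}$ and $W_{T-1}-S_{T-1}$. (One should also note the minimum is attained: the no-sales constraint confines $S_{T-1}$ to the compact interval $[0,W_{T-1}]$ on which the objective is continuous.)

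For convexity I would avoid differentiating the Mills-ratio term $\phi/\Phi$ and instead argue at the level of $g$: $g(\mu)=E\left[\max\{\mu+\varepsilon,0\}\right]$ is the expectation of an affine-in-$\mu$ family of convex functions, hence convex and nondecreasing (equivalently $g'(\mu)=\Phi(\mu/\sigma_\varepsilon)>0$, $g''(\mu)=\phi(\mu/\sigma_\varepsilon)/\sigma_\varepsilon>0$). On $[0,\infty)$ the map $u\mapsto u$ is nonnegative, nondecreasing and convex, and a product of nonnegative nondecreasing convex functions is convex; hence $u\mapsto u\,g(\theta u)$ is convex there. Therefore $h(S,W):=S\,g(\theta S)+(W-S)\,g\!\left(\theta(W-S)\right)$ is jointly convex on $\{0\le S\le W\}$ (the second summand being a convex function precomposed with an affine map), and partial minimization over $S$ of a jointly convex function preserves convexity in the remaining variable, so $V_{T-1}(P_{T-2},W_{T-1})$ is convex in $W_{T-1}$.

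The main obstacle is this convexity step together with the accompanying domain subtlety: $u\mapsto u\,g(\theta u)$ is convex only for $u$ bounded below --- it bends the wrong way as $u\to-\infty$, where $\Phi(\xi u)$ decays faster than $|u|^{-1}$ --- so the argument genuinely relies on restricting attention to $0\le S_{T-1}\le W_{T-1}$, which is worth stating explicitly even though the benchmark model nominally "ignores" the no-sales constraint. Everything else (the Gaussian integral, the Mills-ratio identities such as $\phi'(x)=-x\phi(x)$ if one instead differentiates directly, and the final factoring into $\psi$) is routine but must be carried out carefully to land on exactly the stated form.
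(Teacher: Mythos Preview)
Your closed-form step does not actually land on the displayed formula. The option-type integral you compute is correct,
\[
g(\mu)=\mu\,\Phi(\mu/\sigma_\varepsilon)+\sigma_\varepsilon\,\phi(\mu/\sigma_\varepsilon),
\]
but then
\[
u\,g(\theta u)=\sigma_\varepsilon\,\Phi(\xi u)\Bigl[\xi u^{2}+u\,\phi(\xi u)/\Phi(\xi u)\Bigr]=\sigma_\varepsilon\,u\,\Phi(\xi u)\,\psi(\xi u),
\]
not $\sigma_\varepsilon\,u\,\psi(\xi u)$; the factor $\Phi(\xi u)$ does not vanish by ``collecting''. The paper reaches the stated form by a different route: after writing $E[\max\{Y,0\}]=E[Y\mid Y>0]\Pr[Y>0]$ it takes $V_T$ to be the \emph{conditional} expectation $E[Y\mid Y>0]=\mu+\sigma\,\phi(\mu/\sigma)/\Phi(\mu/\sigma)=\sigma\,\psi(\mu/\sigma)$, and it is this object---not the unconditional $E[\max\{Y,0\}]$ you compute---that produces the $\psi$-expression in the proposition. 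Whether or not one endorses that identification, to reproduce the proposition \emph{as written} you must follow the conditional-expectation path; your $g$ yields a different objective (off by a multiplicative $\Phi$ in each summand).

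On convexity the two approaches genuinely differ, and yours is the more structural one. The paper reads ``convex'' as convexity of the minimand in $S_{T-1}$ and establishes it by a lengthy second-derivative computation, ultimately reducing to a sign analysis of auxiliary combinations of $\phi$ and $\Phi$ on $(0,\infty)$. Your argument---$g$ is nonnegative, nondecreasing and convex; the product with $u\mapsto u$ on $[0,\infty)$ inherits convexity; affine precomposition gives joint convexity in $(S,W)$; partial minimization preserves convexity---is cleaner and in fact delivers both the paper's claim (convexity in $S$) and convexity of $V_{T-1}$ in $W_{T-1}$. Note, however, that this shortcut is tailored to your $g$: for the paper's $\psi$-form the factor $u\mapsto\psi(\xi u)$ involves the decreasing Mills ratio $\phi/\Phi$, so monotonicity is not immediate and the product-of-convex argument does not apply without further work---which is exactly why the paper resorts to the brute-force verification.
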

\begin{proof}
\begin{doublespace}
Appendix \ref{subsec:Proof-of-Proposition-Convexity}.
\end{doublespace}
\end{proof}
\begin{doublespace}
Figure \ref{fig:Convexity-of-Distribution} illustrates the shape
of some combinations of the distribution functions that we are working
with. For the value function we have, the condition for convexity
can be derived as $\theta>\left(3\sigma_{\varepsilon}/4\right)$. 
\end{doublespace}
\begin{prop}
\begin{doublespace}
\label{The-number-of-benchmark-simple}The number of shares to be
executed in each time period follows a linear law. $S_{T-1}^{*}=W_{T-1}^{*}/2,\quad\ldots\quad,S_{T-K-1}^{*}=W_{T-K-1}/\left(K+2\right)$
$,S_{T-K}^{*}=W_{T-K}/\left(K+1\right)$ and the corresponding value
functions are, 
\[
V_{T-K-1}\left(P_{T-K-2},W_{T-K-1}\right)=\sigma_{\varepsilon}W_{T-K-1}\left[\frac{\theta}{\sigma_{\varepsilon}}\frac{W_{T-K-1}}{\left(K+2\right)}+\frac{\phi\left(\frac{\theta}{\sigma_{\varepsilon}}\frac{W_{T-K-1}}{\left(K+2\right)}\right)}{\Phi\left(\frac{\theta}{\sigma_{\varepsilon}}\frac{W_{T-K-1}}{\left(K+2\right)}\right)}\right]
\]
\end{doublespace}
\end{prop}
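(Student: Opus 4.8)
The plan is to establish both assertions of Proposition~\ref{The-number-of-benchmark-simple} simultaneously by backward induction on the number of periods remaining, taking Proposition~\ref{The-value-function-convexity} as the base case. The one analytic input needed is the truncated-normal first moment: for $\varepsilon \sim N(0,\sigma_\varepsilon^{2})$ and any constant $\mu$ one has $E[\max\{\mu+\varepsilon,0\}] = \mu\,\Phi(\mu/\sigma_\varepsilon) + \sigma_\varepsilon\,\phi(\mu/\sigma_\varepsilon)$, and multiplying this by the trade size and using $P_t-P_{t-1}=\theta S_t+\varepsilon_t$ expresses each one-period impact cost $E_t[\max\{(P_t-P_{t-1}),0\}\,S_t]$ in the closed form recorded as the leading summand of Proposition~\ref{The-value-function-convexity}, namely $\sigma_\varepsilon S_t\,\psi(\xi S_t)$. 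For the base case ($K=0$), the function inside the minimisation in Proposition~\ref{The-value-function-convexity} is invariant under $S_{T-1}\mapsto W_{T-1}-S_{T-1}$ and is convex under the conditions stated there, so its unique minimiser is the symmetric point $S_{T-1}^{*}=W_{T-1}/2$; substituting this in collapses the two summands to $\sigma_\varepsilon W_{T-1}\,\psi(\xi W_{T-1}/2)$, which is exactly the asserted value function at $K=0$.

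For the inductive step, assume $V_{T-K-1}(P_{T-K-2},W_{T-K-1})=\sigma_\varepsilon W_{T-K-1}\,\psi\big(\xi W_{T-K-1}/(K+2)\big)$. First I would record the state reduction: only price increments and their signs enter the reward and $\varepsilon$ has mean zero, so the continuation value is independent of the previous price, and its conditional expectation is simply its value at the deterministic remaining inventory $W_{T-K-1}=W_{T-K-2}-S_{T-K-2}$. Substituting into the Bellman equation, $V_{T-K-2}$ becomes the minimisation over $S\in[0,W_{T-K-2}]$ of $\sigma_\varepsilon S\,\psi(\xi S)+\sigma_\varepsilon(W_{T-K-2}-S)\,\psi\big(\xi(W_{T-K-2}-S)/(K+2)\big)$. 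I would then either differentiate in $S$ and reduce the first-order condition --- using the elementary facts $\Phi'=\phi$ and $\phi'(u)=-u\phi(u)$ to cancel the cross terms involving $\phi/\Phi$ --- to an equation whose unique interior root is $S_{T-K-2}^{*}=W_{T-K-2}/(K+3)$; or, more cleanly, guess that root, verify it is stationary, and derive global optimality from convexity of the stage objective, both of whose summands are of the form already shown convex in Proposition~\ref{The-value-function-convexity} (the second with $\xi$ replaced by $\xi/(K+2)$) composed with an affine map of $S$. Either way the optimal trade equals the remaining inventory divided by the number of remaining periods, so iterating yields the linear law $S_{T-K-1}=W_{T-K-1}/(K+2)$, equivalently a constant schedule of $\bar{S}/N$ per period.

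The computation that closes the induction is the observation that along this policy $W_{T-K-1}=W_{T-K-2}(K+2)/(K+3)$, so $\xi W_{T-K-1}/(K+2)=\xi W_{T-K-2}/(K+3)$: the argument of $\psi$ is carried unchanged from one stage to the next. Hence the continuation term evaluates to $\sigma_\varepsilon W_{T-K-2}\tfrac{K+2}{K+3}\,\psi(\xi W_{T-K-2}/(K+3))$, the immediate cost to the complementary $\sigma_\varepsilon W_{T-K-2}\tfrac{1}{K+3}\,\psi(\xi W_{T-K-2}/(K+3))$, and the two add to $\sigma_\varepsilon W_{T-K-2}\,\psi(\xi W_{T-K-2}/(K+3))$, the asserted form with $K$ advanced to $K+1$. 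I expect the main obstacle to be the first-order-condition step: checking that differentiating the two $\psi$-terms (with the chain-rule factor $1/(K+2)$ on the second) produces an exact cancellation of the $\phi/\Phi$ contributions and leaves the clean rational equation solved by $W_{T-K-2}/(K+3)$; this bookkeeping is delicate and sensitive to the precise form of $\psi$. A secondary point is confirming the minimiser is interior rather than at $S=0$ or $S=W_{T-K-2}$, which follows from convexity together with the signs of the one-sided derivatives at the endpoints.
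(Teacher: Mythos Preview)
Your proposal is correct and follows essentially the same backward-induction route as the paper: compute the terminal value function, write the Bellman equation at each stage, verify via the first-order condition (or by guess-and-check) that $S_{T-K-1}=W_{T-K-1}/(K+2)$ is optimal, and substitute to obtain the claimed value function. Your symmetry observation for the base case is a slight streamlining over the paper's direct FOC computation, and your remark that $\xi W_{T-K-1}/(K+2)=\xi W_{T-K-2}/(K+3)$ along the optimal policy is exactly the identity that closes the induction; note only that the $\phi/\Phi$ terms in the FOC do not literally cancel to leave a rational equation but rather coincide once the correct guess is substituted, which is precisely the ``guess and verify'' route you rightly call cleaner.
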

\begin{proof}
\begin{doublespace}
Appendix \ref{subsec:Proof-of-Proposition-benchmark-simple}
\end{doublespace}
\end{proof}
\begin{doublespace}
We can see that a minimum exists at each stage. The simple solution
follows from the linear rule where the price impact $\theta S_{t}$
does not depend on either the prevailing price, $P_{t-1}$, or the
size of the unexecuted order $W_{t}$ and hence the price impact function
is the same in each period and independent from one period to the
next. It is easily shown that, $S_{1}^{*}=S_{2}^{*}=\cdots=S_{T}^{*}=\bar{S}/T$.
This simply means that the best execution strategy is simply to divide
the total order or the total shares $\bar{S}$ into $T$ equal amounts
and trade them at regular intervals. (Bertsimas \& Lo 1998) has a
more detailed discussion. Supposing a closed form solution was absent,
we could approximate the solution (numerically solved) using $S_{T-1}^{*}\approx\xi_{0}+\xi_{1}W_{T-1}+\xi_{2}\left(W_{T-1}\right)^{2}$
or $S_{T-1}^{*}\approx\xi_{0}\left(W_{T-1}\right)^{\xi_{1}}$. We
can also set $S_{T-1}^{*}\approx\omega_{1}\left(W_{T-1}\right)$ using
any well behaved (continuous and differentiable) function, $\omega_{1}$.
We could also include the last known price, $P_{t-1}$, or other state
variables into the above approximation. We discuss this technique
in detail including numerical examples in section \ref{sec:Numerical-Framework-for}.
This numerical approximation approach is simple to implement and lends
itself easily to solutions even in the more complex laws of motion
to follow in section \ref{sec:Extensions-Price-Motion}.

Going forward, to lighten the notion, we will drop the {*} superscript
on the number of shares to be executed in each time period, $S_{T}^{*}$,
where there is less likelihood of confusion.
\end{doublespace}
\begin{doublespace}

\subsection{Complex Formulation of the Benchmark Law of Price Motion}
\end{doublespace}

\begin{doublespace}
Incorporating the Complex Market Impact formulation from the earlier
section \ref{subsec:Market-Impact-Complex}, the objective function
and the Bellman equation from section \ref{subsec:Benchmark-Dynamic-Programming}
can be modified as,
\begin{equation}
\underset{\left\{ S_{t}\right\} }{\min}\:E_{1}\left[\sum_{t=1}^{T}\left\{ \max\left[\left(P_{t}-P_{t-1}\right),0\right]W_{t}\right\} \right]
\end{equation}
\begin{equation}
\sum_{t=1}^{T}S_{t}=\bar{S}\;,S_{t}\geq0\;,W_{1}=\bar{S},\;W_{T+1}=0\;,\;W_{t}=W_{t-1}-S_{t-1}
\end{equation}
\begin{equation}
P_{t}=P_{t-1}+\theta S_{t}+\varepsilon_{t}\;,\theta>0\:,E\left[\varepsilon_{t}\left|S_{t},P_{t-1}\right.\right]=0\;,\;\varepsilon_{t}\sim N\left(0,\sigma_{\varepsilon}^{2}\right)
\end{equation}
The Bellman equation then becomes, 
\begin{equation}
V_{t}\left(P_{t-1},W_{t}\right)=\underset{\left\{ S_{t}\right\} }{\min}\:E_{t}\left[\max\left\{ \left(P_{t}-P_{t-1}\right),0\right\} W_{t}+V_{t+1}\left(P_{t},W_{t+1}\right)\right]
\end{equation}
The optimal value function, $V_{T}\left(\cdots\right)$, as a function
of the two state variables $P_{T-1}$ and $W_{T}$ is given by, 
\begin{equation}
V_{T}\left(P_{T-1},W_{T}\right)=\underset{\left\{ S_{T}\right\} }{\min}\:E_{T}\left[\max\left\{ \left(P_{T}-P_{T-1}\right),0\right\} W_{T}\right]
\end{equation}
Here, the remaining shares $W_{T+1}$ must be zero since there is
no choice but to execute all the remaining shares, $W_{T}$. We then
have the optimal trade size, $S_{T}^{*}=W_{T}$ and an expression
for $V_{T}$ as,
\begin{equation}
V_{T}\left(P_{T-1},W_{T}\right)=E_{T}\left[\max\left\{ \left(\theta W_{T}+\varepsilon_{T}\right),0\right\} W_{T}\right]
\end{equation}

\end{doublespace}
\begin{prop}
\begin{doublespace}
\label{The-number-of-benchmark-complex}The value function for the
last but one time period is a convex function with a unique minimum,
since it is the sum of the portions shown to be convex above (Proposition
\ref{The-value-function-convexity}), another convex function and
a linear component.
\[
V_{T-1}\left(P_{T-2},W_{T-1}\right)=\underset{\left\{ S_{T-1}\right\} }{\min}\left[W_{T-1}\sigma_{\varepsilon}\psi\left(\xi S_{T-1}\right)+\left(W_{T-1}-S_{T-1}\right)\sigma_{\varepsilon}\psi\left\{ \xi\left(W_{T-1}-S_{T-1}\right)\right\} \right]
\]
\[
\text{Here, }\psi\left(u\right)=u+\phi\left(u\right)/\Phi\left(u\right)\;;\;\xi=\frac{\theta}{\sigma_{\varepsilon}}\;;\;\text{Note that, }W_{T-1}=S_{T-1}+W_{T}
\]
The number of shares to be executed in subsequent time periods and
the corresponding value function are obtained by solving, 

\begin{eqnarray*}
W_{T-1}+\frac{\xi\left(W_{T-1}-S_{T-1}\right)^{2}\phi\left(\xi\left\{ W_{T-1}-S_{T-1}\right\} \right)}{\Phi\left(\xi\left\{ W_{T-1}-S_{T-1}\right\} \right)}+\left(W_{T-1}-S_{T-1}\right)\left[\frac{\phi\left(\xi\left\{ W_{T-1}-S_{T-1}\right\} \right)}{\Phi\left(\xi\left\{ W_{T-1}-S_{T-1}\right\} \right)}\right]^{2} &  & =\\
2\left(W_{T-1}-S_{T-1}\right)+\frac{1}{\xi}\frac{\phi\left(\xi\left\{ W_{T-1}-S_{T-1}\right\} \right)}{\Phi\left(\xi\left\{ W_{T-1}-S_{T-1}\right\} \right)}+\frac{\xi W_{T-1}S_{T-1}\phi\left(\xi S_{T-1}\right)}{\Phi\left(\xi S_{T-1}\right)}+W_{T-1}\left[\frac{\phi\left(\xi S_{T-1}\right)}{\Phi\left(\xi S_{T-1}\right)}\right]^{2}
\end{eqnarray*}
\end{doublespace}
\end{prop}
\begin{proof}
\begin{doublespace}
Appendix \ref{subsec:Proof-of-Proposition-benchmark-complex}. 
\end{doublespace}
\end{proof}
\begin{doublespace}
The simple rule established earlier, $S_{T-1}=W_{T-1}/2$, no longer
applies here and we need numerical solutions at each stage. The complexity
that gets included in this scenario, when we consider the rest of
the unexecuted program into the market impact function, can be seen
from this expression. We illustrate numerical techniques for obtaining
optimal executions in section (\ref{sec:Numerical-Framework-for}).
\end{doublespace}
\begin{doublespace}

\section{\label{sec:Numerical-Framework-for}Numerical Framework for Optimal
Execution}
\end{doublespace}

\begin{doublespace}
Below we develop a numerical framework that can provide optimal executions
for any law of motion of prices. We specifically illustrate how we
can solve the formulations from section \ref{sec:Alternative-and-Practical}
with this numerical technique. It should shortly become clear how
this solution technique can be applied under any scenario of price
changes including multiple sources of uncertainty. The central idea
is similar to the American option pricing methodology (Longstaff \&
Schwartz 2001) that approximates the ex post realized payoffs from
continuation on functions of the values of the state variables. In
our case, we use least squares to approximate the conditional expectation
of the number of shares to execute as a function of the state variables
at each stage. The following points capture a high level essence of
the algorithm.
\end{doublespace}

\subsection{Optimal Execution Algorithm}
\begin{enumerate}
\begin{doublespace}
\item We create a matrix with the number of columns equal to the number
of time periods and number of rows equal to the number of different
price paths we desire (total number of simulations we are running).
The first column in the matrix corresponds to the starting price,
$P_{0}$, and the total number of shares to execute, $W_{1}$, before
the start of the first time period, $T=1$. The second column has
to hold the price, $P_{1}$, and the remaining number of shares to
execute, $W_{2}$, before the start of the second time period, $T=2$.
Each node (row and column) in the matrix has to contain the price
and the number of shares to execute before the start of the corresponding
time period.
\item The price at the start of any time period and the price innovation
sampled from a suitable distribution ($\varepsilon_{t}\sim N\left(0,\sigma_{\varepsilon}^{2}\right)$
in our case) along with the number of shares executed during that
time period incorporated into the corresponding law of motion give
us the number of shares that still remain to be executed before the
start of the next time period and the starting price point for the
next time period. Any additional sources of uncertainty can also be
included to obtain the next price level.
\item We randomly sample the remaining number of shares to be executed at
the start of the second time period and thereafter from a uniform
distribution by imposing suitable constraints. The upper limit for
the uniform distribution can be the shares remaining at the start
of the previous time period and the lower limit can be zero. During
this process, the upper and lower limits for the uniform distribution
can be changed to impose constraints on the minimum or maximum amounts
we wish to execute during the previous time period\footnote{\begin{doublespace}
In the actual numerical algorithm we implement, for simplicity, we
directly sample the number of shares to execute from a uniform distribution
with suitable constraints imposed and calculate the remaining shares
using $W_{t}=W_{t-1}-S_{t-1}$.
\end{doublespace}
}.
\item Continuing this iteratively, we obtain a matrix where each node represents
a different scenario of price and remaining number of shares to be
executed before the start of the next time period. Each column contains
many different combinations of price and remaining number of shares
at the start of the corresponding time period.
\end{doublespace}
\item Starting from the last time period, at each node, we compute the optimal
number of shares to execute during that time period and later ones
with complete knowledge of the innovations ($\varepsilon_{t}$) that
unfold on that path, using well-known optimization techniques. For
the complex impact function, we use the solnp package in R (Ghalanos,
Theussl \& Ghalanos 2012; Ye 1988); for the simple impact function,
we allocate the remaining shares to the remaining time periods based
on whether the corresponding innovations are negative and how negative
they are.
\begin{enumerate}
\item Considering the below example of obtaining the optimal executions
when we are minimizing the complex impact function under the benchmark
law of price motion, we write the objective function as, 
\begin{equation}
\underset{\left\{ S_{t}\right\} }{\min}\left[\sum_{t=1}^{t=T}\left\{ \max\left(\theta S_{t}+\varepsilon_{t},0\right)\left(\sum_{j=t}^{j=T}S_{j}\right)\right\} \right]
\end{equation}
Here, $\sum_{t=1}^{t=T}S_{t}=W_{1}$ ; $S_{t},W_{1}\geq0$ and $\theta,\varepsilon_{t}\in R$,
that is they are real numbers. Note that, $P_{t}-P_{t-1}=\theta S_{t}+\varepsilon_{t}$
for the benchmark law of price motion.
\begin{itemize}
\begin{doublespace}
\item As an example, for $T=3$, the complete objective function will be,
\begin{equation}
\underset{\left\{ S_{1},S_{2},S_{3}\right\} }{\min}\left[\max\left(\theta S_{1}+\varepsilon_{1},0\right)\left(S_{1}+S_{2}+S_{3}\right)+\max\left(\theta S_{2}+\varepsilon_{2},0\right)\left(S_{2}+S_{3}\right)+\max\left(\theta S_{3}+\varepsilon_{3},0\right)\left(S_{3}\right)\right]\label{eq:Complete-Objective-T-3}
\end{equation}
\item For the last time period, $T=3$, the optimal number of shares, $S_{3}^{*}=W_{3}$.
Since this is the last time period we execute all the shares that
are remaining at the start of the time period, $W_{3}$.
\end{doublespace}
\item When we are at time period, $T=2$, we optimize $S_{2},S_{3}$ using
the Rsolnp library such that the following function is minimized,
\begin{equation}
\underset{\left\{ S_{2},S_{3}\right\} }{\min}\left[\max\left(\theta S_{2}+\varepsilon_{2},0\right)\left(S_{2}+S_{3}\right)+\max\left(\theta S_{3}+\varepsilon_{3},0\right)\left(S_{3}\right)\right]
\end{equation}
Here, $\sum_{t=2}^{t=3}S_{t}=W_{2}$ ; $W_{2}$ would have a different
value on each price path or for each row in our matrix and we need
to perform this optimization exercise on each simulation path.
\item When we are at time period, $T=1$, we optimize $S_{1,}S_{2},S_{3}$
using the Rsolnp library such that the following function is minimized.
This is the same as our complete optimization objective in Eq: \ref{eq:Complete-Objective-T-3}.
\begin{equation}
\underset{\left\{ S_{1},S_{2},S_{3}\right\} }{\min}\left[\max\left(\theta S_{1}+\varepsilon_{1},0\right)\left(S_{1}+S_{2}+S_{3}\right)+\max\left(\theta S_{2}+\varepsilon_{2},0\right)\left(S_{2}+S_{3}\right)+\max\left(\theta S_{3}+\varepsilon_{3},0\right)\left(S_{3}\right)\right]
\end{equation}
Here, $\sum_{t=1}^{t=3}S_{t}=W_{1}$ ; $W_{1}$ is the total number
of shares we start with and it would have a different value on each
price path or for each row in our matrix. We need to perform this
optimization on each simulation path.
\item If there are $N$ simulation price paths and $T$ time periods we
need to make a total of $N*\left(T-1\right)$ calls to the Rsolnp
routine.
\end{itemize}
\item Considering the below example of obtaining the optimal executions
when we are minimizing the simple impact function under the benchmark
law of price motion, we write the objective function as, 
\begin{equation}
\underset{\left\{ S_{t}\right\} }{\min}\left[\sum_{t=1}^{t=T}\left\{ \max\left(\theta S_{t}+\varepsilon_{t},0\right)\left(S_{t}\right)\right\} \right]
\end{equation}
Here, $\sum_{t=1}^{t=T}S_{t}=W_{1}$ ; $S_{t},W_{1}\geq0$ and $\theta,\varepsilon_{t}\in R$,
that is they are real numbers. Note that, $P_{t}-P_{t-1}=\theta S_{t}+\varepsilon_{t}$
\begin{itemize}
\begin{doublespace}
\item As an example, for $T=3$ the full objective function can be written
as,
\begin{equation}
\underset{\left\{ S_{1},S_{2},S_{3}\right\} }{\min}\left[\max\left(\theta S_{1}+\varepsilon_{1},0\right)\left(S_{1}\right)+\max\left(\theta S_{2}+\varepsilon_{2},0\right)\left(S_{2}\right)+\max\left(\theta S_{3}+\varepsilon_{3},0\right)\left(S_{3}\right)\right]\label{eq:Complete-Opt-Simple-T-3}
\end{equation}
\item For the last time period, $T=3$, the optimal number of shares, $S_{3}^{*}=S_{3}$. 
\end{doublespace}
\item When we are time period, $T=2,$ we optimize $S_{2},S_{3}$ such that
the following function is minimized,
\begin{equation}
\underset{\left\{ S_{2},S_{3}\right\} }{\min}\left[\max\left(\theta S_{2}+\varepsilon_{2},0\right)\left(S_{2}\right)+\max\left(\theta S_{3}+\varepsilon_{3},0\right)\left(S_{3}\right)\right]
\end{equation}
Here, $\sum_{t=2}^{t=3}S_{t}=W_{2}$ ; $W_{2}$ would have a different
value on each price path simulation or for each row in our matrix.
The remaining shares $W_{2}$ are distributed to time periods that
have negative innovations, starting with earlier time periods, until
the execution size times the impact parameter $\theta$ plus the innovation
equals zero for a particular time period. When this condition is satisfied,
we incur zero impact $\left\{ \theta S_{t}+\varepsilon_{t}=0\Rightarrow P_{t}-P_{t-1}=0\right\} $.
After the execution size times the impact parameter $\theta$ plus
the innovation equals zero for all time periods, any further leftover
shares are allocated, while giving precedence to earlier time periods
and then allocating to subsequent time periods, up-to the maximum
execution limit for each time period, since the execution of these
shares will cause an equal jump up in the prices (having an equal
impact in the objective function) and it is better to execute sooner
rather than later.
\item When we are at time period, $T=1$, we optimize $S_{1,}S_{2},S_{3}$
such that the following function is minimized. This is the same as
our complete optimization objective in Eq: \ref{eq:Complete-Opt-Simple-T-3}.
\begin{equation}
\underset{\left\{ S_{1},S_{2},S_{3}\right\} }{\min}\left[\max\left(\theta S_{1}+\varepsilon_{1},0\right)\left(S_{1}\right)+\max\left(\theta S_{2}+\varepsilon_{2},0\right)\left(S_{2}\right)+\max\left(\theta S_{3}+\varepsilon_{3},0\right)\left(S_{3}\right)\right]
\end{equation}
Here, $\sum_{t=1}^{t=3}S_{t}=W_{1}$ ; $W_{1}$ would have a different
value on each price path simulation or for each row in our matrix.
The remaining shares $W_{1}$ are distributed similar to the methodology
described above. 
\end{itemize}
\end{enumerate}
\begin{doublespace}
\item We then run a regression across all the rows in the matrix (this is
a cross sectional regression across the simulated paths) with the
independent variables as the price, $P_{t-1}$, and the number of
shares remaining to be executed before the time period starts, $W_{t}$,
and the optimal number of shares to execute during that time period,
$S_{t}$, as the dependent variable. It is to be understood that $W_{t}$,
$P_{t-1}$, and $S_{t}$ denote the values of the variables across
all the paths (total number of simulations) in our sample. We use
a regression model such as the one below (Eq: \ref{eq:Regression-Optimal-Execs}).
It should be clear that we can extend this to purely non-linear regressions
or a combination of linear and non-linear components. Inclusion of
additional variables such as spread, volume, number of trades, etc.
are other possible extensions.
\end{doublespace}

\begin{equation}
E\left[S_{t}\left|W_{t,}P_{t-1}\right.\right]=\beta_{0}+\beta_{1}W_{t}+\beta_{2}W_{t}^{2}+\beta_{3}P_{t-1}+\beta_{4}P_{t-1}^{2}+\beta_{5}W_{t}P_{t-1}\label{eq:Regression-Optimal-Execs}
\end{equation}
For $T=2$,
\begin{equation}
E\left[S_{2}\left|W_{2,}P_{1}\right.\right]=\beta_{0}+\beta_{1}W_{2}+\beta_{2}W_{2}^{2}+\beta_{3}P_{1}+\beta_{4}P_{1}^{2}+\beta_{5}W_{2}P_{1}
\end{equation}

\item Likewise, we continue backwards in time and obtain regression coefficients
for each time period. The regression coefficients can then be used
to calculate the optimal number of shares before the start of each
time period. At each stage, we adjust the number of shares remaining
before the time period starts based on the difference between the
simulated number of shares to execute and the conditional expected
value of the number of shares to execute as given by the above regression
equation. For $T=2$, this adjusted number of remaining shares, $\hat{W}_{2}$,
is given by,
\begin{equation}
\hat{W}_{2}=S_{3}+\left(S_{2}-E\left[S_{2}\left|W_{2,}P_{1}\right.\right]\right)
\end{equation}
\end{enumerate}

\subsection{Sample Results with Mean-Variance of Execution Costs}

\begin{doublespace}
For the complex impact formulation, the table in (Figure \ref{fig:Regression-Co-efficients-Complex})
gives the regression coefficients when the number of time periods,
$T=20$, the total number of shares to execute, $W_{1}=\bar{S}=100,000$
and the initial price, $P_{0}=50$. Starting from the initial time
period in the first row, the optimal executions, the price path and
other parameters are also shown\footnote{\begin{doublespace}
The table in (Figure \ref{fig:Regression-Co-efficients-Complex})
has the following information: simulation sample size; maximum shares
available when prices move up; maximum shares available when prices
move down; $\theta$, the impact parameter; $\sigma_{\varepsilon}$,
a proxy for the stock price volatility; the average of the simple
market impact; the average of the complex market impact; the total
execution costs; referring to the regression equation, (Eq: \ref{eq:Regression-Optimal-Execs},
$\beta_{0}+\beta_{1}W_{t}+\beta_{2}W_{t}^{2}+\beta_{3}P_{t-1}+\beta_{4}P_{t-1}^{2}+\beta_{5}W_{t}P_{t-1}$),
$\beta_{0}$, the regression intercept; $\beta_{1}$, the coefficient
of the shares remaining before the start of time period $t$, $W_{t}$;
$\beta_{2}$, the coefficient of the square of the shares remaining
before the start of time period $t$, $W_{t}^{2}$; $\beta_{3}$,
the coefficient of the price before the start of time period $t$,
$P_{t-1}$; $\beta_{4}$, the coefficient of the square of the price
before the start of time period $t$, $P_{t-1}^{2}$; $\beta_{5}$,
the coefficient of the product of the remaining shares and the price
before the start of time period $t$, $W_{t}P_{t-1}$ ; the optimal
executions; and the price path represented by the columns: (simulationSampleSize;
maxUpShares; maxDownShares; thetaImpact; sigmaStockPrice; marketImpactAvgCost;
marketImpactAvgCostComplex; totalExecutionAvgCost; intercept; remainingSharesT;
priceTMinusOne; remainingSharesTSqr; priceTMinusOneSqr; remainingSharesTtimespriceTMinusOne;
optimalExecs; prices). The other tables have the same column names
representing the same information. The table in (Figure \ref{fig:Optimal Executions-Complex})
has one additional column name: Optimal Execution $i$ which is the
optimal number of shares to execute in time period $i$. The table
in (Figure \ref{fig:Complex Executions-Costs}) has additional columns
that give the total calculation time; the time to calculate the optimal
executions using the Rsolnp call; and the time to simulate the random
prices and remaining shares given by the columns: (totalTime; optimalTime;
randomTime). All time columns are measured in seconds.

The significance of the regression coefficients in Eq: \ref{eq:Regression-Optimal-Execs}
varies across different time periods. The shares remaining to be executed,
$W_{t}$, tends to be the most significant variable. It is significant
around the $p=0.01$ and $p=0.1$ levels for the last few time periods
from the end and the significance decreases thereafter (p-values increase).
This illustrates the difficulty in predicting optimal executions as
the number of remaining time periods increases. This is an artifact
of the high noise environment, such as for trading costs and optimal
executions, and highlights the issue of making accurate estimations
in such a setting. The significance of the variables increases with
an increase in the number of simulation paths. Clearly, alternate
models that include additional variables such as spread, volume, number
of trades and so on can improve the predictive power.
\end{doublespace}
}. Unless specified otherwise all the parameter values are taken to
be the same as the values in (Bertsimas \& Lo 1998), to facilitate
a proper comparison. (Figure \ref{fig:Optimal Executions-Complex})
shows the optimal execution schedules under different levels of minimum
and maximum number of shares to execute during each time period and
different number of price paths or simulation counts. 
\end{doublespace}

\textbf{\textit{(Figure \ref{fig:Mean Variance Comparison}) compares
the average and variance of the total impact costs of our numerical
methodology with the benchmark case in (Bertsimas \& Lo 1998; also
termed the naive strategy), where the solution we get is to execute
equal amount of shares in each time period}}\footnote{\begin{doublespace}
The numbers reported are using the complex impact formulation since
it gives better performance compared to the simple impact formulation.
\end{doublespace}
}\textbf{\textit{. We report the mean and variance over a simulation
sample of 50,000 price paths. We see that the benchmark case has a
mean of around 5,262,583 which is comparable to the average execution
cost of 5,264,706 using the complex formulation}}\footnote{\begin{doublespace}
5,262,583 and 5,264,706 represent the total notional value to obtain
100,000 shares.
\end{doublespace}
}\textbf{\textit{; but the variance is significantly lower using our
methodology (769,801,363 in our case versus 1,120,457,643 in the benchmark
model). (Figure \ref{fig:Mean Variance Comparison}) also reports
the multiple of ten percentile values for the executions costs. (Figure
\ref{fig:Histogram Execution Costs}) shows the histograms of the
total costs under the two techniques (the top histogram is for the
complex formulation). In addition, our methods are more realistic
and adaptive, since the execution amounts change every-time we use
it, as the market moves and as our trading progresses. Tailoring it
to include additional state variables and capture other sources of
uncertainty is relatively straightforward.}}

Lastly to provide a better understanding of how execution costs change
with changes in the different parameters, in (Figure \ref{fig:Complex Executions-Costs})
we provide the average of the total execution costs, the simple impact
costs, the complex impact costs and the market timing costs across
different parameter values, when we are optimizing the complex formulation
of the market impact. We impose non-negativity constraints on the
execution amounts while calculating the regression coefficient; later
when we use the regression coefficients to calculate execution costs,
we remove this restriction for some iterations; this parameter is
captured as the maximum and minimum number of shares we can trade
in any given time period.

The following values of the parameters are used in the computations:
we vary the volatility of the stock price $\sigma_{\varepsilon}=\left\{ 0.125,0.25,0.30,0.35,0.40,0.45,0.50,0.55,0.60,0.65,0.70,0.75\right\} $,
the impact parameter, $\theta=\left\{ 2.5,3.0,3.5,4.0,4.5,5.0,5.5,6.0,6.5,7.0,7.5\right\} $,
the maximum and minimum number of shares we can execute in any given
time period, i.e, the liquidity, $\left\{ 6666,13332,19998,26664,3333\right\} $
and $\left\{ 0,-10,000\right\} $, and the number of simulations,
$\left\{ 50000,20000,10000,3000,2000,1000\right\} $.\textbf{ }This
gives a matrix of summary statistics with around 93 different combinations
for which we calculate the regression coefficients, the simple market
impact cost, the complex market impact cost and the total execution
cost. It is immediately obvious that increasing the impact parameter,
$\theta$, leads to an increase in the total executions costs (in
Figure \ref{fig:Complex Executions-Costs}, when $\theta$ increases
in this range, $\left\{ 2.5,3.0,3.5,4.0,4.5,5.0,5.5,6.0,6.5,7.0,7.5\right\} $,
the corresponding total average costs are $\left\{ 5132240,5160852,5187428,5216411,5240020,5269296,5290421,5319175,5349371,5368020,5401751\right\} $.
 The increase in the price volatility and the liquidity in each time
period do not show such a clear pattern and further investigation
is warranted. But we can expect that the greater uncertainty due to
higher price volatility and lack of liquidity, would force participants
to trade greater amounts earlier in the trading horizon or as liquidity
becomes available.

To calculate the regression coefficients in the quickest possible
time, it would be helpful to build a decent amount of computing infrastructure.
Since each price path can be developed independently and the only
dependence across price paths is while doing the cross-sectional regressions,
all the price paths and the optimization at each time period can be
done using parallel processing technology. If there are 20 time periods
and 1000 price paths, we would need to perform 20,000 Rsolnp optimization
calls to compute all the regression coefficients for the complex impact
formulation. This is the most time consuming portion of the algorithm
and it is highly sensitive to the initial values provided for the
routine. The calculation time increases significantly with the number
of price paths and time periods; this increase is linear with the
number of price paths but it can be more costly to perform the Rsolnp
optimizations when the number of time periods increase. To make the
calculation engine more robust we also build rudimentary intelligence,
such that in case of any interruptions the system will revert back
and resume the calculations from the last clean state that was reached.
We ran our simulations on an Intel four core windows 10 machine with
4.00 Gigabytes RAM and 2.4 Gigahertz processor speed. In the summary
statistics for each formulation we provide the time it takes to calculate
all the coefficients (Figures \ref{fig:Complex Executions-Costs};
\ref{fig:Simple-Executions-Costs}; \ref{fig:One-Step-Executions}).

To reduce the number of calculations, when we are looking to run optimal
executions across hundreds of different securities, we could create
groups of securities based on similarities in starting prices, volatilities
and other parameters and compute regression coefficients for each
group separately. Optimizing the simple impact formulation takes considerably
less time. The regression coefficients, optimal executions, execution
costs and run times are summarized in (Figures \ref{fig:Regression-Co-efficients-for-Simple},
\ref{fig:Optimal Executions-Simple}, \ref{fig:Simple-Executions-Costs}).
Another alternative to optimizing the complex impact function (instead
of the Rsolnp optimization) is to perform a one step ahead optimization.
To elaborate on this, at each step, we only look at whether the price
went up or down and execute accordingly with full foresight of only
one time period. We then run the cross-sectional regressions based
on the optimal shares with just one time period look ahead. The regression
coefficients, optimal executions, execution costs and run times are
summarized in (Figures \ref{fig:Regression-Co-efficients-for-One-Step},
\ref{fig:Optimal Executions-OneStep}, \ref{fig:One-Step-Executions}).
It would be prudent to re-calibrate the regression coefficients periodically
across all three formulations.
\begin{doublespace}

\subsection{Actual Trading Costs Attribution}
\end{doublespace}

\begin{doublespace}
(Figure \ref{fig:Trading-Cost-Distributions}) illustrates the distribution
of actual trading costs (These metrics are for live institutional
trades from a global sample measured in basis points on the $y$-axis;
the $x$-axis has the costs for two months: April and May 2015; the
size of the bubble represents the trade size) based on our attribution
methodology. (Kashyap 2015, 2016) are empirical examples of applying
the above methodology to recent market events, wherein, Mincer Zarnowitz
type regressions (Mincer \& Zarnowitz 1969) are run to establish the
accuracy of the estimates. These studies demonstrate the effectiveness
of this approach in helping us better understand and analyze real
life trading situations.
\end{doublespace}
\begin{doublespace}

\section{\label{sec:Extensions-Price-Motion}Extensions to the Benchmark Law
of Price Motion}
\end{doublespace}
\begin{doublespace}

\subsection{Law of Price Motion with Additional Source of Uncertainty }
\end{doublespace}
\begin{doublespace}

\subsubsection{Simple Formulation}
\end{doublespace}

\begin{doublespace}
The law of price motion can be changed to include an additional source
of uncertainty, $X_{t}$, which could represent changing market conditions
or private information about the security. We assume that this state
variable $X_{t}$, is serially-correlated and $\gamma$ captures its
sensitivity to the price movements, which means $\gamma$ could be
positive or negative. Incorporating this, the objective function and
the Bellman equation become,
\begin{equation}
\underset{\left\{ S_{t}\right\} }{\min}\:E_{1}\left[\sum_{t=1}^{T}\left\{ \max\left[\left(P_{t}-P_{t-1}\right),0\right]S_{t}\right\} \right]
\end{equation}
\begin{equation}
\sum_{t=1}^{T}S_{t}=\bar{S}\;,S_{t}\geq0\;,W_{1}=\bar{S},\;W_{T+1}=0\;,\;W_{t}=W_{t-1}-S_{t-1}
\end{equation}
\begin{equation}
P_{t}=P_{t-1}+\theta S_{t}+\gamma X_{t}+\varepsilon_{t}\;,\theta>0\:,E\left[\varepsilon_{t}\left|S_{t},P_{t-1}\right.\right]=0
\end{equation}
\begin{equation}
X_{t}=\rho X_{t-1}+\eta_{t}\;,\;\rho\in\left(-1,1\right)\equiv\text{AR\ensuremath{\left(1\right)} Process}
\end{equation}
\begin{equation}
\varepsilon_{t}\sim N\left(0,\sigma_{\varepsilon}^{2}\right)\equiv\text{Zero Mean IID (Independent Identically Distributed) random shock or white noise}
\end{equation}
\begin{equation}
\eta_{t}\sim N\left(0,\sigma_{\eta}^{2}\right)\equiv\text{Zero Mean IID (Independent Identically Distributed) random shock or white noise}
\end{equation}
\begin{equation}
V_{t}\left(P_{t-1},X_{t-1},W_{t}\right)=\underset{\left\{ S_{t}\right\} }{\min}\:E_{t}\left[\max\left\{ \left(P_{t}-P_{t-1}\right),0\right\} S_{t}+V_{t+1}\left(P_{t},X_{t},W_{t+1}\right)\right]
\end{equation}
By starting at the end, (time $T$) we have, 
\begin{equation}
V_{T}\left(P_{T-1},X_{T-1},W_{T}\right)=\underset{\left\{ S_{T}\right\} }{\min}\:E_{T}\left[\max\left\{ \left(P_{T}-P_{T-1}\right),0\right\} S_{T}\right]
\end{equation}
Since $W_{T+1}$ is zero, we have the optimal trade size, $S_{T}^{*}=W_{T}$
and an expression for $V_{T}$ as,
\begin{equation}
V_{T}\left(P_{T-1},X_{T-1},W_{T}\right)=E_{T}\left[\max\left\{ \left(\theta W_{T}+\varepsilon_{T}+\gamma X_{T}\right),0\right\} W_{T}\right]
\end{equation}

\end{doublespace}
\begin{prop}
\begin{doublespace}
\textcolor{red}{\label{The-number-of-additional-uncertainty-simple}}\textcolor{black}{The
number of shares to be executed in each time period follows a linear
law. $S_{T-1}=W_{T-1}/2\quad\ldots\quad S_{T-K-1}=W_{T-K-1}/\left(K+2\right)$
and the corresponding value function is
\begin{eqnarray*}
V_{T-K-1}\left(P_{T-K-2},X_{T-K-2},W_{T-K-1}\right) & = & \frac{\theta}{\left(K+2\right)}W_{T-K-1}^{2}+\alpha_{K+1}W_{T-K-1}+\beta W_{T-K-1}\frac{\phi\left(\frac{\theta W_{T-K-1}+\left(K+2\right)\alpha_{K+1}}{\left(K+2\right)\beta}\right)}{\Phi\left(\frac{\theta W_{T-K-1}+\left(K+2\right)\alpha_{K+1}}{\left(K+2\right)\beta}\right)}
\end{eqnarray*}
\[
\text{Here, }\alpha_{K+1}=\gamma\rho X_{T-K-2},\;\beta=\sqrt{\gamma^{2}\sigma_{\eta}^{2}+\sigma_{\varepsilon}^{2}}
\]
}
\end{doublespace}
\end{prop}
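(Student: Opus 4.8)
The plan is to prove this by backward induction on the index $K$ (equivalently, on the number of periods left), copying the argument behind Proposition~\ref{The-number-of-benchmark-simple} but dragging the state variable $X_t$ through the recursion. The one structural fact that does all the work is this: conditional on the information available at the start of period $t$ — which contains $X_{t-1}$ but not $X_t$ — the composite innovation $\gamma X_t+\varepsilon_t=\gamma\rho X_{t-1}+\bigl(\gamma\eta_t+\varepsilon_t\bigr)$ is Gaussian with \emph{known} mean $\gamma\rho X_{t-1}$ and variance $\gamma^{2}\sigma_{\eta}^{2}+\sigma_{\varepsilon}^{2}=\beta^{2}$. Thus, period by period, the AR$(1)$-drift model collapses to the benchmark simple model with the substitution $\sigma_\varepsilon\mapsto\beta$ together with an extra \emph{deterministic} shift $\alpha$ inside every $\max\{\,\cdot\,,0\}$; in particular $P_{t-1}$ drops out of the value function for the same reason it does in the benchmark, and only $W$ and the shift $\alpha$ survive.

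First I would settle the base case $K=-1$, i.e.\ period $T$: since $W_{T+1}=0$ forces $S_T^{*}=W_T$, we have $V_T=W_T\,E_T\bigl[\max\{\theta W_T+\gamma X_T+\varepsilon_T,0\}\bigr]$; writing $\gamma X_T+\varepsilon_T=\alpha_0+(\gamma\eta_T+\varepsilon_T)$ with $\alpha_0=\gamma\rho X_{T-1}$ and applying the same Gaussian expectation identity used for Proposition~\ref{The-value-function-convexity} — namely $E[\max\{c+Z,0\}]$ expressed through $\psi(c/\beta)$ with $\psi(u)=u+\phi(u)/\Phi(u)$ and $Z\sim N(0,\beta^{2})$ — reproduces exactly the asserted formula with $K=-1$.

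For the inductive step I would assume the displayed formula for $V_{T-K}$ (the case index $K-1$, hence coefficient $\theta/(K+1)$, shift $\alpha_K=\gamma\rho X_{T-K-1}$, and the matching $\phi/\Phi$ argument), write the Bellman equation
\[
V_{T-K-1}=\min_{S}\;E_{T-K-1}\Bigl[\max\{\theta S+\gamma X_{T-K-1}+\varepsilon_{T-K-1},0\}\,S+V_{T-K}\bigl(P_{T-K-1},X_{T-K-1},W_{T-K-1}-S\bigr)\Bigr],
\]
evaluate the period-$(T-K-1)$ impact term via the $N(\alpha_{K+1},\beta^{2})$ reduction above (with $\alpha_{K+1}=\gamma\rho X_{T-K-2}$), and carry out the minimisation over $S=S_{T-K-1}$. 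The candidate $S=W_{T-K-1}/(K+2)$ should solve the first-order condition because at that value the $\phi/\Phi$ argument coming from the current impact term, $(\theta S+\alpha_{K+1})/\beta$, and the one coming from the continuation value, $\bigl(\theta(W_{T-K-1}-S)/(K+1)+\alpha_{K}\bigr)/\beta$, coincide, so the polynomial, the linear-in-$W$, and the $\phi/\Phi$ pieces line up to give the coefficient $\theta/(K+2)$ and the single drift term $\alpha_{K+1}W_{T-K-1}$; convexity in $S$ — the analogue of Proposition~\ref{The-value-function-convexity}, yielding a condition of the form $\theta>3\beta/4$ — certifies it is a genuine minimiser. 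Substituting back gives $V_{T-K-1}$ in the stated form and closes the induction.

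The hard part is the continuation term $E_{T-K-1}\bigl[V_{T-K}(\,\cdot\,,X_{T-K-1},\,\cdot\,)\bigr]$: the inductive formula for $V_{T-K}$ depends on $X_{T-K-1}$ both linearly (through $\alpha_K$) and nonlinearly (inside $\phi/\Phi$), yet $X_{T-K-1}=\rho X_{T-K-2}+\eta_{T-K-1}$ is random given the period-$(T-K-1)$ information set. I would dispatch this by the tower property — replacing $X_{T-K-1}$ by its conditional mean $\rho X_{T-K-2}$ in the drift terms and absorbing the residual $\gamma\eta$ variance into $\beta^{2}$, exactly as in the impact term — and then verify that the $\rho$-weighted drift coming out of the continuation value recombines with the $\alpha_{K+1}S$ coming out of the impact term to reconstitute precisely the $\alpha_{K+1}W_{T-K-1}$ of the statement. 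Checking that this bookkeeping is simultaneously consistent at the polynomial, linear-drift, and $\phi/\Phi$-argument levels, so that the linear rule $S_{T-K-1}=W_{T-K-1}/(K+2)$ really does survive intact once the AR$(1)$ drift is switched on, is the delicate step; the remainder is the normal-moment algebra already exercised in Propositions~\ref{The-value-function-convexity} and~\ref{The-number-of-benchmark-simple}.
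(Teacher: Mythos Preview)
Your plan is essentially the paper's proof: backward induction from $V_T$ via the Gaussian identity for $E[\max\{c+Z,0\}]$ with $Z\sim N(0,\beta^2)$, $\beta^2=\gamma^2\sigma_\eta^2+\sigma_\varepsilon^2$, followed by first-order conditions verifying that $S=W/(K+2)$ equalises the two $\phi/\Phi$ arguments at each stage. The step you flag as delicate --- the expectation of the nonlinear continuation value over the random $X_{T-K-1}$ --- is handled in the paper exactly as you propose, by writing $\gamma\rho X_{T-K-2}$ directly in the continuation piece and proceeding without further justification, so your bookkeeping matches the paper's at the level of rigor adopted there.
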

\begin{proof}
\begin{doublespace}
\textcolor{black}{Appendix \ref{subsec:Proof-of-Proposition-additional-simple}.}
\end{doublespace}
\end{proof}
\begin{doublespace}
The simple rule established earlier, $S_{T-1}=W_{T-1}/2$, suffices
even here, with a similar reasoning that follows from the independence
of the price impact from either the prevailing price or the size of
the unexecuted order.
\end{doublespace}
\begin{doublespace}

\subsubsection{Complex Formulation}
\end{doublespace}

\begin{doublespace}
Incorporating this additional source of uncertainty into the complex
market impact formulation, the objective function and the Bellman
equation become,
\begin{equation}
\underset{\left\{ S_{t}\right\} }{\min}\:E_{1}\left[\sum_{t=1}^{T}\left\{ \max\left[\left(P_{t}-P_{t-1}\right),0\right]W_{t}\right\} \right]
\end{equation}
\begin{equation}
\sum_{t=1}^{T}S_{t}=\bar{S}\;,S_{t}\geq0\;,W_{1}=\bar{S},\;W_{T+1}=0\;,\;W_{t}=W_{t-1}-S_{t-1}
\end{equation}
\begin{equation}
P_{t}=P_{t-1}+\theta S_{t}+\gamma X_{t}+\varepsilon_{t}\;,\theta>0\:,E\left[\varepsilon_{t}\left|S_{t},P_{t-1}\right.\right]=0
\end{equation}
\begin{equation}
X_{t}=\rho X_{t-1}+\eta_{t}\;,\;\rho\in\left(-1,1\right)\equiv\text{AR\ensuremath{\left(1\right)} Process}
\end{equation}
\begin{equation}
\varepsilon_{t}\sim N\left(0,\sigma_{\varepsilon}^{2}\right)\equiv\text{Zero Mean IID (Independent Identically Distributed) random shock or white noise}
\end{equation}
\begin{equation}
\eta_{t}\sim N\left(0,\sigma_{\eta}^{2}\right)\equiv\text{Zero Mean IID (Independent Identically Distributed) random shock or white noise}
\end{equation}
\begin{equation}
V_{t}\left(P_{t-1},X_{t-1},W_{t}\right)=\underset{\left\{ S_{t}\right\} }{\min}\:E_{t}\left[\max\left\{ \left(P_{t}-P_{t-1}\right),0\right\} W_{t}+V_{t+1}\left(P_{t},X_{t},W_{t+1}\right)\right]
\end{equation}
By starting at the end, (time $T$) we have, 
\begin{equation}
V_{T}\left(P_{T-1},X_{T-1},W_{T}\right)=\underset{\left\{ S_{T}\right\} }{\min}\:E_{T}\left[\max\left\{ \left(P_{T}-P_{T-1}\right),0\right\} W_{T}\right]
\end{equation}
Since $W_{T+1}$ is zero, we have the optimal trade size, $S_{T}^{*}=W_{T}$
and an expression for $V_{T}$ as,
\begin{equation}
V_{T}\left(P_{T-1},X_{T-1},W_{T}\right)=E_{T}\left[\max\left\{ \left(\theta W_{T}+\varepsilon_{T}+\gamma X_{T}\right),0\right\} W_{T}\right]
\end{equation}

\end{doublespace}
\begin{prop}
\begin{doublespace}
\label{The-number-of-additional-uncertainty-complex}The number of
shares to be executed in each time period and the corresponding value
function are obtained by solving, 
\end{doublespace}
\end{prop}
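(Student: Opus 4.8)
The plan is to run the same backward-induction/Bellman argument used for Proposition~\ref{The-number-of-benchmark-complex} (complex impact, benchmark price dynamics) and Proposition~\ref{The-number-of-additional-uncertainty-simple} (simple impact, AR(1) dynamics), now carrying both features at once. Starting from the terminal period, the constraint $W_{T+1}=0$ forces $S_{T}^{*}=W_{T}$, and the stated expression $V_{T}\left(P_{T-1},X_{T-1},W_{T}\right)=E_{T}\left[\max\{\theta W_{T}+\varepsilon_{T}+\gamma X_{T},0\}\,W_{T}\right]$ is evaluated by writing $\gamma X_{T}+\varepsilon_{T}=\gamma\rho X_{T-1}+\left(\gamma\eta_{T}+\varepsilon_{T}\right)$ and noting that the bracketed shock is zero-mean normal with variance $\beta^{2}=\gamma^{2}\sigma_{\eta}^{2}+\sigma_{\varepsilon}^{2}$. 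Thus, conditionally on $X_{T-1}$, the argument of the $\max$ is $N\left(\theta W_{T}+\alpha_{0},\beta^{2}\right)$ with $\alpha_{0}=\gamma\rho X_{T-1}$, and the same normal positive-part / inverse-Mills-ratio identity already used in the earlier propositions delivers $V_{T}$ in closed form as a function of $W_{T}$ and $\alpha_{0}$ — structurally identical to the benchmark-complex $V_{T}$ but with $\theta W_{T}$ shifted by $\alpha_{0}$ and $\sigma_{\varepsilon}$ replaced by $\beta$.

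Next I would step back to period $T-1$ and write the Bellman equation $V_{T-1}=\min_{S_{T-1}}E_{T-1}\left[\max\{\theta S_{T-1}+\gamma X_{T-1}+\varepsilon_{T-1},0\}\,W_{T-1}+V_{T}\left(P_{T-1},X_{T-1},W_{T-1}-S_{T-1}\right)\right]$. The current-period term is handled exactly as at time $T$: collapse $\gamma\eta_{T-1}+\varepsilon_{T-1}$ into a single $N(0,\beta^{2})$ shock, so its expectation is $W_{T-1}$ times the Gaussian positive-part expression evaluated at location $\theta S_{T-1}+\gamma\rho X_{T-2}=\theta S_{T-1}+\alpha_{1}$ and scale $\beta$. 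The continuation term is the genuinely new ingredient: $V_{T}$ depends on the realised $X_{T-1}=\rho X_{T-2}+\eta_{T-1}$, so $E_{T-1}[V_{T}]$ must also integrate over $\eta_{T-1}$ — that is, one integrates a function that is affine plus inverse-Mills-ratio-of-an-affine-function of $X_{T-1}$ against an $N(\rho X_{T-2},\sigma_{\eta}^{2})$ density. I would dispatch this with the standard Gaussian-smoothing identities — $E[\Phi(aY+b)]$, $E[\phi(aY+b)]$ and $E[(aY+b)\Phi(aY+b)]$ for normal $Y$ all reduce, after completing the square, to $\Phi$ and $\phi$ at rescaled arguments — or by whichever device the earlier AR(1) proof uses to keep the inverse-Mills-ratio term tractable. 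Either way $E_{T-1}[V_{T}]$ becomes an explicit smooth function of $S_{T-1}$, with $X_{T-2}$ entering only through $\alpha_{1}=\gamma\rho X_{T-2}$ and the effective noise scale possibly inflated by the extra $\eta_{T-1}$.

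With $V_{T-1}$ now an explicit smooth function of $S_{T-1}$, I would set $\partial V_{T-1}/\partial S_{T-1}=0$, use $\phi'(u)=-u\phi(u)$ to cancel the obvious terms, and rearrange into the stated implicit equation for $S_{T-1}^{*}$ — the analogue of the equation in Proposition~\ref{The-number-of-benchmark-complex}, but with $\theta S_{T-1}$ and $\theta(W_{T-1}-S_{T-1})$ shifted by the drift term $\alpha_{1}$ and $\xi=\theta/\sigma_{\varepsilon}$ replaced by the ratio built from $\theta$ and $\beta$. A short second-order check (convexity of each summand, as in Proposition~\ref{The-value-function-convexity}) confirms the stationary point is the minimiser. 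The general period $T-K-1$ then follows by induction: assume $V_{T-K}$ has the stated form, substitute it into the Bellman equation, repeat the two-shock collapse and the Gaussian smoothing, verify that the functional form reproduces itself with the updated coefficients ($\alpha_{K+1}=\gamma\rho X_{T-K-2}$, and so on), and read off the recursive implicit equation determining $S_{T-K-1}^{*}$ together with $V_{T-K-1}$.

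The main obstacle is precisely the continuation-value expectation $E_{t}[V_{t+1}]$. In the benchmark-complex case (Proposition~\ref{The-number-of-benchmark-complex}) this term was deterministic once $S_{t}$ was fixed, so the recursion was purely algebraic; here $V_{t+1}$ is a genuinely nonlinear function both of the remaining quantity $W_{t+1}=W_{t}-S_{t}$ and of the freshly realised, serially correlated $X_{t}$, so the decision variable and the $\eta$-shock become entangled inside the $\Phi$ and $\phi$ terms. Pushing the Gaussian-smoothing identities through cleanly, bookkeeping how the effective noise variance accumulates from one period to the next, and verifying that the resulting first-order condition really does collapse to the compact stated form is where the labour lies — there is no conceptual leap beyond the earlier propositions. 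And, exactly as noted after Proposition~\ref{The-number-of-benchmark-complex}, the clean linear rule $S_{t}^{*}=W_{t}/(K+2)$ of the simple formulations does not survive, so the conclusion can only be an implicit equation solved numerically at each stage.
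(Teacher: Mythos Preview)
Your proposal follows the same backward-induction route the paper takes: compute $V_{T}$ by collapsing $\gamma\eta_{T}+\varepsilon_{T}$ into a single $N(0,\beta^{2})$ shock and invoking the inverse-Mills-ratio identity, substitute into the $T-1$ Bellman equation, then differentiate with respect to $S_{T-1}$ using $\phi'(u)=-u\phi(u)$ to obtain the implicit first-order condition stated in the proposition. On that level the two arguments are the same.

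Where you depart from the paper is precisely the step you flag as the ``main obstacle.'' The paper does \emph{not} carry out the Gaussian-smoothing integration of $V_{T}$ over $\eta_{T-1}$ that you describe; it simply replaces $X_{T-1}$ in $V_{T}$ by a deterministic quantity (the resulting continuation term carries $\gamma\rho X_{T-2}$ rather than an expression integrated over $\eta_{T-1}$) and proceeds directly to the minimisation. In effect the paper treats $E_{T-1}[V_{T}(\cdot,X_{T-1},\cdot)]$ as if $V_{T}$ were evaluated at a point, sidestepping the nonlinear-in-$X_{T-1}$ integration altogether. Your approach is more careful and, strictly speaking, more correct, but it buys you a harder bookkeeping problem (the variance-inflation and $E[\phi(aY+b)]$, $E[\Phi(aY+b)]$ identities you mention) that the paper simply does not engage with. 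If your aim is to reproduce the paper's stated first-order condition, you should drop the smoothing step and substitute directly as the paper does; if your aim is a fully rigorous derivation, your route is the right one but will not collapse to the compact equation shown without that approximation.
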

\begin{doublespace}
\begin{eqnarray*}
\theta W_{T-1}+\beta W_{T-1}\left\{ \frac{\theta}{\beta}\left[-\frac{\left(\frac{\theta S_{T-1}+\alpha}{\beta}\right)\phi\left(\frac{\theta S_{T-1}+\alpha}{\beta}\right)}{\Phi\left(\frac{\theta S_{T-1}+\alpha}{\beta}\right)}-\left\{ \frac{\phi\left(\frac{\theta S_{T-1}+\alpha}{\beta}\right)}{\Phi\left(\frac{\theta S_{T-1}+\alpha}{\beta}\right)}\right\} ^{2}\right]\right\} \\
-2\theta\left(W_{T-1}-S_{T-1}\right)-\alpha+\beta\left\{ -\frac{\phi\left(\frac{\theta\left(W_{T-1}-S_{T-1}\right)+\alpha}{\beta}\right)}{\Phi\left(\frac{\theta\left(W_{T-1}-S_{T-1}\right)+\alpha}{\beta}\right)}\right.\\
\left.+\frac{\theta\left(W_{T-1}-S_{T-1}\right)}{\beta}\left[\frac{\left(\frac{\theta\left(W_{T-1}-S_{T-1}\right)+\alpha}{\beta}\right)\phi\left(\frac{\theta\left(W_{T-1}-S_{T-1}\right)+\alpha}{\beta}\right)}{\Phi\left(\frac{\theta\left(W_{T-1}-S_{T-1}\right)+\alpha}{\beta}\right)}+\left\{ \frac{\phi\left(\frac{\theta\left(W_{T-1}-S_{T-1}\right)+\alpha}{\beta}\right)}{\Phi\left(\frac{\theta\left(W_{T-1}-S_{T-1}\right)+\alpha}{\beta}\right)}\right\} ^{2}\right]\right\}  & = & 0
\end{eqnarray*}

\end{doublespace}
\begin{proof}
\begin{doublespace}
Appendix \ref{subsec:Proof-of-Proposition-additional-complex}.
\end{doublespace}
\end{proof}
\begin{doublespace}
The simple rule established earlier, $S_{T-1}=W_{T-1}/2$, no longer
suffices here and we need numerical solutions at each stage of the
recursion.
\end{doublespace}
\begin{doublespace}

\subsection{\label{subsec:Linear-Percentage-Law}Linear Percentage Law of Price
Motion}
\end{doublespace}
\begin{doublespace}

\subsubsection{Simple Formulation}
\end{doublespace}

\begin{doublespace}
A law of motion based on an arithmetic random walk has a positive
probability of negative prices and it also implies that the Market
Impact has a permanent effect on the prices. The other issue is that
Market Impact as a percentage of the execution price is a decreasing
function of the price level, which is counter-factual. Hence we let
the execution price be comprised of two components, a no-impact price
$\widetilde{P_{t}}$, and the price impact $\Delta_{t}$.
\begin{equation}
P_{t}=\widetilde{P_{t}}+\Delta_{t}
\end{equation}
The no impact price is the price that would prevail in the absence
of any market impact. An observable proxy for this is the mid-point
of the bid/offer spread. This is the natural price process and we
set it to be a Geometric Brownian Motion.
\begin{equation}
\widetilde{P_{t}}=\widetilde{P}_{t-1}e^{B_{t}}
\end{equation}
\begin{equation}
B_{t}\sim N\left(\mu_{B},\sigma_{B}^{2}\right)\equiv\text{IID (Independent Identically Distributed) normal random variable}
\end{equation}
The price impact $\Delta_{t}$ captures the effect of trade size on
the transaction price including the portion of the bid/offer spread.
As a percentage of the no-impact price $\widetilde{P_{t}}$, it is
a linear function of the trade size $S_{t}$ and $X_{t}$ where as
before, $X_{t}$ is a proxy for private information or market conditions.
The parameters $\theta$ and $\gamma$ measure the sensitivity of
price impact to trade size and market conditions or private information.
\begin{equation}
\Delta_{t}=\left(\theta S_{t}+\gamma X_{t}\right)\widetilde{P}_{t}
\end{equation}
\begin{equation}
X_{t}=\rho X_{t-1}+\eta_{t}\;,\;\rho\in\left(-1,1\right)
\end{equation}
\begin{equation}
\eta_{t}\sim N\left(0,\sigma_{\eta}^{2}\right)\equiv\text{Zero Mean IID (Independent Identically Distributed) random shock or white noise}
\end{equation}
The optimization problem and Bellman equation can be written as,
\begin{equation}
\underset{\left\{ S_{t}\right\} }{\min}\:E_{1}\left[\sum_{t=1}^{T}\left\{ \max\left[\left(P_{t}-P_{t-1}\right),0\right]S_{t}\right\} \right]
\end{equation}
\begin{equation}
\sum_{t=1}^{T}S_{t}=\bar{S}\;,S_{t}\geq0\;,W_{1}=\bar{S},\;W_{T+1}=0\;,\;W_{t}=W_{t-1}-S_{t-1}
\end{equation}
\begin{equation}
V_{t}\left(P_{t-1},X_{t-1},W_{t}\right)=\underset{\left\{ S_{t}\right\} }{\min}\:E_{t}\left[\max\left\{ \left(P_{t}-P_{t-1}\right),0\right\} S_{t}+V_{t+1}\left(P_{t},X_{t},W_{t+1}\right)\right]
\end{equation}
By starting at the end, (time $T$) we have, 
\begin{equation}
V_{T}\left(P_{T-1},X_{T-1},W_{T}\right)=\underset{\left\{ S_{T}\right\} }{\min}\:E_{T}\left[\max\left\{ \left(P_{T}-P_{T-1}\right),0\right\} S_{T}\right]
\end{equation}
Since $W_{T+1}$ is zero, we have the optimal trade size, $S_{T}^{*}=W_{T}$
and an expression for $V_{T}$ as,
\begin{equation}
V_{T}\left(P_{T-1},X_{T-1},W_{T}\right)=E_{T}\left[\max\left\{ \left(\widetilde{P}_{T}\left(1+\theta W_{T}+\gamma X_{T}\right)-P_{T-1}\right),0\right\} W_{T}\right]
\end{equation}

This involves a normal log-normal mixture and solutions are known
for handling this distribution under certain circumstances (Clark
1973; Tauchen \& Pitts 1983 ; Yang 2008).
\end{doublespace}
\begin{prop}
\begin{doublespace}
\label{The-value-function-linear-percentage}The value function is
of the form, $E\left[\left.Y_{2}\right|Y_{2}>0\right]$ where, 

$Y_{2}=\left(\widetilde{P}_{T-1}W_{T}e^{B_{T}}+\theta W_{T}^{2}\widetilde{P}_{T-1}e^{B_{T}}+\gamma\rho X_{T-1}\widetilde{P}_{T-1}W_{T}e^{B_{T}}+\gamma\widetilde{P}_{T-1}W_{T}e^{B_{T}}\eta_{T}-W_{T}P_{T-1}\right)$.
This can be simplified further to,

\[
E\left[\left.\left(e^{X}Y+k\right)\right|\left(e^{X}Y+k\right)>0\right]
\]

\begin{eqnarray*}
 & = & k+e^{\left(\mu_{X}+\frac{1}{2}\sigma_{X}^{2}\right)}\left[\left\{ \frac{\Phi\left(\frac{\mu_{X}+\sigma_{X}^{2}}{\sigma_{X}}\right)}{\Phi\left(\frac{\mu_{X}}{\sigma_{X}}\right)}\right\} \left\{ \mu_{Y}\left[\Phi\left(-\left[\frac{k+\mu_{Y}}{\sigma_{Y}}\right]\right)-\Phi\left(-\frac{\mu_{Y}}{\sigma_{Y}}\right)\right]-\frac{\sigma_{Y}}{\sqrt{2\pi}}\left[e^{-\frac{1}{2}\left(\frac{k+\mu_{Y}}{\sigma_{Y}}\right)^{2}}-e^{-\frac{1}{2}\left(\frac{\mu_{Y}}{\sigma_{Y}}\right)^{2}}\right]\right\} \right.\\
 &  & +\left.\left\{ \frac{1-\Phi\left(\frac{\mu_{X}+\sigma_{X}^{2}}{\sigma_{X}}\right)}{1-\Phi\left(\frac{\mu_{X}}{\sigma_{X}}\right)}\right\} \left\{ \mu_{Y}\left[1-\Phi\left(-\left[\frac{k+\mu_{Y}}{\sigma_{Y}}\right]\right)\right]+\frac{\sigma_{Y}}{\sqrt{2\pi}}\left[e^{-\frac{1}{2}\left(\frac{k+\mu_{Y}}{\sigma_{Y}}\right)^{2}}\right]\right\} \right]
\end{eqnarray*}
\[
\text{Here, }X\sim N\left(\mu_{X},\sigma_{X}^{2}\right);Y\sim N\left(\mu_{Y},\sigma_{Y}^{2}\right);X\;\text{and }Y\text{ are independent. Also, }k<0
\]
\end{doublespace}
\end{prop}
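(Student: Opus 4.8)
The plan is to advance the recursion by one step, collapse the terminal value function to a single expectation of a linear functional of a normal log-normal mixture, and then evaluate that expectation by an exponential change of measure for the log-normal factor together with the standard truncated-Gaussian moment identities.

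First I would start from the terminal Bellman equation already recorded above together with the optimal trade $S_{T}^{*}=W_{T}$, substitute the laws of motion $\widetilde{P}_{T}=\widetilde{P}_{T-1}e^{B_{T}}$ and $X_{T}=\rho X_{T-1}+\eta_{T}$, expand the bracket, and pull the time-$(T-1)$-measurable factor $W_{T}$ inside, obtaining
\[
V_{T}\left(P_{T-1},X_{T-1},W_{T}\right)=E_{T}\left[\max\left(Y_{2},0\right)\right]
\]
with $Y_{2}$ exactly the random variable written in the statement (the conditional form $E\left[Y_{2}\mid Y_{2}>0\right]$ recorded there being this same quantity up to the factor $P\left(Y_{2}>0\right)$); this part is pure algebra. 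The structural observation is then that $Y_{2}=e^{X}Y+k$, where $X:=B_{T}\sim N\left(\mu_{B},\sigma_{B}^{2}\right)$ takes the role of $N\left(\mu_{X},\sigma_{X}^{2}\right)$, the variable $Y:=\widetilde{P}_{T-1}W_{T}\left(1+\theta W_{T}+\gamma\rho X_{T-1}\right)+\gamma\widetilde{P}_{T-1}W_{T}\eta_{T}$ is Gaussian with mean $\mu_{Y}=\widetilde{P}_{T-1}W_{T}\left(1+\theta W_{T}+\gamma\rho X_{T-1}\right)$ and variance $\sigma_{Y}^{2}=\gamma^{2}\widetilde{P}_{T-1}^{2}W_{T}^{2}\sigma_{\eta}^{2}$, and $k:=-W_{T}P_{T-1}<0$; crucially $X$ and $Y$ are independent because $B_{T}\perp\eta_{T}$. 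This reduces the proposition to the general normal log-normal identity for $E\left[\left(e^{X}Y+k\right)\mid e^{X}Y+k>0\right]$.

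The heart of the argument is evaluating $E\left[e^{X}Y\mid e^{X}Y+k>0\right]$, since $E\left[\left(e^{X}Y+k\right)\mid e^{X}Y+k>0\right]=k+E\left[e^{X}Y\mid e^{X}Y+k>0\right]$, and here I would combine two classical tools. The first is the exponential-tilt identity, obtained by completing the square: for $X\sim N\left(\mu_{X},\sigma_{X}^{2}\right)$ and any threshold $a$,
\[
E\left[e^{X}\mathbf{1}_{\{X>a\}}\right]=e^{\mu_{X}+\frac{1}{2}\sigma_{X}^{2}}\,\Phi\!\left(\frac{\mu_{X}+\sigma_{X}^{2}-a}{\sigma_{X}}\right),
\]
which at $a=0$ yields the conditional means $E\left[e^{X}\mid X>0\right]$ and $E\left[e^{X}\mid X\le0\right]$, and hence the weight ratios $\Phi\!\left(\frac{\mu_{X}+\sigma_{X}^{2}}{\sigma_{X}}\right)/\Phi\!\left(\frac{\mu_{X}}{\sigma_{X}}\right)$ and $\left[1-\Phi\!\left(\frac{\mu_{X}+\sigma_{X}^{2}}{\sigma_{X}}\right)\right]/\left[1-\Phi\!\left(\frac{\mu_{X}}{\sigma_{X}}\right)\right]$ appearing in the target formula. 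The second is the truncated first-moment formula $E\left[Y\mathbf{1}_{\{Y>c\}}\right]=\mu_{Y}\Phi\!\left(\frac{\mu_{Y}-c}{\sigma_{Y}}\right)+\sigma_{Y}\phi\!\left(\frac{c-\mu_{Y}}{\sigma_{Y}}\right)$, which, evaluated at $c=0$ and $c=-k$ and combined with $\Phi(-z)=1-\Phi(z)$ and the evenness of $\phi$, produces precisely the two $\left(\mu_{Y},\sigma_{Y}\right)$-brackets in the display, namely $E\left[Y\mathbf{1}_{\{0<Y<-k\}}\right]$ and $E\left[Y\mathbf{1}_{\{Y>-k\}}\right]$. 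I would then split the event $\left\{e^{X}Y+k>0\right\}$ by the sign of $X$, equivalently by whether $e^{X}\gtrless 1$: since $Y>0$ is forced, the $Y$-side breaks into the two pieces $\left\{0<Y<-k\right\}$ and $\left\{Y>-k\right\}$ cut at the deterministic level $-k$, while within each sign-regime of $X$ one replaces the log-normal multiplier $e^{X}$ by its regime-conditional mean so that $X$ decouples from $Y$ --- this is the mechanism behind the ``known solutions under certain circumstances'' cited above (Clark 1973; Tauchen and Pitts 1983; Yang 2008). Independence then factors each contribution into a weight times a truncated $Y$-moment, and reassembling the pieces and restoring the $k$ term gives the stated closed form.

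The step I expect to be the main obstacle is precisely this last decomposition. Taken literally, the event $\left\{e^{X}>-k/Y\right\}$ leaves, after integrating out $X$, a factor $\Phi\!\left(\frac{\mu_{X}+\sigma_{X}^{2}-\ln(-k/Y)}{\sigma_{X}}\right)$ sitting inside the $Y$-expectation, and this $\ln Y$ dependence cannot be integrated against a Gaussian in elementary form; so the real work is in justifying the regime-split and conditional-mean substitution for the log-normal factor, confirming that the resulting weights behave as genuine (nonnegative, appropriately normalized) weights, and verifying through the reflection and evenness identities for $\Phi$ and $\phi$ that the two truncated $Y$-moments recombine into exactly the expression claimed. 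Everything else --- the one-step Bellman substitution, the identification $Y_{2}=e^{X}Y+k$, and the two Gaussian lemmas --- is routine.
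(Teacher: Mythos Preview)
Your plan is essentially the paper's own argument: reduce $Y_{2}$ to $e^{X}Y+k$ with $X=B_{T}$, $Y$ the Gaussian combination of $\eta_{T}$ and the $\mathcal{F}_{T-1}$--measurable constants, and $k=-W_{T}P_{T-1}$; then split the event $\{e^{X}Y>-k\}$ at the deterministic level $Y=-k$ (equivalently $e^{X}=1$), replace $e^{X}$ in each regime by its regime-conditional mean via the exponential tilt, and finish with truncated-normal first moments in $Y$. The two weight ratios and the two $(\mu_{Y},\sigma_{Y})$--brackets you write down are exactly the objects that appear in the paper's derivation.

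Your final paragraph is on point, and in fact it is the honest assessment of the paper's own proof rather than a defect peculiar to your plan. The exact event is $\{X>\ln(-k/Y)\}$, and after the exponential tilt the remaining $Y$-integral carries a factor $\Phi\!\left(\frac{\mu_{X}+\sigma_{X}^{2}-\ln(-k/Y)}{\sigma_{X}}\right)$ that cannot be eliminated in closed form. The paper handles this by the very regime-split you describe --- replacing the $Y$-dependent threshold $-k/Y$ by the constant $1$ on each side of $Y=-k$ --- and does not supply any further justification beyond the same normal log-normal mixture references you cite. So you have reproduced the paper's route and correctly located where the derivation passes from identity to approximation; there is no additional idea in the paper's proof that you are missing.
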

\begin{proof}
\begin{doublespace}
Appendix \ref{subsec:Proof-of-Proposition-linear-percentage}.
\end{doublespace}
\end{proof}
\begin{doublespace}
Clearly, the approach outlined in section \ref{sec:Numerical-Framework-for}
to use least squares to approximate the conditional expectation as
a function of the state variables at each stage can be easily applied.
We can also use other numerical techniques (Miranda \& Fackler 2002)
or approximations to the error function (Chiani, Dardari \& Simon
2003).
\end{doublespace}
\begin{doublespace}

\subsubsection{Complex Formulation}
\end{doublespace}

\begin{doublespace}
The optimization problem and Bellman equation for the complex case
can be written as,
\begin{equation}
\underset{\left\{ S_{t}\right\} }{\min}\:E_{1}\left[\sum_{t=1}^{T}\left\{ \max\left[\left(P_{t}-P_{t-1}\right),0\right]W_{t}\right\} \right]
\end{equation}
\begin{equation}
\sum_{t=1}^{T}S_{t}=\bar{S}\;,S_{t}\geq0\;,W_{1}=\bar{S},\;W_{T+1}=0\;,\;W_{t}=W_{t-1}-S_{t-1}
\end{equation}
\begin{equation}
V_{t}\left(P_{t-1},X_{t-1},W_{t}\right)=\underset{\left\{ S_{t}\right\} }{\min}\:E_{t}\left[\max\left\{ \left(P_{t}-P_{t-1}\right),0\right\} W_{t}+V_{t+1}\left(P_{t},X_{t},W_{t+1}\right)\right]
\end{equation}
By starting at the end, (time $T$) we have, 
\begin{equation}
V_{T}\left(P_{T-1},X_{T-1},W_{T}\right)=\underset{\left\{ S_{T}\right\} }{\min}\:E_{T}\left[\max\left\{ \left(P_{T}-P_{T-1}\right),0\right\} W_{T}\right]
\end{equation}
Since $W_{T+1}$ is zero, we have the optimal trade size, $S_{T}^{*}=W_{T}$
and an expression for $V_{T}$ can be arrived similar to the simple
formulation in Proposition \ref{The-value-function-linear-percentage}.
\end{doublespace}

\subsection{\label{subsec:Optimization-Linear-Percentage-Impact}Optimization
under the Linear Percentage Law of Price Motion}
\begin{enumerate}
\item For the numerical algorithm discussion in Section \ref{sec:Numerical-Framework-for},
the below example illustrates how to obtain the optimal executions
when we are minimizing the complex impact function under the linear
percentage benchmark law of price motion (Section \ref{subsec:Linear-Percentage-Law}).
Minimization using the simple impact function is simpler than this
and hence not considered here. We write the objective function as,
\begin{equation}
\underset{\left\{ S_{t}\right\} }{\min}\left[\sum_{t=1}^{t=T}\left\{ \max\left(P_{t}-P_{t-1},0\right)\left(\sum_{j=t}^{j=T}S_{j}\right)\right\} \right]
\end{equation}
Note that, $P_{t}=\widetilde{P}_{t}\left(1+\theta S_{t}+\gamma X_{t}\right)$
for the linear percentage law of price motion. Also, $P_{0}=\widetilde{P}_{0}$
since $S_{0}=0$ and we take $X_{0}=0$. Here, $\sum_{t=1}^{t=T}S_{t}=W_{1}$
; $S_{t},W_{1}\geq0$ and $\theta,\gamma\in R$, that is they are
real numbers. As an example, for $T=3$, the complete objective function
will be,
\begin{equation}
\underset{\left\{ S_{1},S_{2},S_{3}\right\} }{\min}\left[\max\left(P_{1}-P_{0},0\right)\left(S_{1}+S_{2}+S_{3}\right)+\max\left(P_{2}-P_{1},0\right)\left(S_{2}+S_{3}\right)+\max\left(P_{3}-P_{2},0\right)\left(S_{3}\right)\right]\label{eq:Complete-Linear-Percentage-T-3}
\end{equation}
\begin{align}
 & =\underset{\left\{ S_{1},S_{2},S_{3}\right\} }{\min}\left[\max\left(\widetilde{P}_{1}\left(1+\theta S_{1}+\gamma X_{1}\right)-P_{0},0\right)\left(S_{1}+S_{2}+S_{3}\right)\right.\\
 & +\max\left(\widetilde{P}_{2}\left(1+\theta S_{2}+\gamma X_{2}\right)-\widetilde{P}_{1}\left(1+\theta S_{1}+\gamma X_{1}\right),0\right)\left(S_{2}+S_{3}\right)\\
 & \left.+\max\left(\widetilde{P}_{3}\left(1+\theta S_{3}+\gamma X_{3}\right)-\widetilde{P}_{2}\left(1+\theta S_{2}+\gamma X_{2}\right),0\right)\left(S_{3}\right)\right]
\end{align}
\end{enumerate}
\begin{itemize}
\begin{doublespace}
\item For the last time period, $T=3$, the optimal number of shares, $S_{3}^{*}=W_{3}$.
Since this is the last time period we execute all the shares that
are remaining at the start of the time period, $W_{3}$.
\end{doublespace}
\item When we are at time period, $T=2$, we optimize $S_{2},S_{3}$ using
the Rsolnp library such that the following function is minimized,
\begin{align}
\underset{\left\{ S_{2},S_{3}\right\} }{\min}\left[\max\left(\widetilde{P}_{2}\left(1+\theta S_{2}+\gamma X_{2}\right)-P_{1},0\right)\left(S_{2}+S_{3}\right)\right.\\
\left.+\max\left(\widetilde{P}_{3}\left(1+\theta S_{3}+\gamma X_{3}\right)-\widetilde{P}_{2}\left(1+\theta S_{2}+\gamma X_{2}\right),0\right)\left(S_{3}\right)\right]
\end{align}
Here, $\sum_{t=2}^{t=3}S_{t}=W_{2}$ ; $W_{2}$ would have a different
value on each price path or for each row in our matrix and we need
to perform this optimization exercise on each simulation path.
\item When we are at time period, $T=1$, we optimize $S_{1,}S_{2},S_{3}$
using the Rsolnp library such that the following function is minimized.
This is the same as our complete optimization objective in Eq: \ref{eq:Complete-Linear-Percentage-T-3}.
\begin{equation}
\underset{\left\{ S_{1},S_{2},S_{3}\right\} }{\min}\left[\max\left(P_{1}-P_{0},0\right)\left(S_{1}+S_{2}+S_{3}\right)+\max\left(P_{2}-P_{1},0\right)\left(S_{2}+S_{3}\right)+\max\left(P_{3}-P_{2},0\right)\left(S_{3}\right)\right]
\end{equation}
Here, $\sum_{t=1}^{t=3}S_{t}=W_{1}$ ; $W_{1}$ is the total number
of shares we start with and it would have a different value on each
price path or for each row in our matrix. We need to perform this
optimization on each simulation path.
\end{itemize}
\begin{doublespace}

\subsection{\label{subsec:Introducing-Liquidity-Constraint}Including Liquidity
Constraints}
\end{doublespace}
\begin{doublespace}

\subsubsection{Simple Formulation }
\end{doublespace}

\begin{doublespace}
A practical limitation that arises when trading is the extent of liquidity
that is available at any point in time. This becomes a restriction
on the amount of shares tradable in any given interval. Volume can
be observed and estimated with a reasonable degree of accuracy. Hence,
any measure linking volume to trading costs would be a very practical
device. There is a voluminous literature that derives theoretical
models and looks at the empirical relationship between volume and
prices (Karpoff 1986; 1987; Gallant, Rossi \& Tauchen 1992; Campbell,
Grossman \& Wang 1993; Wang 1994). We fit a specification similar
to the one in (Campbell, Grossman \& Wang 1993) wherein the price
movements can arise due to changes in future cash flows and investor
preferences or the risk aversion. The intuition for this would be
that a low return due to a price drop could be caused by an increase
in the risk aversion or bad news about future cash flows. Changes
in risk aversion cause trading volume to increase while news that
is public will already have been impounded in the price and hence
will not cause additional trading. Low returns followed by high volume
are due to increased risk aversion while low returns and low volume
are due to public knowledge of a low level of expectation of future
returns. As risk aversion increases, the group of investor still willing
to hold the stock require a greater return leading to higher future
expected returns. Bad news about future cash flows leads to lower
expected returns. This is captured as an inverse relation between
auto-correlation of returns and trading volume. The simplification
we employ combines the two sources of price changes into one, since
what can be observed is only the price return. We note that this can
be viewed as an extension of the law of price motion with an additional
source of uncertainty. Here, $O_{t}$ is the total volume traded (market
volume) in the interval $t$. The coefficient $\alpha$ can be positive
or negative, $\gamma$ is positive and $\theta$ continues to be positive.
\begin{equation}
\underset{\left\{ S_{t}\right\} }{\min}\:E_{1}\left[\sum_{t=1}^{T}\left\{ \max\left[\left(P_{t}-P_{t-1}\right),0\right]S_{t}\right\} \right]
\end{equation}
\begin{equation}
\sum_{t=1}^{T}S_{t}=\bar{S}\;,S_{t}\geq0\;,W_{1}=\bar{S},\;W_{T+1}=0\;,\;W_{t}=W_{t-1}-S_{t-1}
\end{equation}
\begin{equation}
P_{t}=\left(\alpha+1\right)P_{t-1}+\theta S_{t}P_{t-1}-\gamma\left(O_{t}-S_{t}\right)P_{t-1}+\varepsilon_{t},\;O_{t}\geq S_{t},\;\beta,\theta>0\;,\alpha\in\left(-\infty,\infty\right)\:,E\left[\varepsilon_{t}\left|S_{t},P_{t-1}\right.\right]=0
\end{equation}
\begin{equation}
O_{t}=\rho O_{t-1}+\eta_{t}\;,\;\rho\in\left(-1,1\right)\equiv\text{AR\ensuremath{\left(1\right)} Process}
\end{equation}
\begin{equation}
\varepsilon_{t}\sim N\left(0,\sigma_{\varepsilon}^{2}\right)\equiv\text{Zero Mean IID (Independent Identically Distributed) random shock or white noise}
\end{equation}
\begin{equation}
\eta_{t}\sim N\left(0,\sigma_{\eta}^{2}\right)\equiv\text{Zero Mean IID (Independent Identically Distributed) random shock or white noise}
\end{equation}
\begin{equation}
V_{t}\left(P_{t-1},O_{t-1},W_{t}\right)=\underset{\left\{ S_{t}\right\} }{\min}\:E_{t}\left[\max\left\{ \left(P_{t}-P_{t-1}\right),0\right\} S_{t}+V_{t+1}\left(P_{t},O_{t},W_{t+1}\right)\right]
\end{equation}
By starting at the end, (time $T$) we have,
\begin{equation}
V_{T}\left(P_{T-1},O_{T-1},W_{T}\right)=\underset{\left\{ S_{T}\right\} }{\min}\:E_{T}\left[\max\left\{ \left(P_{T}-P_{T-1}\right),0\right\} S_{T}\right]
\end{equation}
Since $W_{T+1}$ is zero, we have the optimal trade size, $S_{T}^{*}=W_{T}$
and an expression for $V_{T}$ as,
\end{doublespace}
\begin{prop}
\begin{doublespace}
\label{The-value-function-liquidity-constraints}The value functions
are of the form, $E\left[\left.Y\right|Y>0\right]$ where, 

$Y=\left(\alpha P_{T-1}W_{T}+\beta W_{T}^{2}P_{T-1}-\gamma\rho O_{T-1}W_{T}P_{T-1}-\gamma W_{T}P_{T-1}\eta_{T}+W_{T}\varepsilon_{T}\right)$.
For the last and last but one time periods, these can be simplified
further to,
\[
V_{T}\left(P_{T-1},O_{T-1},W_{T}\right)=\left(\sqrt{\gamma^{2}P_{T-1}^{2}\sigma_{\eta}^{2}+\sigma_{\varepsilon}^{2}}\;\right)W_{T}\psi\left(\xi W_{T}\right)\;,\;\xi W_{T}=\left(\frac{\alpha P_{T-1}+\beta W_{T}P_{T-1}-\gamma\rho O_{T-1}P_{T-1}}{\sqrt{\gamma^{2}P_{T-1}^{2}\sigma_{\eta}^{2}+\sigma_{\varepsilon}^{2}}}\right)
\]

and
\begin{eqnarray*}
V_{T-1}\left(P_{T-2},O_{T-2},W_{T-1}\right)= & \underset{\left\{ S_{T-1}\right\} }{\min} & \:E_{T-1}\left\{ \vphantom{\left(\frac{\frac{\alpha P_{T-2}}{\sqrt{\sigma_{\varepsilon}^{2}}}}{\frac{\alpha P_{T-2}}{\sqrt{\sigma_{\varepsilon}^{2}}}}\right)}S_{T-1}\left(\sqrt{\gamma^{2}P_{T-2}^{2}\sigma_{\eta}^{2}+\sigma_{\varepsilon}^{2}}\;\right)\right.
\end{eqnarray*}
\[
\left[\left(\frac{\alpha P_{T-2}+\beta S_{T-1}P_{T-2}-\gamma\rho O_{T-2}P_{T-2}}{\sqrt{\gamma^{2}P_{T-2}^{2}\sigma_{\eta}^{2}+\sigma_{\varepsilon}^{2}}}\right)+\frac{\phi\left(\frac{\alpha P_{T-2}+\beta S_{T-1}P_{T-2}-\gamma\rho O_{T-2}P_{T-2}}{\sqrt{\gamma^{2}P_{T-2}^{2}\sigma_{\eta}^{2}+\sigma_{\varepsilon}^{2}}}\right)}{\Phi\left(\frac{\alpha P_{T-2}+\beta S_{T-1}P_{T-2}-\gamma\rho O_{T-2}P_{T-2}}{\sqrt{\gamma^{2}P_{T-2}^{2}\sigma_{\eta}^{2}+\sigma_{\varepsilon}^{2}}}\right)}\right]
\]
\[
+\left(W_{T-1}-S_{T-1}\right)\left(\sqrt{\gamma^{2}\left\{ P_{T-2}\left(\alpha+1+\beta S_{T-1}-\gamma\rho O_{T-2}-\gamma\eta_{T-1}\right)+\varepsilon_{T-1}\right\} ^{2}\sigma_{\eta}^{2}+\sigma_{\varepsilon}^{2}}\;\right)
\]
\[
\left[\left(\frac{\left\{ P_{T-2}\left(\alpha+1+\beta S_{T-1}-\gamma\rho O_{T-2}-\gamma\eta_{T-1}\right)+\varepsilon_{T-1}\right\} \left\{ \alpha+\beta\left(W_{T-1}-S_{T-1}\right)-\gamma\rho^{2}O_{T-2}-\gamma\rho\eta_{T-1}\right\} }{\sqrt{\gamma^{2}\left\{ P_{T-2}\left(\alpha+1+\beta S_{T-1}-\gamma\rho O_{T-2}-\gamma\eta_{T-1}\right)+\varepsilon_{T-1}\right\} ^{2}\sigma_{\eta}^{2}+\sigma_{\varepsilon}^{2}}}\right)\right.
\]
\[
\left.+\left.\frac{\phi\left(\frac{\left\{ P_{T-2}\left(\alpha+1+\beta S_{T-1}-\gamma\rho O_{T-2}-\gamma\eta_{T-1}\right)+\varepsilon_{T-1}\right\} \left\{ \alpha+\beta\left(W_{T-1}-S_{T-1}\right)-\gamma\rho^{2}O_{T-2}-\gamma\rho\eta_{T-1}\right\} }{\sqrt{\gamma^{2}\left\{ P_{T-2}\left(\alpha+1+\beta S_{T-1}-\gamma\rho O_{T-2}-\gamma\eta_{T-1}\right)+\varepsilon_{T-1}\right\} ^{2}\sigma_{\eta}^{2}+\sigma_{\varepsilon}^{2}}}\right)}{\Phi\left(\frac{\left\{ P_{T-2}\left(\alpha+1+\beta S_{T-1}-\gamma\rho O_{T-2}-\gamma\eta_{T-1}\right)+\varepsilon_{T-1}\right\} \left\{ \alpha+\beta\left(W_{T-1}-S_{T-1}\right)-\gamma\rho^{2}O_{T-2}-\gamma\rho\eta_{T-1}\right\} }{\sqrt{\gamma^{2}\left\{ P_{T-2}\left(\alpha+1+\beta S_{T-1}-\gamma\rho O_{T-2}-\gamma\eta_{T-1}\right)+\varepsilon_{T-1}\right\} ^{2}\sigma_{\eta}^{2}+\sigma_{\varepsilon}^{2}}}\right)}\right]\right\} 
\]
Here, $\beta=\theta+\gamma$,
\end{doublespace}
\end{prop}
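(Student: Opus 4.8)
The plan is to run the Bellman recursion backwards, exactly as in Propositions~\ref{The-value-function-convexity} and~\ref{The-number-of-benchmark-simple}, the only new feature being that the law of motion is now multiplicative in $P_{t-1}$. First I would handle the terminal date $t=T$: since $W_{T+1}=0$ forces $S_{T}^{*}=W_{T}$, the Bellman equation collapses to $V_{T}(P_{T-1},O_{T-1},W_{T})=E_{T}\!\left[\max\{(P_{T}-P_{T-1}),0\}\,W_{T}\right]$. Substituting the law of motion with $S_{T}=W_{T}$ and $O_{T}=\rho O_{T-1}+\eta_{T}$, and writing $\beta=\theta+\gamma$, one finds $(P_{T}-P_{T-1})W_{T}=Y$ with $Y=\alpha P_{T-1}W_{T}+\beta W_{T}^{2}P_{T-1}-\gamma\rho O_{T-1}W_{T}P_{T-1}-\gamma W_{T}P_{T-1}\eta_{T}+W_{T}\varepsilon_{T}$, which is the claimed form.

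Second, conditional on the state $(P_{T-1},O_{T-1},W_{T})$, the variable $Y$ is an affine function of the two \emph{independent} Gaussian shocks $\eta_{T}$ and $\varepsilon_{T}$, hence $Y\sim N(\mu_{Y},s_{Y}^{2})$ with $\mu_{Y}=W_{T}\bigl(\alpha P_{T-1}+\beta W_{T}P_{T-1}-\gamma\rho O_{T-1}P_{T-1}\bigr)$ and $s_{Y}^{2}=W_{T}^{2}\bigl(\gamma^{2}P_{T-1}^{2}\sigma_{\eta}^{2}+\sigma_{\varepsilon}^{2}\bigr)$, the variances adding because $\eta_{T}\perp\varepsilon_{T}$. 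Invoking the truncated-normal (inverse Mills ratio) identity $E[Z\mid Z>0]=s\,\psi(m/s)$ for $Z\sim N(m,s^{2})$ --- the same identity used in the proofs of Propositions~\ref{The-value-function-convexity} and~\ref{The-number-of-benchmark-simple}, and consistent with the way $\psi$ is employed throughout --- and cancelling the common factor $W_{T}$ in the ratio $\mu_{Y}/s_{Y}$, I obtain $V_{T}=\bigl(\sqrt{\gamma^{2}P_{T-1}^{2}\sigma_{\eta}^{2}+\sigma_{\varepsilon}^{2}}\,\bigr)W_{T}\,\psi(\xi W_{T})$ with $\xi W_{T}=\mu_{Y}/s_{Y}$, as stated.

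Third, for the last-but-one date I would feed this $V_{T}$ back into $V_{T-1}(P_{T-2},O_{T-2},W_{T-1})=\min_{S_{T-1}}E_{T-1}\!\left[\max\{(P_{T-1}-P_{T-2}),0\}S_{T-1}+V_{T}(P_{T-1},O_{T-1},W_{T-1}-S_{T-1})\right]$, with $W_{T}=W_{T-1}-S_{T-1}$. The first summand is handled as at date $T$: $(P_{T-1}-P_{T-2})$ is affine Gaussian in $\eta_{T-1},\varepsilon_{T-1}$ with mean $P_{T-2}(\alpha+\beta S_{T-1}-\gamma\rho O_{T-2})$ and variance $\gamma^{2}P_{T-2}^{2}\sigma_{\eta}^{2}+\sigma_{\varepsilon}^{2}$, so the same identity gives the first bracket of the statement, which carries no residual randomness. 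For the second summand I would substitute $P_{T-1}=P_{T-2}(\alpha+1+\beta S_{T-1}-\gamma\rho O_{T-2}-\gamma\eta_{T-1})+\varepsilon_{T-1}$ and $O_{T-1}=\rho O_{T-2}+\eta_{T-1}$ into the closed form for $V_{T}$ just derived; the scaling factor becomes $(W_{T-1}-S_{T-1})\sqrt{\gamma^{2}P_{T-1}^{2}\sigma_{\eta}^{2}+\sigma_{\varepsilon}^{2}}$ and the $\psi$-argument becomes the displayed ratio, so that collecting the two summands inside $E_{T-1}\{\cdot\}$ reproduces the asserted expression.

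The hard part --- and the reason the $V_{T-1}$ expression is left inside the expectation and the minimization rather than resolved into elementary functions as in the arithmetic-random-walk cases --- is that $P_{T-1}$ now enters $V_{T}$ \emph{nonlinearly}: both through the volatility scaling $\sqrt{\gamma^{2}P_{T-1}^{2}\sigma_{\eta}^{2}+\sigma_{\varepsilon}^{2}}$ and through the argument of $\psi$, which is a ratio whose numerator and denominator are respectively quadratic and affine in $P_{T-1}$, itself a linear function of $\eta_{T-1}$ and $\varepsilon_{T-1}$. Consequently $E_{T-1}[V_{T}]$ admits no tractable closed form, and the outer expectation over $(\eta_{T-1},\varepsilon_{T-1})$ together with the minimization over $S_{T-1}$ must be carried out numerically, as the surrounding text indicates. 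One should also keep the feasibility constraint $O_{t}\geq S_{t}$ and the sign restrictions $\theta,\gamma>0$ in view when applying the truncation device, and note that the positivity truncation in $\max\{\cdot,0\}$ is resolved through $E[\cdot\mid\cdot>0]$, matching the convention under which $\psi$ is used in the earlier propositions.
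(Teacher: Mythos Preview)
Your proposal is correct and follows essentially the same route as the paper's own proof: impose $S_{T}^{*}=W_{T}$, substitute the law of motion with $\beta=\theta+\gamma$ to obtain the Gaussian $Y$, apply the truncated-normal identity $E[Y\mid Y>0]=\sigma\psi(\mu/\sigma)$ to get $V_{T}$, and then feed this into the Bellman equation at $T-1$, evaluating the immediate-cost term via the same identity while substituting $P_{T-1}$ and $O_{T-1}$ in terms of the $T-2$ state inside $V_{T}$. Your additional remark on why the outer expectation over $(\eta_{T-1},\varepsilon_{T-1})$ cannot be resolved in closed form is apt and goes slightly beyond what the paper spells out.
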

\begin{proof}
\begin{doublespace}
Appendix \ref{subsec:Proof-of-Proposition-liquidity-constraints}.
\end{doublespace}
\end{proof}
\begin{doublespace}
This requires numerical solutions at each stage of the recursion.
A point worth noting is that the simple rule from the earlier linear
cases, where the price impact is independent of both the prevailing
price and the size of the unexecuted order, no longer applies here.
The necessity of having to work with complicated expressions of the
sort above, highlights to us the inherent difficulty of making predictions
in a complex social system and also that our approach to estimating
Market Impact provides a realistic platform upon which further complications,
such as working with joint distributions of volume and price, can
be built. A key takeaway from this result is that volume can have
counter intuitive effects on the trading costs.
\end{doublespace}
\begin{doublespace}

\subsubsection{Complex Formulation}
\end{doublespace}

\begin{doublespace}
The optimization problem and Bellman equation for the complex case
can be written as,
\begin{equation}
\underset{\left\{ S_{t}\right\} }{\min}\:E_{1}\left[\sum_{t=1}^{T}\left\{ \max\left[\left(P_{t}-P_{t-1}\right),0\right]W_{t}\right\} \right]
\end{equation}
\begin{equation}
\sum_{t=1}^{T}S_{t}=\bar{S}\;,S_{t}\geq0\;,W_{1}=\bar{S},\;W_{T+1}=0\;,\;W_{t}=W_{t-1}-S_{t-1}
\end{equation}
\begin{equation}
P_{t}=\left(\alpha+1\right)P_{t-1}+\theta S_{t}P_{t-1}-\gamma\left(O_{t}-S_{t}\right)P_{t-1}+\varepsilon_{t},\;O_{t}\geq S_{t},\;\beta,\theta>0\;,\alpha\in\left(-\infty,\infty\right)\:,E\left[\varepsilon_{t}\left|S_{t},P_{t-1}\right.\right]=0
\end{equation}
\begin{equation}
O_{t}=\rho O_{t-1}+\eta_{t}\;,\;\rho\in\left(-1,1\right)\equiv\text{AR\ensuremath{\left(1\right)} Process}
\end{equation}
\begin{equation}
\varepsilon_{t}\sim N\left(0,\sigma_{\varepsilon}^{2}\right)\equiv\text{Zero Mean IID (Independent Identically Distributed) random shock or white noise}
\end{equation}
\begin{equation}
\eta_{t}\sim N\left(0,\sigma_{\eta}^{2}\right)\equiv\text{Zero Mean IID (Independent Identically Distributed) random shock or white noise}
\end{equation}
\begin{equation}
V_{t}\left(P_{t-1},O_{t-1},W_{t}\right)=\underset{\left\{ S_{t}\right\} }{\min}\:E_{t}\left[\max\left\{ \left(P_{t}-P_{t-1}\right),0\right\} S_{t}+V_{t+1}\left(P_{t},O_{t},W_{t+1}\right)\right]
\end{equation}
By starting at the end, (time $T$) we have,
\begin{equation}
V_{T}\left(P_{T-1},O_{T-1},W_{T}\right)=\underset{\left\{ S_{T}\right\} }{\min}\:E_{T}\left[\max\left\{ \left(P_{T}-P_{T-1}\right),0\right\} W_{T}\right]
\end{equation}
 Since $W_{T+1}$ is zero, we have the optimal trade size, $S_{T}^{*}=W_{T}$
and an expression for $V_{T}$ can be arrived similar to the simple
formulation in Proposition \ref{The-value-function-liquidity-constraints}.
\end{doublespace}
\begin{doublespace}

\subsection{Trading Costs and Price Spread Sandwich }
\end{doublespace}
\begin{doublespace}

\subsubsection{Simple Formulation\label{subsec:Simple-Formulation}}
\end{doublespace}

\begin{doublespace}
Another useful tool from a trading perspective would be a measure
that connects trading costs to the spread, which can be observed.
(Roll 1984; Stoll 1989) connect the stock price changes to the bid-offer
spread. The spread is determined due to order processing costs, adverse
information or inventory holdings costs. The covariance of price changes
are related to the covariance of the changes in spread and proportional
to the square of the spread, assuming constant spread. A modification
with time varying spread can be easily accommodated in the specifications
above with an additional source of uncertainty or the linear percentage
law of motion. Here, $Q_{t}$ is the spread at any point in time.
\begin{equation}
P_{t}=P_{t-1}+\theta S_{t}+\gamma Q_{t}+\varepsilon_{t}\;,\theta>0\:,E\left[\varepsilon_{t}\left|S_{t},P_{t-1}\right.\right]=0
\end{equation}
\begin{equation}
Q_{t}=\rho Q_{t-1}+\eta_{t}\;,\;\rho\in\left(-1,1\right)\equiv\text{AR\ensuremath{\left(1\right)} Process}
\end{equation}
\begin{equation}
\varepsilon_{t}\sim N\left(0,\sigma_{\varepsilon}^{2}\right)\equiv\text{Zero Mean IID (Independent Identically Distributed) random shock or white noise}
\end{equation}
\begin{equation}
\eta_{t}\sim N\left(0,\sigma_{\eta}^{2}\right)\equiv\text{Zero Mean IID (Independent Identically Distributed) random shock or white noise}
\end{equation}

\end{doublespace}
\begin{doublespace}

\subsubsection{Complex Formulation}
\end{doublespace}

\begin{doublespace}
This would be analogous to the case in \ref{subsec:Simple-Formulation}.
\end{doublespace}
\begin{doublespace}

\section{Conclusions and Possibilities for Future Research }
\end{doublespace}

\begin{doublespace}
We have developed a trading cost model using dynamic programming that
splits the overall price move into the market impact and market timing
components. Our models seek to separate the effect of one's actions
in a financial market trading scenario from the actions of other players.
The separation of total trading costs into the two components, one
of which is directly related to the actions of a participant holds
numerous lessons for dealing with complex systems, especially in the
social sciences, wherein reducing the complexity by splitting the
many sources of uncertainty can lead to better insights in the decision
process. Such a separation brings about a reduction in complexity
since participants can better plan their course of action based on
a greater focus on the set of controls and states that hold greater
significance for their destiny while being cognizant of the implications
of the actions of the rest of the participants.
\end{doublespace}

The above decomposition allows us deduce the zero sum game nature
of trading costs. In addition, we have developed a powerful numerical
technique that can be used under any law of motion of prices including
laws of motion with multiple sources of uncertainty. The starting
values we provide for the Rsolnp optimization call can reduce the
number of iterations it requires to find the optimal values. Hence,
a good extension to this work can be to find better starting values
for the optimal value at each time period, based on the innovations
and the other parameters.

\begin{doublespace}
To ensure that our model can be used under different situations, we
build upon the benchmark case and introduce more complex formulations
of the law of price motion, including a scenario that has multiple
sources of uncertainty and consider liquidity or volume constraints.
Relating trading costs to the spread is also easily accomplished.
Key improvements to the model and methodology would stem from adding
cases where volume and prices are not assumed to be independent. Distributions
of prices that are not normal and factor the downward skew in prices
might also provide more realistic estimates. Our model takes the prices
process as exogenous, interesting continuations can extend the separation
of impact and timing to models of the limit order book that endogenously
consider the evolution of prices. Again we stress the better insights
and understanding that results from using simpler models, but the
particulars of the securities being considered might prompt experimenting
with some of the more esoteric extensions. We have looked at only
discrete time formulations, extensions to continuous time might show
interesting theoretical behavior.

A practical way to use these models, would need to factor in the market
timing over the duration of trading. Hence, we would need to first
get an estimate of the market impact for different time intervals
and also calculate the corresponding market timings. Rather than have
a single number for the market timing for each impact estimate, it
would be more useful to have an upper bound and lower bound or the
maximum possible range of the market timing, for each particular time
duration. It can be shown that the market timing depends on the price
volatility and hence is a key time sensitive variable. Traders can
then make a decision regarding which combination of market impact
and timing they prefer, since they have more control over the market
impact, which is their vessel to navigate the turbulent seas, which
is the market timing.
\end{doublespace}
\begin{doublespace}

\section{Acknowledgements and End-notes}
\end{doublespace}
\begin{enumerate}
\begin{doublespace}
\item Numerous seminar participants, particularly at a few meetings of the
econometric society and various finance organizations, provided suggestions
to improve the paper and also discussed scenarios where the techniques
developed in the paper could be useful for electronic trading desks.
Dr. Yong Wang, Dr. Isabel Yan, Dr. Vikas Kakkar, Dr. Fred Kwan, Dr.
Costel Daniel Andonie, Dr. Guangwu Liu, Dr. Jeff Hong, Dr. Humphrey
Tung and Dr. Xu Han at the City University of Hong Kong provided encouragement
to explore and where possible apply cross disciplinary techniques.
The author would like to express his gratitude to Brad Hunt, Henry
Yegerman, Samuel Zou and Alex Gillula at Markit for many inputs during
the creation of this work. 
\end{doublespace}
\item The simulation data and related software can be made available upon
request. The views and opinions expressed in this article, along with
any mistakes, are mine alone and do not necessarily reflect the official
policy or position of either of my affiliations or any other agency.
\end{enumerate}
\begin{doublespace}

\section{References }
\end{doublespace}
\begin{enumerate}
\begin{doublespace}
\item Alfonsi, A., Fruth, A., \& Schied, A. (2010). Optimal execution strategies
in limit order books with general shape functions. Quantitative Finance,
10(2), 143-157.
\item Almgren, R., \& Chriss, N. (2001). Optimal execution of portfolio
transactions. Journal of Risk, 3, 5-40.
\item Almgren, R. F. (2003). Optimal execution with nonlinear impact functions
and trading-enhanced risk. Applied mathematical finance, 10(1), 1-18.
\item Almgren, R., Thum, C., Hauptmann, E., \& Li, H. (2005). Direct estimation
of equity market impact. Risk, 18, 5752.
\item Almgren, R., \& Lorenz, J. (2007). Adaptive arrival price. Trading,
2007(1), 59-66.
\item Avellaneda, M., \& Stoikov, S. (2008). High-frequency trading in a
limit order book. Quantitative Finance, 8(3), 217-224.
\item Bell, R. M., \& Cover, T. M. (1980). Competitive optimality of logarithmic
investment. Mathematics of Operations Research, 5(2), 161-166.
\item Bertsimas, D., \& Lo, A. W. (1998). Optimal control of execution costs.
Journal of Financial Markets, 1(1), 1-50.
\item Biais, B., Hillion, P., \& Spatt, C. (1995). An empirical analysis
of the limit order book and the order flow in the Paris Bourse. the
Journal of Finance, 50(5), 1655-1689.
\item Bodie, Z., \& Taggart, R. A. (1978). Future investment opportunities
and the value of the call provision on a bond. The Journal of Finance,
33(4), 1187-1200.
\item Bouchaud, J. P., Gefen, Y., Potters, M., \& Wyart, M. (2004). Fluctuations
and response in financial markets: the subtle nature of \textquoteleft random\textquoteright{}
price changes. Quantitative Finance, 4(2), 176-190.
\item Brown, G. W. (1951). Iterative solution of games by fictitious play.
Activity analysis of production and allocation, 13(1), 374-376.
\item Brunnermeier, M. K., \& Pedersen, L. H. (2005). Predatory trading.
The Journal of Finance, 60(4), 1825-1863.
\item Campbell, J. Y., Grossman, S. J., \& Wang, J. (1993). Trading Volume
and Serial Correlation in Stock Returns. The Quarterly Journal of
Economics, 108(4), 905-39.
\item Carlin, B. I., Lobo, M. S., \& Viswanathan, S. (2007). Episodic liquidity
crises: Cooperative and predatory trading. The Journal of Finance,
62(5), 2235-2274.
\item Chiani, M., Dardari, D., \& Simon, M. K. (2003). New exponential bounds
and approximations for the computation of error probability in fading
channels. Wireless Communications, IEEE Transactions on, 2(4), 840-845.
\item Chirinko, R. S., \& Wilson, D. J. (2008). State investment tax incentives:
A zero-sum game?. Journal of Public Economics, 92(12), 2362-2384.
\item Clark, P. K. (1973). A subordinated stochastic process model with
finite variance for speculative prices. Econometrica: journal of the
Econometric Society, 135-155.
\item Collins, B. M., \& Fabozzi, F. J. (1991). A methodology for measuring
transaction costs. Financial Analysts Journal, 47(2), 27-36.
\item Cont, R., Stoikov, S., \& Talreja, R. (2010). A stochastic model for
order book dynamics. Operations research, 58(3), 549-563.
\item Cont, R., Kukanov, A., \& Stoikov, S. (2014). The price impact of
order book events. Journal of financial econometrics, 12(1), 47-88.
\item Cont, R., \& Kukanov, A. (2017). Optimal order placement in limit
order markets. Quantitative Finance, 17(1), 21-39.
\item Crawford, V. P. (1974). Learning the optimal strategy in a zero-sum
game. Econometrica: Journal of the Econometric Society, 885-891.
\item Curato, G., Gatheral, J., \& Lillo, F. (2017). Optimal execution with
non-linear transient market impact. Quantitative Finance, 17(1), 41-54.
\item Du, B., Zhu, H., \& Zhao, J. (2016). Optimal execution in high-frequency
trading with Bayesian learning. Physica A: Statistical Mechanics and
its Applications.
\item Fama, E. F. (1970). Efficient capital markets: A review of theory
and empirical work. The journal of Finance, 25(2), 383-417.
\item Foster, F. D., \& Viswanathan, S. (1990). A theory of the interday
variations in volume, variance, and trading costs in securities markets.
The Review of Financial Studies, 3(4), 593-624.
\item Forsyth, P. A., Kennedy, J. S., Tse, S. T., \& Windcliff, H. (2012).
Optimal trade execution: a mean quadratic variation approach. Journal
of Economic Dynamics and Control, 36(12), 1971-1991.
\item Fruth, A., Schöneborn, T., \& Urusov, M. (2014). Optimal Trade Execution
and Price Manipulation in Order Books with Time-Varying Liquidity.
Mathematical Finance, 24(4), 651-695.
\item Gabaix, X., Gopikrishnan, P., Plerou, V., \& Stanley, H. E. (2006).
Institutional investors and stock market volatility. The Quarterly
Journal of Economics, 121(2), 461-504.
\item Gale, D., Kuhn, H. W., \& Tucker, A. W. (1951). Linear programming
and the theory of games. Activity analysis of production and allocation,
13, 317-335.
\item Gallant, A. R., Rossi, P. E., \& Tauchen, G. (1992). Stock prices
and volume. Review of Financial studies, 5(2), 199-242.
\item Gatheral, J., \& Schied, A. (2011). Optimal trade execution under
geometric Brownian motion in the Almgren and Chriss framework. International
Journal of Theoretical and Applied Finance, 14(03), 353-368.
\end{doublespace}
\item Ghalanos, A., Theussl, S., \& Ghalanos, M. A. (2012). Package \textquoteleft Rsolnp\textquoteright .
\begin{doublespace}
\item Gladwell, M. (2013). David and Goliath. Underdogs, Misfits, and the
Art of Battling Giants (New York: Little, Brown and Company).
\item Gujarati, D. N. (1995). Basic econometrics, 3rd. International Edition.
\item Guo, X., de Larrard, A., \& Ruan, Z. (2017). Optimal placement in
a limit order book: an analytical approach. Mathematics and Financial
Economics, 11(2), 189-213.
\item Guo, X., \& Zervos, M. (2015). Optimal execution with multiplicative
price impact. SIAM Journal on Financial Mathematics, 6(1), 281-306.
\item Hamadène, S. (2006). Mixed zero-sum stochastic differential game and
American game options. SIAM Journal on Control and Optimization, 45(2),
496-518.
\item Hamilton, J. D. (1994). Time series analysis (Vol. 2). Princeton university
press.
\item Hendershott, T., Jones, C. M., \& Menkveld, A. J. (2011). Does algorithmic
trading improve liquidity?. The Journal of Finance, 66(1), 1-33.
\end{doublespace}
\item Hill, C. W. (1990). Cooperation, opportunism, and the invisible hand:
Implications for transaction cost theory. Academy of management review,
15(3), 500-513.
\begin{doublespace}
\item Hill, J. M. (2006). Alpha as a net zero-sum game. The Journal of Portfolio
Management, 32(4), 24-32.
\item Ho, T., \& Stoll, H. R. (1980). On dealer markets under competition.
The Journal of Finance, 35(2), 259-267.
\item Ho, T., \& Stoll, H. R. (1981). Optimal dealer pricing under transactions
and return uncertainty. Journal of Financial economics, 9(1), 47-73.
\item Hopman, C. (2007). Do supply and demand drive stock prices?. Quantitative
Finance, 7(1), 37-53.
\item Huberman, G., \& Stanzl, W. (2004). Price Manipulation and Quasi-Arbitrage.
Econometrica, 72(4), 1247-1275.
\item Huberman, G., \& Stanzl, W. (2005). Optimal liquidity trading. Review
of Finance, 9(2), 165-200.
\item Jain, P. K. (2005). Financial market design and the equity premium:
Electronic versus floor trading. The Journal of Finance, 60(6), 2955-2985.
\item Karpoff, J. M. (1986). A theory of trading volume. The Journal of
Finance, 41(5), 1069-1087.
\item Karpoff, J. M. (1987). The relation between price changes and trading
volume: A survey. Journal of Financial and quantitative Analysis,
22(01), 109-126.
\item Kashyap, R. (2014). Dynamic Multi-Factor Bid\textendash Offer Adjustment
Model. The Journal of Trading, 9(3), 42-55.
\item Kashyap, R. (2015). A Tale of Two Consequences. The Journal of Trading,
10(4), 51-95.
\item Kashyap, R. (2016). Hong Kong - Shanghai Connect / Hong Kong - Beijing
Disconnect (?), Scaling the Great Wall of Chinese Securities Trading
Costs. The Journal of Trading, 11(3), 81-134.
\item Kato, T. (2014). An optimal execution problem with market impact.
Finance and Stochastics, 18(3), 695-732.
\item Klemperer, P. (2004). Auctions: theory and practice.
\item Kissell, R. (2006). The expanded implementation shortfall: Understanding
transaction cost components. The Journal of Trading, 1(3), 6-16.
\item Kissell, R., \& Malamut, R. (2006). Algorithmic decision-making framework.
The Journal of Trading, 1(1), 12-21.
\item Kyle, A. S. (1985). Continuous auctions and insider trading. Econometrica:
Journal of the Econometric Society, 1315-1335.
\item Laraki, R., \& Solan, E. (2005). The value of zero-sum stopping games
in continuous time. SIAM Journal on Control and Optimization, 43(5),
1913-1922.
\item Lillo, F., Farmer, J.D. \& Mantegna, R.N. (2003). Econophysics: Master
curve for price-impact function. Nature, 421(6919), 129\textendash 130.
\item Longstaff, F. A., \& Schwartz, E. S. (2001). Valuing American options
by simulation: a simple least-squares approach. Review of Financial
studies, 14(1), 113-147.
\item Mincer, J. A., \& Zarnowitz, V. (1969). The evaluation of economic
forecasts. In Economic Forecasts and Expectations: Analysis of Forecasting
Behavior and Performance (pp. 3-46). NBER.
\item Miranda, M. J., \& Fackler, P. L. (2002). Applied Computational Economics
and Finance.
\item Nemirovski, A., Juditsky, A., Lan, G. \& Shapiro, A. (2009). Robust
stochastic approximation approach to stochastic programming. SIAM
Journal of Optimization, 19(4), 1574-1609.
\item Obizhaeva, A. A., \& Wang, J. (2013). Optimal trading strategy and
supply/demand dynamics. Journal of Financial Markets, 16(1), 1-32.
\item Perold, A. F. (1988). The implementation shortfall: Paper versus reality.
The Journal of Portfolio Management, 14(3), 4-9.
\item Potters, M., \& Bouchaud, J. P. (2003). More statistical properties
of order books and price impact. Physica A: Statistical Mechanics
and its Applications, 324(1), 133-140.
\item Predoiu, S., Shaikhet, G., \& Shreve, S. (2011). Optimal execution
in a general one-sided limit-order book. SIAM Journal on Financial
Mathematics, 2(1), 183-212.
\item Rapoport, A. (1973). Two-person game theory. Courier Corporation.
\item Robbins, H., \& Monro, S. (1951). A Stochastic Approximation Method.
The Annals of Mathematical Statistics, 22(3), 400-407. 
\item Roll, R. (1984). A simple implicit measure of the effective bid-ask
spread in an efficient market. The Journal of Finance, 39(4), 1127-1139.
\item Schied, A., \& Schöneborn, T. (2009). Risk aversion and the dynamics
of optimal liquidation strategies in illiquid markets. Finance and
Stochastics, 13(2), 181-204.
\item Schied, A., Schöneborn, T., \& Tehranchi, M. (2010). Optimal basket
liquidation for CARA investors is deterministic. Applied Mathematical
Finance, 17(6), 471-489.
\item Schied, A. (2013). Robust strategies for optimal order execution in
the Almgren\textendash Chriss framework. Applied Mathematical Finance,
20(3), 264-286.
\item Shreve, S. E. (1988). An introduction to singular stochastic control.
In Stochastic differential systems, stochastic control theory and
applications (pp. 513-528). Springer, New York, NY.
\end{doublespace}
\item Stokey, N. L., Lucas, R. E., \& Prescott, E. C. (1989). Recursive
methods in dynamic economics. Cambridge, MA: Harvard University.
\begin{doublespace}
\item Stoll, H. R. (1989). Inferring the components of the bid-ask spread:
theory and empirical tests. The Journal of Finance, 44(1), 115-134.
\end{doublespace}
\item Swami, B. (1983). Bhagavad-Gita as it is. The Bhaktivedanta Book Trust.
Mumbai, India.
\begin{doublespace}
\item Tauchen, G. E., \& Pitts, M. (1983). The Price Variability-Volume
Relationship on Speculative Markets. Econometrica, 51(2), 485-505.
\item Treynor, J. L. (1981). What does it take to win the trading game?.
Financial Analysts Journal, 37(1), 55-60.
\item Treynor, J. L. (1994). The invisible costs of trading. The Journal
of Portfolio Management, 21(1), 71-78.
\item Turnbull, S. M. (1987). Swaps: a zero sum game?. Financial Management,
16(1), 15-21.
\item Venkataraman, K. (2001). Automated versus floor trading: An analysis
of execution costs on the Paris and New York exchanges. The Journal
of Finance, 56(4), 1445-1485.
\item Von Neumann, J., \& Morgenstern, O. (1953). Theory of games and economic
behavior. Princeton university press.
\item Von Neumann, J. (1954). A numerical method to determine optimum strategy.
Naval Research Logistics Quarterly, 1(2), 109-115.
\item Wang, J. (1994). A model of competitive stock trading volume. Journal
of political Economy, 127-168.
\item Weber, P., \& Rosenow, B. (2005). Order book approach to price impact.
Quantitative Finance, 5(4), 357-364.
\item Yang, M. (2008). Normal log-normal mixture, leptokurtosis and skewness.
Applied Economics Letters, 15(9), 737-742.
\end{doublespace}
\item Ye, Y. (1988). Interior algorithms for linear, quadratic, and linearly
constrained convex programming.
\begin{doublespace}
\item Yegerman, H. \& Gillula, A. (2014). The Use and Abuse of Implementation
Shortfall. Markit Working Paper.
\end{doublespace}
\end{enumerate}
\pagebreak{}

\part*{{\LARGE{}Supplementary Online Material }}

\section{\label{sec:Appendix-of-Figures}Appendix of Figures}

\begin{doublespace}
\begin{figure}[H]
\includegraphics[width=18.5cm]{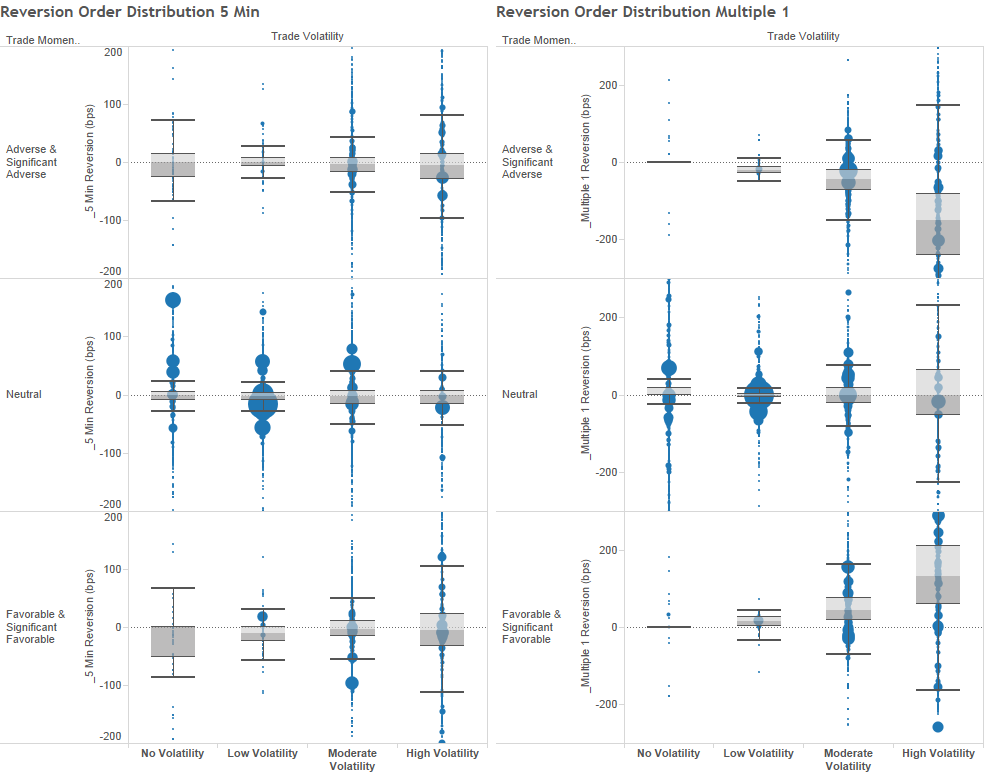}

\caption{\label{fig:Reversion-Distributions-Shorter-Horizon}Reversion Distributions
by Momentum and Volatility Environments - Shorter Horizon}
\end{figure}

\begin{figure}[H]
\includegraphics[width=18.5cm]{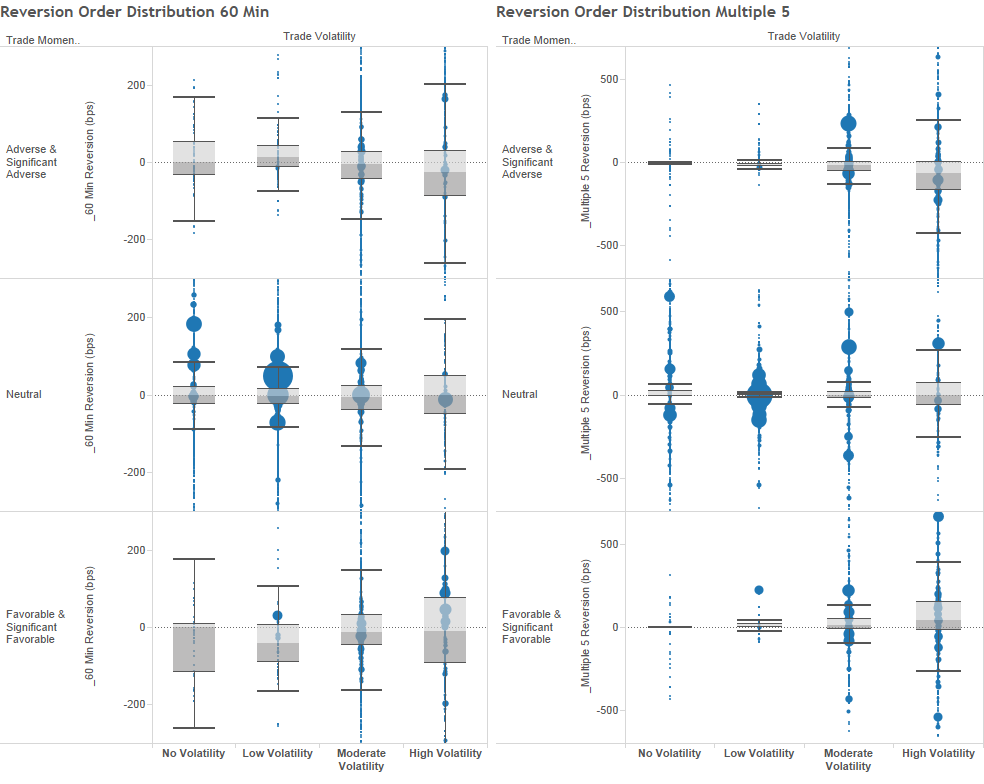}

\caption{\label{fig:Reversion-Distributions-Longer-Horizon}Reversion Distributions
by Momentum and Volatility Environments - Longer Horizon}
\end{figure}

\begin{figure}[H]
\includegraphics[width=12cm]{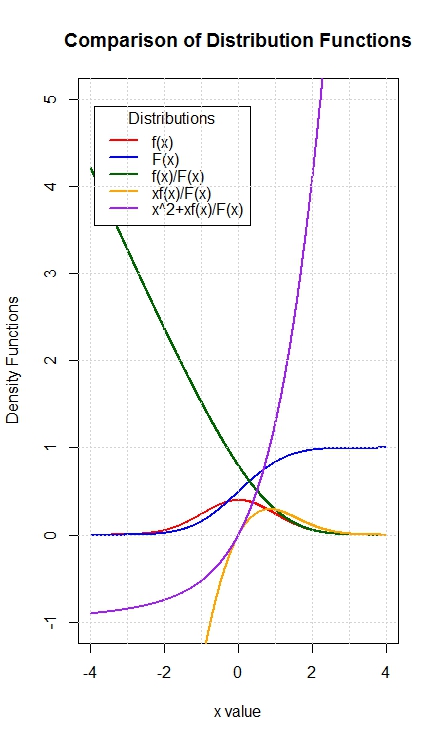}

\caption{\label{fig:Convexity-of-Distribution}Convexity of Distribution Functions}
\end{figure}

\end{doublespace}

\begin{figure}[H]
\includegraphics[width=17.5cm]{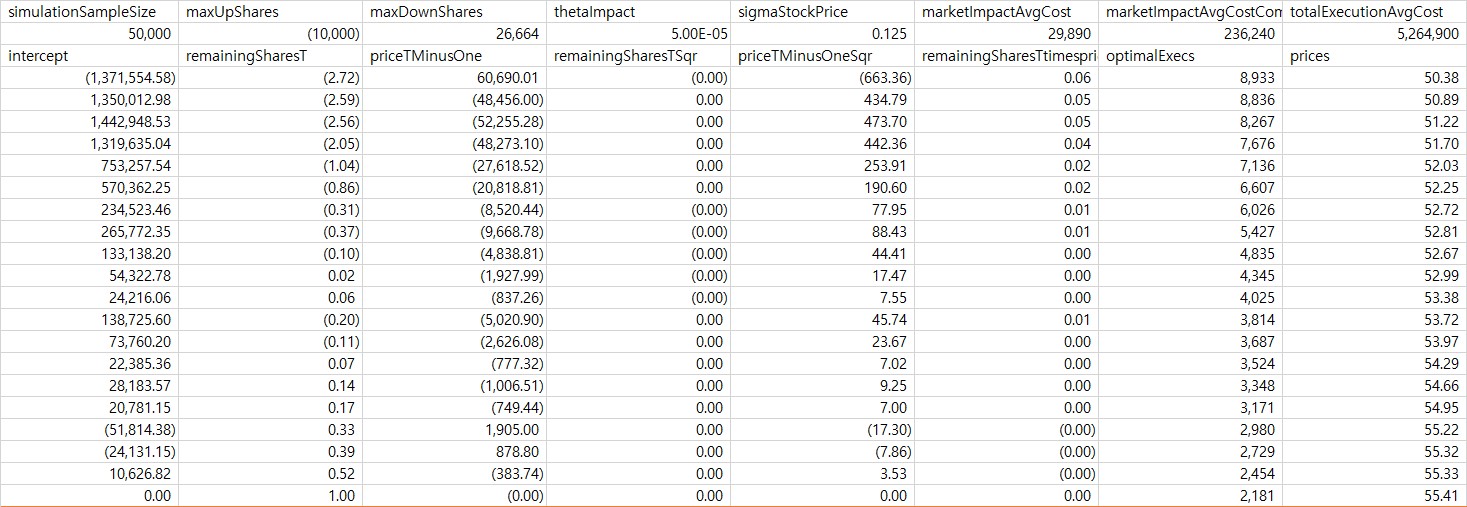}

\caption{\label{fig:Regression-Co-efficients-Complex}Complex Regression Coefficients
and Optimal Executions}
\end{figure}

\begin{figure}[H]
\includegraphics[width=17.5cm]{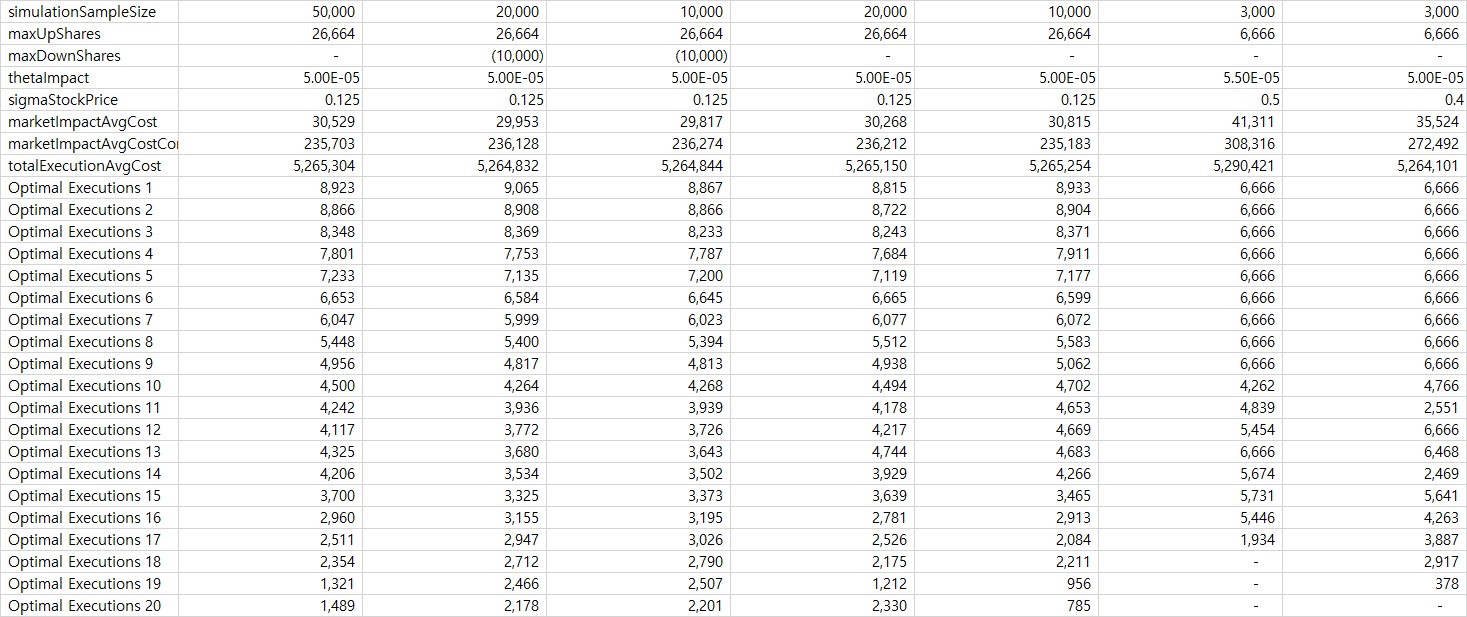}

\caption{\label{fig:Optimal Executions-Complex}Complex Optimal Executions
for Different Parameters}
\end{figure}
\begin{figure}[H]
\includegraphics[width=8cm]{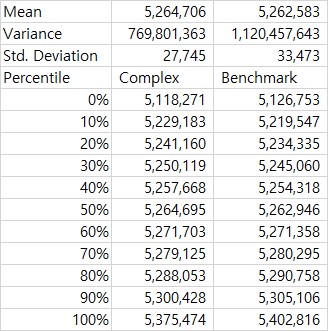}

\caption{\label{fig:Mean Variance Comparison}Mean / Variance / Percentile
Comparison of Total Execution Costs}
\end{figure}

\begin{figure}[H]
\includegraphics[width=15cm]{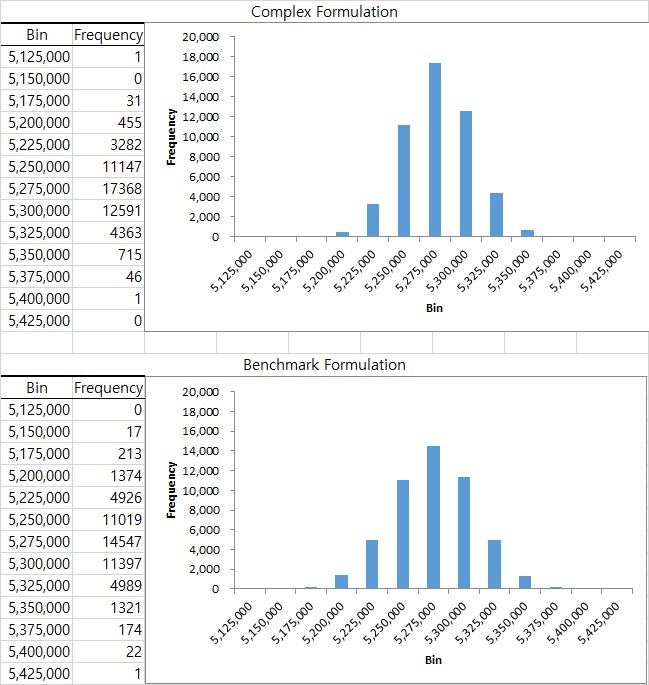}

\caption{\label{fig:Histogram Execution Costs}Histogram of Total Execution
Costs}
\end{figure}
\begin{figure}[H]
\includegraphics[width=17.5cm]{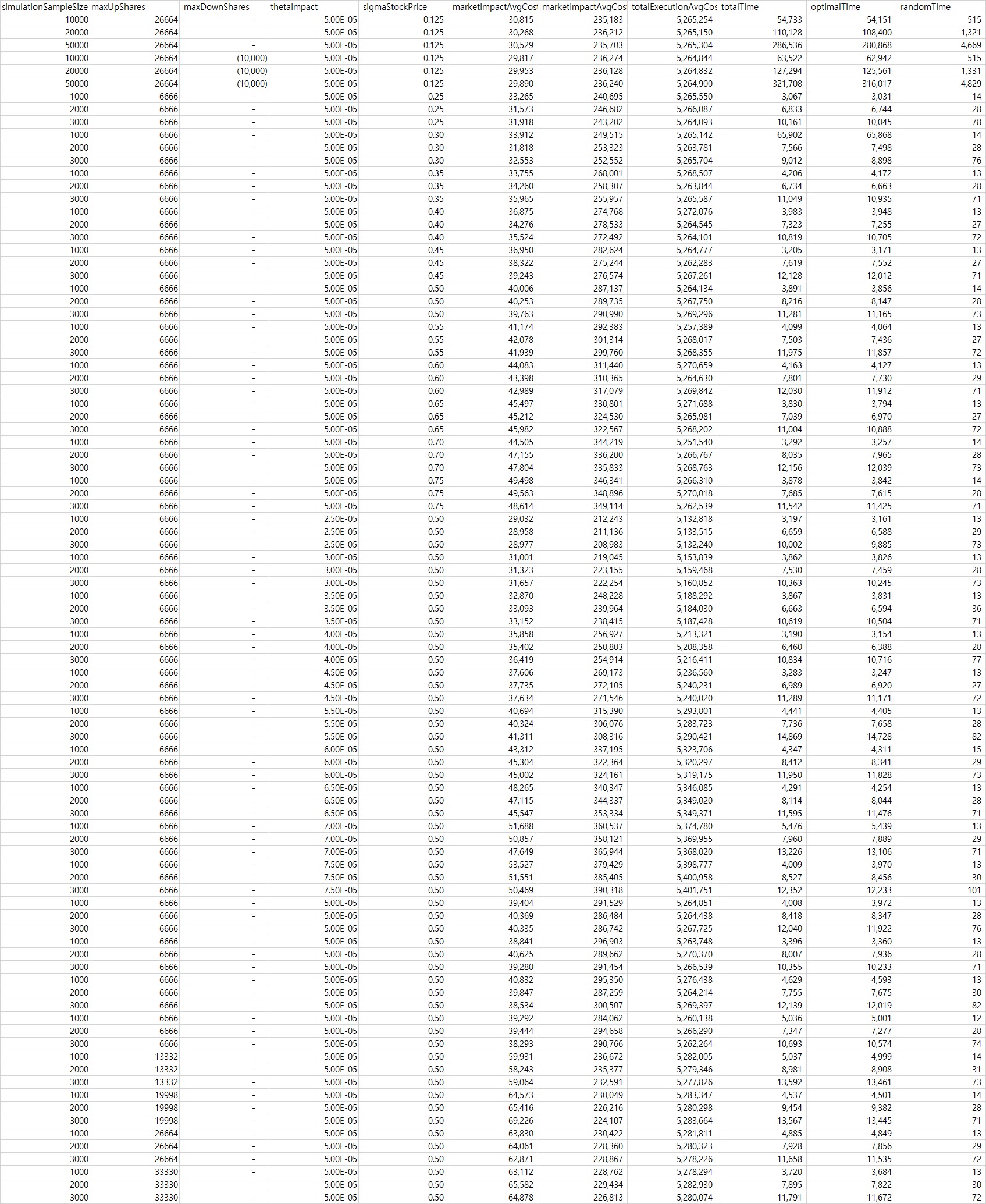}

\caption{\label{fig:Complex Executions-Costs}Complex Executions Costs for
Different Parameters}
\end{figure}

\begin{figure}[H]
\includegraphics[width=17.5cm]{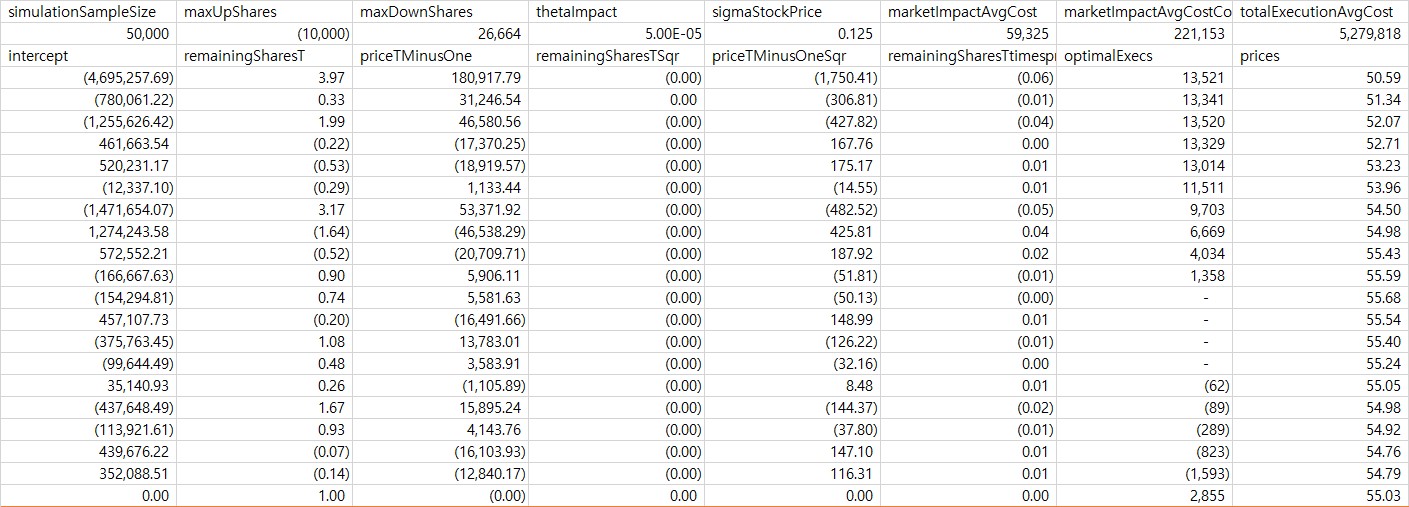}\caption{\label{fig:Regression-Co-efficients-for-Simple}Simple Regression
Coefficients and Optimal Executions}
\end{figure}

\begin{figure}[H]
\includegraphics[width=17.5cm]{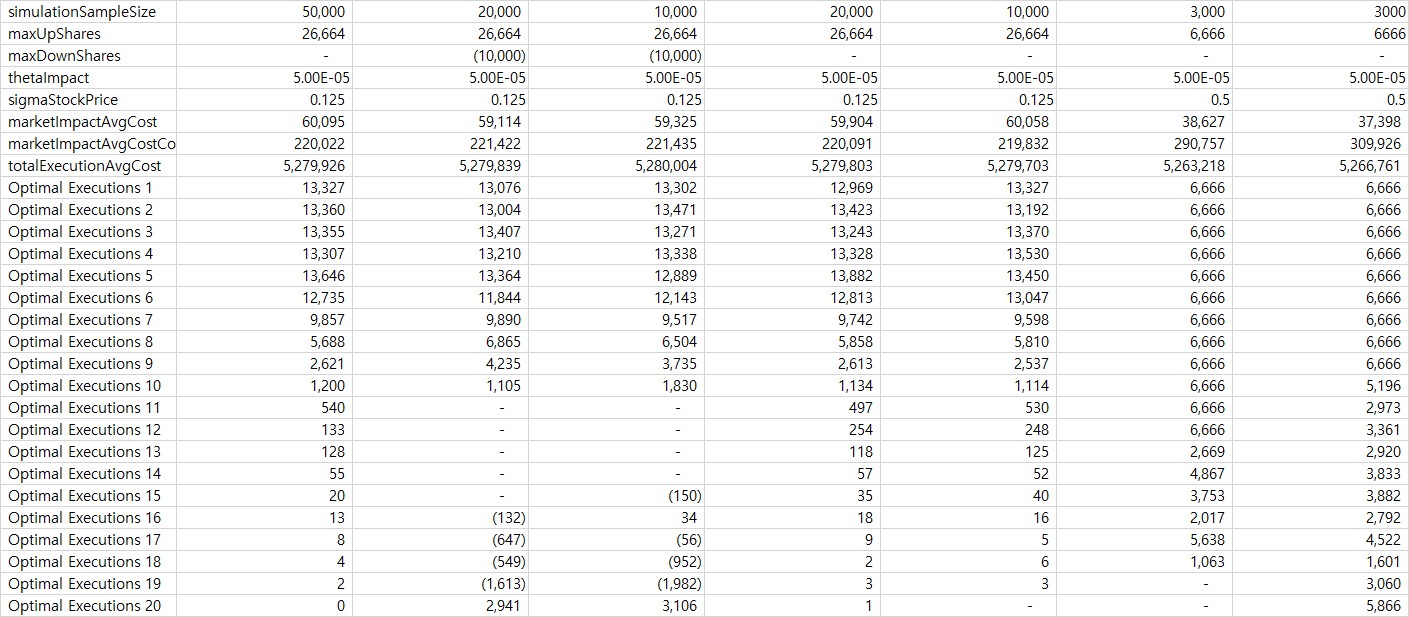}

\caption{\label{fig:Optimal Executions-Simple}Simple Optimal Executions for
Different Parameters}
\end{figure}
\begin{figure}[H]
\includegraphics[width=17.5cm]{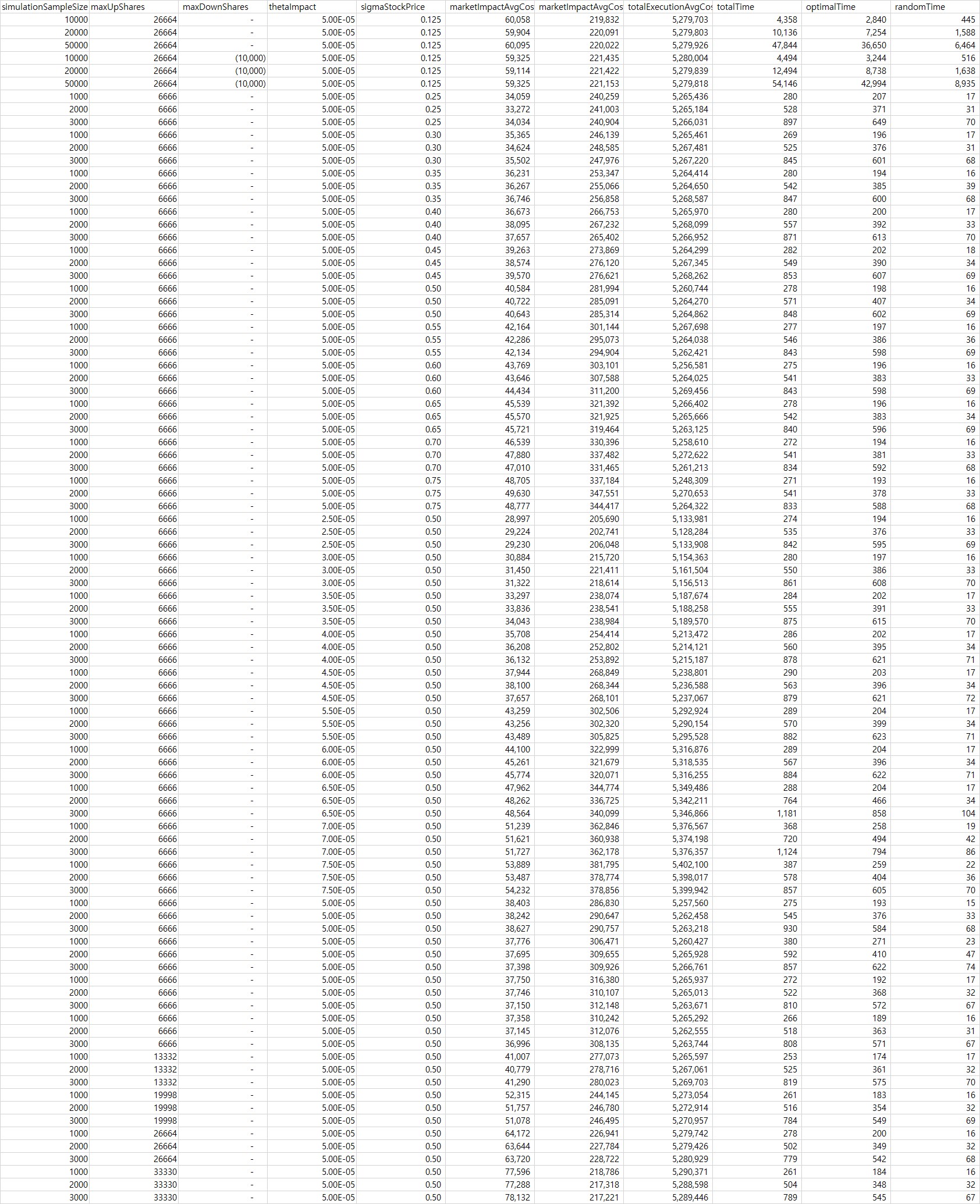}\caption{\label{fig:Simple-Executions-Costs}Simple Executions Costs for Different
Parameters}
\end{figure}

\begin{figure}[H]
\includegraphics[width=17.5cm]{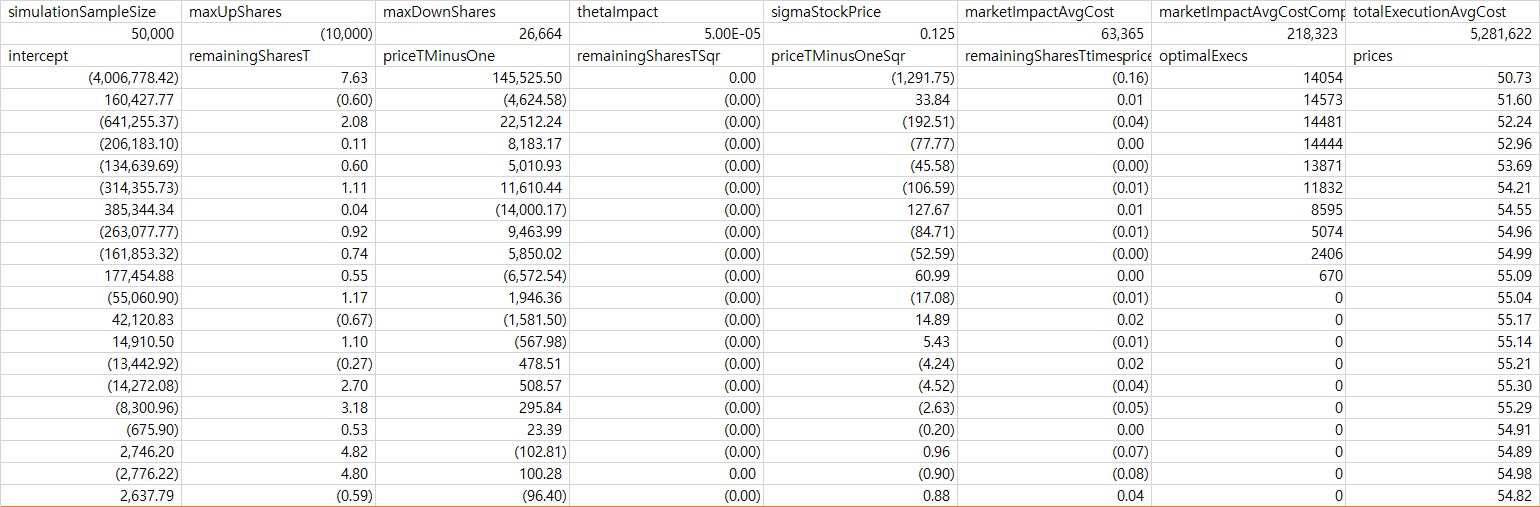}\caption{\label{fig:Regression-Co-efficients-for-One-Step}One Step Regression
Coefficients and Optimal Executions}
\end{figure}

\begin{figure}[H]
\includegraphics[width=17.5cm]{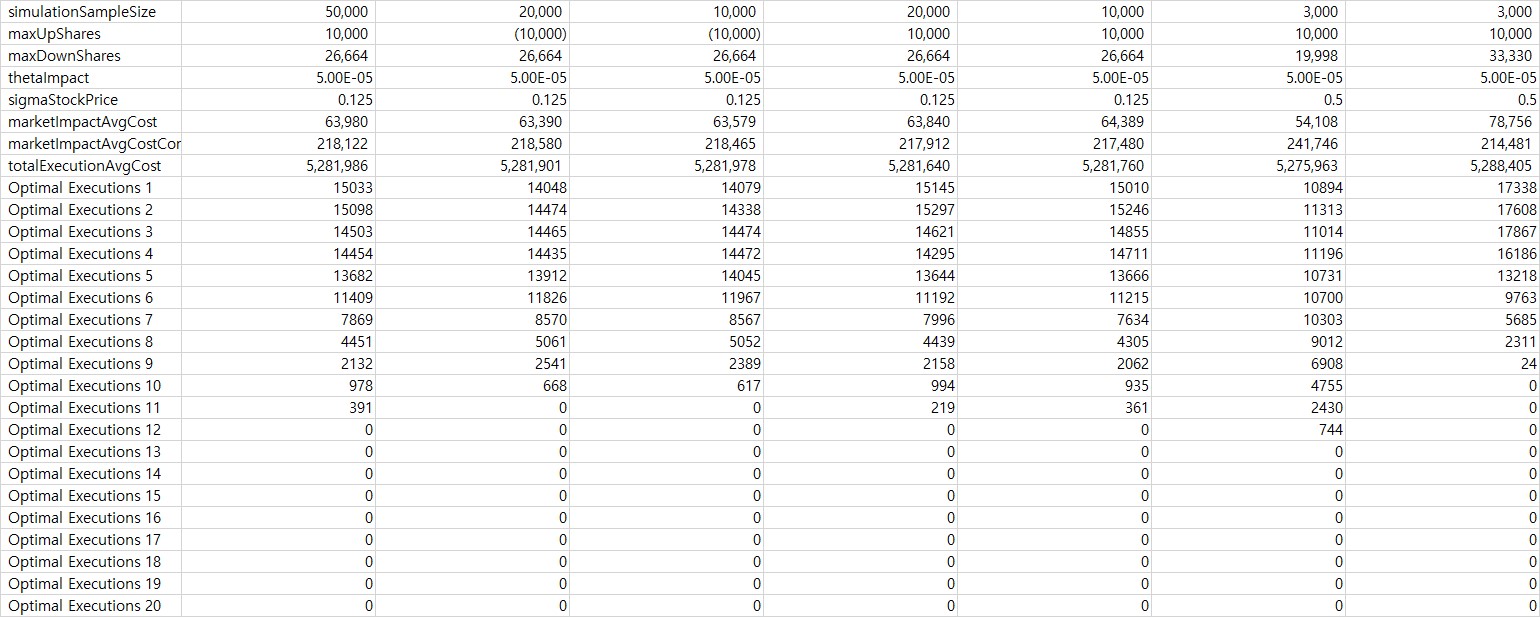}

\caption{\label{fig:Optimal Executions-OneStep}One Step Optimal Executions
for Different Parameters}
\end{figure}
\begin{figure}[H]
\includegraphics[width=17.5cm]{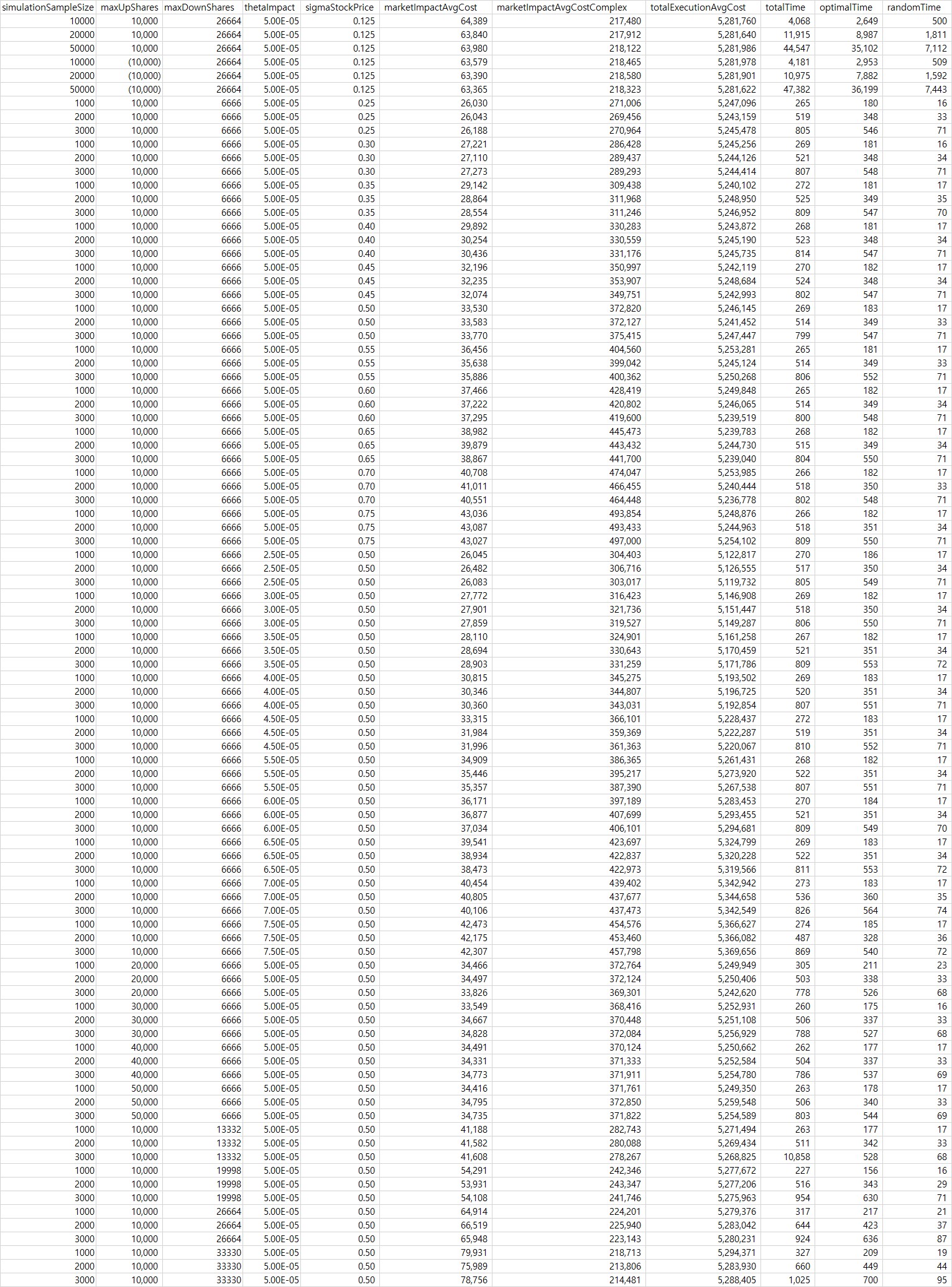}\caption{\label{fig:One-Step-Executions}One Step Executions Costs for Different
Parameters}
\end{figure}

\begin{doublespace}
\begin{figure}[H]
\includegraphics[width=18.5cm]{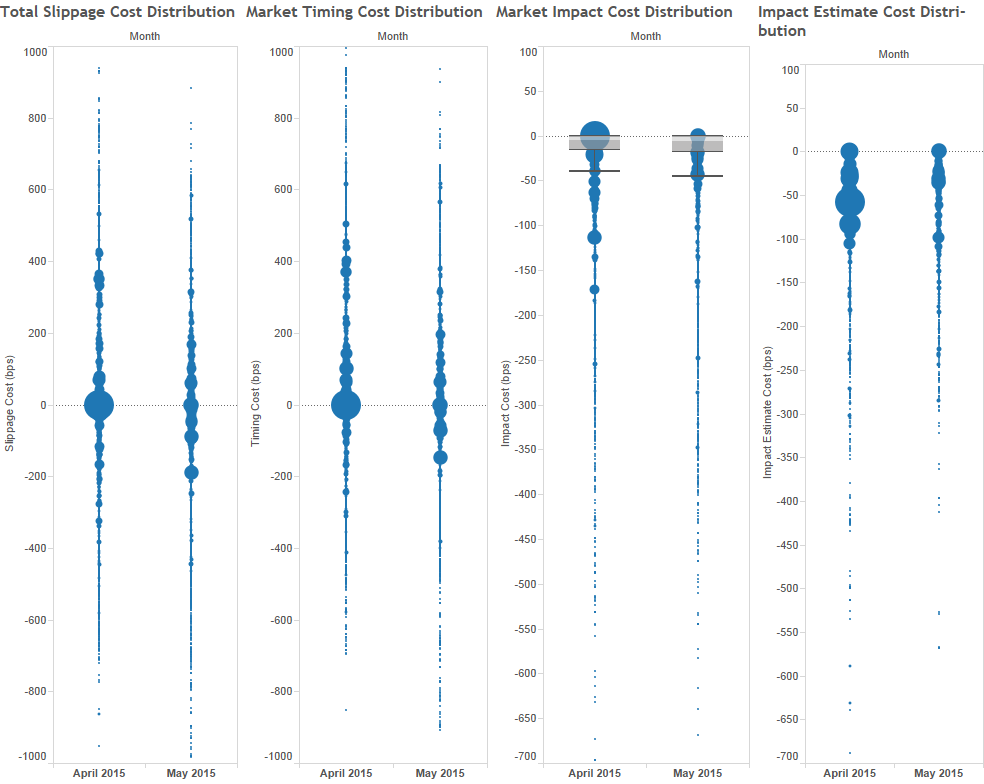}

\caption{Actual Trading Cost Distributions \label{fig:Trading-Cost-Distributions}}
\end{figure}

\end{doublespace}

\section{\label{sec:Appendix-of-Notation,}Appendix of Notation, Termonology
and Illustrative Examples}
\begin{doublespace}

\subsection{\label{subsec:Dictionary-of-Notation}Dictionary of Notation for
Optimal Trading using a Dynamic Programming Approach}
\end{doublespace}
\begin{itemize}
\begin{doublespace}
\item $\bar{S}$, the total number of shares that need to be traded.
\item $T$, the total duration of trading.
\item $N$, the number of trading intervals.
\item $\tau=T/N$, the length of each trading interval. We assume the time
intervals are of the same duration, but this can be relaxed quite
easily. In continuous time, this becomes, $N\rightarrow\infty,\tau\rightarrow0$.
\item The time then becomes divided into discrete intervals, $t_{k}=k\tau,\;k=0,...,N$.
\item For simplicity, let time be measured in unit intervals giving, $t=1,2,...,T$.
\item $S_{t}$, the number of shares acquired in period $t$ at price $P_{t}$.
\item $P_{0}$ can be any reference price or benchmark used to measure the
slippage. It is generally taken to be the arrival price or the price
at which the portfolio manager would like to complete the purchase
of the portfolio.
\item $W_{1},...,W_{T+1}$ is the trading trajectory, or a list of total
pending shares, $W_{1},...,W_{T+1}$. Here, $W_{t}$ is the number
of units that we still need to trade at time $t$. 
\end{doublespace}
\end{itemize}

\subsection{\label{subsec:Market-Impact-Simple-Example}Market Impact Simple
Formulation Examples}
\begin{enumerate}
\begin{doublespace}
\item When all the successive price moves are above their corresponding
previous price $P_{t}>P_{t-1}$, that is $\max\left[\left(P_{t}-P_{t-1}\right),0\right]=\left(P_{t}-P_{t-1}\right)$,
we have
\begin{align}
\text{Market Impact} & =\sum_{t=1}^{T}\left\{ \max\left[\left(P_{t}-P_{t-1}\right),0\right]S_{t}\right\} \\
 & =S_{1}\left(P_{1}-P_{0}\right)+S_{2}\left(P_{2}-P_{1}\right)+S_{3}\left(P_{3}-P_{2}\right)+\;...\;+S_{T}\left(P_{T}-P_{T-1}\right)
\end{align}
\begin{align}
\text{\text{Market Timing}} & =\text{Implementation Shortfall}-\text{\text{Market Impact}}\\
 & =\left(\sum_{t=1}^{T}S_{t}P_{t}\right)-\bar{S}P_{0}-S_{1}\left(P_{1}-P_{0}\right)-S_{2}\left(P_{2}-P_{1}\right)-S_{3}\left(P_{3}-P_{2}\right)-\;...\;-S_{T}\left(P_{T}-P_{T-1}\right)\\
 & =S_{1}P_{0}+S_{2}P_{1}+S_{3}P_{2}+\;...\;+S_{T}P_{T-1}-\bar{S}P_{0}\\
 & =S_{2}\left(P_{1}-P_{0}\right)+S_{3}\left(P_{2}-P_{0}\right)+\;...\;+S_{T}\left(P_{T-1}-P_{0}\right)
\end{align}
\item Some of the successive prices are below their corresponding previous
price, let us say, $\left(P_{2}<P_{1}\right)\text{ and }\left(P_{3}<P_{2}\right)$,
we have
\begin{align}
\text{Market Impact} & =\sum_{t=1}^{T}\left\{ \max\left[\left(P_{t}-P_{t-1}\right),0\right]S_{t}\right\} \\
 & =S_{1}\left(P_{1}-P_{0}\right)+S_{2}\left(0\right)+S_{3}\left(0\right)+\;...\;+S_{T}\left(P_{T}-P_{T-1}\right)
\end{align}
\begin{align}
\text{\text{Market Timing}} & =\text{Implementation Shortfall}-\text{\text{Market Impact}}\\
 & =\left(\sum_{t=1}^{T}S_{t}P_{t}\right)-\bar{S}P_{0}-S_{1}\left(P_{1}-P_{0}\right)-S_{2}\left(0\right)-S_{3}\left(0\right)-\;...\;-S_{T}\left(P_{T}-P_{T-1}\right)\\
 & =S_{2}P_{2}+S_{3}P_{3}+S_{1}P_{0}+S_{4}P_{3}+S_{5}P_{4}+\;...\;+S_{T}P_{T-1}-\bar{S}P_{0}\\
 & =S_{2}\left(P_{2}-P_{0}\right)+S_{3}\left(P_{3}-P_{0}\right)+S_{4}\left(P_{3}-P_{0}\right)+S_{5}\left(P_{4}-P_{0}\right)+\;...\;+S_{T}\left(P_{T-1}-P_{0}\right)
\end{align}
\end{doublespace}
\end{enumerate}

\subsection{\label{subsec:Market-Impact-Complex-Example}Market Impact Complex
Formulation Examples}
\begin{enumerate}
\begin{doublespace}
\item When all the successive price moves are above their corresponding
previous price $P_{t}>P_{t-1}$, that is $\max\left[\left(P_{t}-P_{t-1}\right),0\right]=\left(P_{t}-P_{t-1}\right)$,
we have
\begin{align}
\text{Market Impact} & =\sum_{t=1}^{T}\left\{ \max\left[\left(P_{t}-P_{t-1}\right),0\right]W_{t}\right\} \\
 & =W_{1}\left(P_{1}-P_{0}\right)+W_{2}\left(P_{2}-P_{1}\right)+W_{3}\left(P_{3}-P_{2}\right)+\;...\;+W_{T}\left(P_{T}-P_{T-1}\right)
\end{align}
\begin{align}
\text{\text{Market Timing}} & =\text{Implementation Shortfall}-\text{\text{Market Impact}}\\
 & =\left(\sum_{t=1}^{T}S_{t}P_{t}\right)-\bar{S}P_{0}-W_{1}\left(P_{1}-P_{0}\right)-W_{2}\left(P_{2}-P_{1}\right)-W_{3}\left(P_{3}-P_{2}\right)-\;...\;-W_{T}\left(P_{T}-P_{T-1}\right)\\
 & =\left[\sum_{t=1}^{T}\left(W_{t}-W_{t+1}\right)P_{t}\right]-W_{1}P_{0}-W_{1}\left(P_{1}-P_{0}\right)\\
 & -W_{2}\left(P_{2}-P_{1}\right)-W_{3}\left(P_{3}-P_{2}\right)-\;...\;-W_{T}\left(P_{T}-P_{T-1}\right)\\
 & =\left(W_{1}-W_{2}\right)P_{1}+\left(W_{2}-W_{3}\right)P_{2}+...+\left(W_{T}-W_{T+1}\right)P_{T}\\
 & -W_{1}P_{0}-W_{1}\left(P_{1}-P_{0}\right)-W_{2}\left(P_{2}-P_{1}\right)-W_{3}\left(P_{3}-P_{2}\right)-\;...\;-W_{T}\left(P_{T}-P_{T-1}\right)\\
 & =0
\end{align}
\item Some of the successive prices are below their corresponding previous
price, let us say, $\left(P_{2}<P_{1}\right)\text{ and }\left(P_{3}<P_{2}\right)$,
we have
\begin{align}
\text{Market Impact} & =\sum_{t=1}^{T}\left\{ \max\left[\left(P_{t}-P_{t-1}\right),0\right]W_{t}\right\} \\
 & =W_{1}\left(P_{1}-P_{0}\right)+W_{2}\left(0\right)+W_{3}\left(0\right)+\;...\;+W_{T}\left(P_{T}-P_{T-1}\right)
\end{align}
\begin{align}
\text{\text{Market Timing}} & =\text{Implementation Shortfall}-\text{\text{Market Impact}}\\
 & =\left(\sum_{t=1}^{T}S_{t}P_{t}\right)-\bar{S}P_{0}-W_{1}\left(P_{1}-P_{0}\right)-W_{2}\left(0\right)-W_{3}\left(0\right)-\;...\;-W_{T}\left(P_{T}-P_{T-1}\right)\\
 & =\left[\sum_{t=1}^{T}\left(W_{t}-W_{t+1}\right)P_{t}\right]-W_{1}P_{0}-W_{1}\left(P_{1}-P_{0}\right)\\
 & -W_{2}\left(0\right)-W_{3}\left(0\right)-\;...\;-W_{T}\left(P_{T}-P_{T-1}\right)\\
 & =\left(W_{1}-W_{2}\right)P_{1}+\left(W_{2}-W_{3}\right)P_{2}+...+\left(W_{T}-W_{T+1}\right)P_{T}\\
 & -W_{1}P_{0}-W_{1}\left(P_{1}-P_{0}\right)-W_{2}\left(0\right)-W_{3}\left(0\right)-\;...\;-W_{T}\left(P_{T}-P_{T-1}\right)\\
 & =-W_{2}P_{1}+W_{2}P_{2}-W_{3}P_{2}+W_{3}P_{3}\\
 & =W_{2}\left(P_{2}-P_{1}\right)+W_{3}\left(P_{3}-P_{2}\right)
\end{align}
\end{doublespace}
\end{enumerate}

\section{Appendix of Mathematical Proofs}
\begin{doublespace}

\subsection{\label{subsec:Proof-of-Proposition-Trading-Costs-Zero-Sum}Proof
of Theorem \ref{Trading-costs-Zero-Sum-Game}}
\end{doublespace}
\begin{lem}
\begin{doublespace}
\label{We-first-consider- I}We first consider the simple formulation,
with one interval and two market participants,
\end{doublespace}
\end{lem}
\begin{proof}
\begin{doublespace}
For the buyer we have,
\[
\text{Market Impact}=\left\{ \max\left[\left(P_{t}-P_{t-1}\right),0\right]S_{t}\right\} 
\]
\begin{eqnarray*}
\text{\text{Market Timing}} & = & \text{Implementation Shortfall}-\text{\text{Market Impact}}\\
 & = & \left(S_{t}P_{t}\right)-S_{t}P_{0}-\left\{ \max\left[\left(P_{t}-P_{t-1}\right),0\right]S_{t}\right\} 
\end{eqnarray*}
Here, we use the definition of Implementation Shortfall after adapting
it to the one interval case, 
\begin{eqnarray*}
\text{Implementation Shortfall} & = & \text{Paper Return}-\text{Real Portfolio Return}\\
 & = & \left(\sum_{t=1}^{T}S_{t}P_{t}\right)-\bar{S}P_{0}=\left(S_{t}P_{t}\right)-S_{t}P_{0}
\end{eqnarray*}
Similarly we have for the seller (noting that the drop in prices is
detrimental to the intended outcome and changing the sign accordingly),
\[
\text{Market Impact}=\left\{ \max\left[\left(P_{t-1}-P_{t}\right),0\right]S_{t}\right\} 
\]
\begin{eqnarray*}
\text{\text{Market Timing}} & = & -\text{Implementation Shortfall}-\text{\text{Market Impact}}\\
 &  & S_{t}P_{0}-\left(S_{t}P_{t}\right)-\left\{ \max\left[\left(P_{t-1}-P_{t}\right),0\right]S_{t}\right\} 
\end{eqnarray*}
If $\left(P_{t}>P_{t-1}\right)$, 

For the buyer, 
\[
\text{Market Impact}=\left(P_{t}-P_{t-1}\right)S_{t}
\]
\begin{eqnarray*}
\text{\text{Market Timing}} & = & \left(S_{t}P_{t}\right)-S_{t}P_{0}-\left(P_{t}-P_{t-1}\right)S_{t}=\left(P_{t-1}-P_{0}\right)S_{t}
\end{eqnarray*}
For the seller, 
\[
\text{Market Impact}=0
\]
\begin{eqnarray*}
\text{\text{Market Timing}} & = & S_{t}P_{0}-\left(S_{t}P_{t}\right)-0=\left(P_{0}-P_{t}\right)S_{t}
\end{eqnarray*}
Sum of the impact and timing across both the participants, 
\[
\text{Total Market Impact}=\left(P_{t}-P_{t-1}\right)S_{t}
\]
\begin{eqnarray*}
\text{Total \text{Market Timing}} & = & \left(P_{t-1}-P_{0}\right)S_{t}+\left(P_{0}-P_{t}\right)S_{t}=\left(P_{t-1}-P_{t}\right)S_{t}
\end{eqnarray*}
\[
\text{Total Market Impact}+\text{Total \text{Market Timing}}=0
\]
If $\left(P_{t}<P_{t-1}\right)$, 

For the buyer, 
\[
\text{Market Impact}=0
\]
\begin{eqnarray*}
\text{\text{Market Timing}} & = & \left(S_{t}P_{t}\right)-S_{t}P_{0}-0=\left(P_{t}-P_{0}\right)S_{t}
\end{eqnarray*}
For the seller, 
\[
\text{Market Impact}=\left(P_{t-1}-P_{t}\right)S_{t}
\]
\begin{eqnarray*}
\text{\text{Market Timing}} & = & -\text{Implementation Shortfall}-\text{\text{Market Impact}}\\
 &  & S_{t}P_{0}-\left(S_{t}P_{t}\right)-\left(P_{t-1}-P_{t}\right)S_{t}=S_{t}P_{0}-S_{t}P_{t-1}
\end{eqnarray*}
Sum of the impact and timing across both the participants, 
\[
\text{Total Market Impact}=\left(P_{t-1}-P_{t}\right)S_{t}
\]
\begin{eqnarray*}
\text{Total \text{Market Timing}} & = & \left(P_{0}-P_{t-1}\right)S_{t}+\left(P_{t}-P_{0}\right)S_{t}=\left(P_{t}-P_{t-1}\right)S_{t}
\end{eqnarray*}
\[
\text{Total Market Impact}+\text{Total \text{Market Timing}}=0
\]

It should be clear that this holds for all non-zero positive values
of prices and number of shares which can include zero, that is $\forall P_{t\in\left\{ t=0,1,2,...,T\right\} }\in\left(0,\infty\right)$
and $\forall S_{t}\in\left[0,\infty\right)$
\end{doublespace}
\end{proof}
\begin{lem}
\begin{doublespace}
\label{We-next-consider-II}We next consider the simple formulation,
with multiple intervals and multiple participants.
\end{doublespace}
\end{lem}
\begin{enumerate}
\begin{doublespace}
\item We argue that this scenario with multiple intervals and multiple participants
can be reduced to an amalgamation of the above case (Lemma \ref{We-first-consider- I})
with a single interval and two participants.
\item Our definition of an interval is such that during each interval, only
one exchange happens between buyer and seller for a total of two participants,
with the sum of market impact and market timing being equal to zero.
\item To convince us that such an interval exists, we reason as follows:
when multiple exchanges happen during an interval, we split the interval
into sub-intervals such that only one exchange happens in each interval.
If multiple exchanges happen simultaneously, they can be viewed as
one exchange by combining all the buy trades on one side against the
sell trades on the other side.
\item The sum of many such individual intervals also have the same property
(by mathematical induction), wherein the sum of market impact and
market timing equals zero, which follows from (Lemma \ref{We-first-consider- I}).
It should be clear that this holds for all non-zero positive values
of prices and number of shares which can include zero, that is $\forall P_{t\in\left\{ t=0,1,2,...,T\right\} }\in\left(0,\infty\right)$
and $\forall S_{t}\in\left[0,\infty\right)$ across all the intervals
considered.
\end{doublespace}
\end{enumerate}
\begin{lem}
\begin{doublespace}
\label{Lastly,-we-consider-III}Lastly, we consider the complex formulation
with multiple intervals and multiple participants.
\end{doublespace}
\end{lem}
\begin{enumerate}
\begin{doublespace}
\item We argue that the complex formulation scenario with multiple intervals
and multiple participants can be reduced to an amalgamation of the
above two cases (Lemma \ref{We-first-consider- I}, \ref{We-next-consider-II}).
\item Any shares unexecuted by the end of a certain time interval will need
to be executed before the end of the total time duration available
for trading, since we note that by assumption, there will be no unexecuted
shares once the total time duration is completed.
\item We apply Lemma \ref{We-first-consider- I} to the sum of impact and
timing for the shares executed at the last time interval, making this
sum zero. We then consider the last interval and the interval before
that together and apply Lemma \ref{We-next-consider-II} to these
two intervals, which gives the sum of impact and timing across both
these intervals as zero. 
\item We can then include additional intervals towards the beginning of
the trading duration and deduce that the sum of impact and timing
across the new interval and the already aggregated intervals is zero
using mathematical induction. It should be clear that this holds for
all non-zero positive values of prices and number of shares which
can include zero, that is $\forall P_{t\in\left\{ t=0,1,2,...,T\right\} }\in\left(0,\infty\right)$
and $\forall S_{t}\in\left[0,\infty\right)$ across all the intervals
considered.
\item Hence, by considering the shares executed in the last interval and
successively including the intervals before that, we get that the
corresponding sum of market impact and market timing equals zero.
The result is that we have at the end of the total trading duration
after aggregating across all the individual intervals, the sum of
total market impact and total market timing being equal to zero. This
completest the proof of Theorem \ref{Trading-costs-Zero-Sum-Game}.
\end{doublespace}
\end{enumerate}
\begin{doublespace}

\subsection{\label{subsec:Proof-of-Proposition-Convexity}Proof of Proposition
\ref{The-value-function-convexity}}
\end{doublespace}
\begin{proof}
\begin{doublespace}
We have from the value function for the last time period.

\[
V_{T}\left(P_{T-1},W_{T}\right)=E_{T}\left[\max\left\{ \left(\theta W_{T}+\varepsilon_{T}\right),0\right\} W_{T}\right]
\]

\[
V_{T}\left(P_{T-1},W_{T}\right)=E_{T}\left[\max\left\{ \left(\theta W_{T}^{2}+W_{T}\varepsilon_{T}\right),0\right\} \right]
\]
\[
V_{T}\left(P_{T-1},W_{T}\right)=E_{T}\left[\left.\left(\theta W_{T}^{2}+W_{T}\varepsilon_{T}\right)\right|\left(\theta W_{T}^{2}+W_{T}\varepsilon_{T}\right)>0\right]
\]
\[
\left\{ \because\;E\left[max\left(X,c\right)\right]=E\left[X\left|X>c\right.\right]Pr\left[X>c\right]+E\left[c\left|X\text{\ensuremath{\le}}c\right.\right]Pr\left[X\text{\ensuremath{\le}}c\right]\;\right\} 
\]
This is of the form, $E\left[\left.Y\right|Y>0\right]$ where, $Y=\left(\theta W_{T}^{2}+W_{T}\varepsilon_{T}\right)$.
We then need to calculate, 
\[
E_{T}\left[\left.\left(\theta W_{T}^{2}+W_{T}\sigma_{\varepsilon}Z\right)\right|Z>\left(-\frac{\theta W_{T}}{\sigma_{\varepsilon}}\right)\right]\quad,where\;Z\sim N\left(0,1\right)
\]
\[
\left[\because\;Y\sim N\left(\theta W_{T}^{2},W_{T}^{2}\sigma_{\varepsilon}^{2}\right)\equiv Y\sim N\left(\mu,\sigma^{2}\right)\Rightarrow Y=\mu+\sigma Z\;;\;Y>0\Rightarrow Z>-\mu/\sigma\right]
\]
We have for every standard normal distribution, $Z$, and for every
$u,$ $Pr\left[Z>\text{\textminus}u\right]=Pr\left[Z<u\right]=\mathbf{\Phi}\left(u\right)$.
Here, $\phi$ and $\mathbf{\Phi}$ are the standard normal PDF and
CDF, respectively. 
\begin{eqnarray*}
E\left[\left.Z\right|Z>-u\right] & = & \frac{1}{\mathbf{\Phi}\left(u\right)}\left[\int_{-u}^{\infty}t\phi\left(t\right)dt\right]\\
 & = & \frac{1}{\mathbf{\Phi}\left(u\right)}\left[\left.-\phi\left(t\right)\right|_{-u}^{\infty}\right]=\frac{\phi\left(u\right)}{\mathbf{\Phi}\left(u\right)}
\end{eqnarray*}
\[
\left[\because\int t\phi\left(t\right)dt=\int t{\displaystyle {\frac{1}{\sqrt{2\pi}}}e^{-{\frac{1}{2}}t^{2}}}dt=\int\left\{ \frac{d{\displaystyle {-\frac{1}{\sqrt{2\pi}}}e^{-{\frac{1}{2}}t^{2}}}}{dt}\right\} dt\right]
\]
Hence we have, 
\begin{eqnarray*}
E\left[\left.Y\right|Y>0\right] & = & \mu+\sigma E\left[\left.Z\right|Z>\left(-\frac{\mu}{\sigma}\right)\right]\\
 & = & \mu+\frac{\sigma\phi\left(\mu/\sigma\right)}{\mathbf{\Phi}\left(\mu/\sigma\right)}
\end{eqnarray*}
Setting, $\psi\left(u\right)=u+\phi\left(u\right)/\Phi\left(u\right)$,
\[
E\left[\left.Y\right|Y>0\right]=\sigma\psi\left(\mu/\sigma\right)
\]
\begin{eqnarray*}
V_{T}\left(P_{T-1},W_{T}\right) & = & E_{T}\left[\left.\left(\theta W_{T}^{2}+W_{T}\varepsilon_{T}\right)\right|\left(\theta W_{T}^{2}+W_{T}\varepsilon_{T}\right)>0\right]\\
 & = & W_{T}\sigma_{\varepsilon}\left[\frac{\theta W_{T}}{\sigma_{\varepsilon}}+\frac{\phi\left(\frac{\theta W_{T}}{\sigma_{\varepsilon}}\right)}{\Phi\left(\frac{\theta W_{T}}{\sigma_{\varepsilon}}\right)}\right]=\sigma_{\varepsilon}W_{T}\psi\left(\xi W_{T}\right)\;,\;\xi=\frac{\theta}{\sigma_{\varepsilon}}
\end{eqnarray*}
 In the next to last period, $T-1$, the Bellman equation is,
\[
V_{T-1}\left(P_{T-2},W_{T-1}\right)=\underset{\left\{ S_{T-1}\right\} }{\min}\:E_{T-1}\left[\max\left\{ \left(P_{T-1}-P_{T-2}\right),0\right\} S_{T-1}+V_{T}\left(P_{T-1},W_{T}\right)\right]
\]
\[
=\underset{\left\{ S_{T-1}\right\} }{\min}\:E_{T-1}\left[\max\left\{ \left(\theta S_{T-1}+\varepsilon_{T-1}\right),0\right\} S_{T-1}+V_{T}\left(P_{T-2}+\theta S_{T-1}+\varepsilon_{T-1},W_{T-1}-S_{T-1}\right)\right]
\]
\[
=\underset{\left\{ S_{T-1}\right\} }{\min}\left\{ S_{T-1}\sigma_{\varepsilon}\left[\frac{\theta S_{T-1}}{\sigma_{\varepsilon}}+\frac{\phi\left(\frac{\theta S_{T-1}}{\sigma_{\varepsilon}}\right)}{\Phi\left(\frac{\theta S_{T-1}}{\sigma_{\varepsilon}}\right)}\right]+\left(W_{T-1}-S_{T-1}\right)\sigma_{\varepsilon}\left[\frac{\theta\left(W_{T-1}-S_{T-1}\right)}{\sigma_{\varepsilon}}+\frac{\phi\left(\frac{\theta\left(W_{T-1}-S_{T-1}\right)}{\sigma_{\varepsilon}}\right)}{\Phi\left(\frac{\theta\left(W_{T-1}-S_{T-1}\right)}{\sigma_{\varepsilon}}\right)}\right]\right\} 
\]
\[
=\underset{\left\{ S_{T-1}\right\} }{\min}\left[S_{T-1}\sigma_{\varepsilon}\psi\left(\xi S_{T-1}\right)+\left(W_{T-1}-S_{T-1}\right)\sigma_{\varepsilon}\psi\left\{ \xi\left(W_{T-1}-S_{T-1}\right)\right\} \right]
\]
We show this to be a convex function with a unique minimum. Let us
start with, 
\[
G\left(x\right)=\frac{\phi\left(x\right)}{\Phi\left(x\right)}\quad\left|\;\forall x>0\right.
\]
\[
\frac{\partial G\left(x\right)}{\partial x}=\frac{\phi'\left(x\right)}{\Phi\left(x\right)}-\left[\frac{\phi\left(x\right)}{\Phi\left(x\right)}\right]^{2}
\]
\[
\frac{\partial G\left(x\right)}{\partial x}=-\frac{x\phi\left(x\right)}{\Phi\left(x\right)}-\left[\frac{\phi\left(x\right)}{\Phi\left(x\right)}\right]^{2}
\]
\[
\left[\because\;\frac{\partial\phi\left(x\right)}{\partial\left(x\right)}=-x\phi\left(x\right)\;;\;\frac{\partial\Phi\left(x\right)}{\partial\left(x\right)}=\phi\left(x\right)\right]
\]
\[
\frac{\partial G\left(x\right)}{\partial x}=\frac{\phi\left(x\right)}{\Phi\left(x\right)}\left[-x-\frac{\phi\left(x\right)}{\Phi\left(x\right)}\right]<0\quad\left|\;\forall x>0\right.
\]
\begin{eqnarray*}
\frac{\partial^{2}G\left(x\right)}{\partial x^{2}} & = & -\frac{\phi\left(x\right)}{\Phi\left(x\right)}-x\left\{ -\frac{x\phi\left(x\right)}{\Phi\left(x\right)}-\left[\frac{\phi\left(x\right)}{\Phi\left(x\right)}\right]^{2}\right\} -\frac{2\phi\left(x\right)\phi'\left(x\right)}{\Phi^{2}\left(x\right)}+\frac{2\phi^{3}\left(x\right)}{\Phi^{3}\left(x\right)}
\end{eqnarray*}
\begin{eqnarray*}
 & = & \frac{-\phi\left(x\right)\Phi^{2}\left(x\right)+x^{2}\phi\left(x\right)\Phi^{2}\left(x\right)+x\phi^{2}\left(x\right)\Phi\left(x\right)+2x\phi^{2}\left(x\right)\Phi\left(x\right)+2\phi^{3}\left(x\right)}{\Phi^{3}\left(x\right)}
\end{eqnarray*}
\begin{eqnarray*}
 & = & \frac{\phi\left(x\right)\left\{ -\Phi^{2}\left(x\right)+x^{2}\Phi^{2}\left(x\right)+3x\phi\left(x\right)\Phi\left(x\right)+2\phi^{2}\left(x\right)\right\} }{\Phi^{3}\left(x\right)}
\end{eqnarray*}
Consider the following, $\forall x>0$
\[
\text{Let }K\left(x\right)=3x\phi\left(x\right)\Phi\left(x\right)+2\phi^{2}\left(x\right)+\left(x^{2}-1\right)\Phi^{2}\left(x\right)
\]
For $x\geq1,K\left(x\right)>0$. Also, 
\[
K\left(0\right)=\frac{1}{\pi}-\frac{1}{4}>0
\]
\[
\left[\because\;\phi\left(x\right)=\frac{e^{-\frac{1}{2}x^{2}}}{\sqrt{2\pi}}\;;\;\phi\left(0\right)=\frac{1}{\sqrt{2\pi}}\;;\;\Phi\left(x\right)=\frac{1}{2}\left\{ 1+\text{erf\ensuremath{\left(\frac{x}{\sqrt{2}}\right)}}\right\} \;;\;\Phi\left(0\right)=\frac{1}{2}\right]
\]
For $x\in\left(0,1\right)$,
\[
K'\left(x\right)=3\phi\left(x\right)\Phi\left(x\right)-3x^{2}\phi\left(x\right)\Phi\left(x\right)+3x\phi^{2}\left(x\right)-4x\phi^{2}\left(x\right)+\left(x^{2}-1\right)2\Phi\left(x\right)\phi\left(x\right)+2x\Phi^{2}\left(x\right)
\]
\[
K'\left(x\right)=x\left\{ 2\Phi^{2}\left(x\right)-\phi^{2}\left(x\right)\right\} +\left\{ 1-x^{2}\right\} \phi\left(x\right)\Phi\left(x\right)\geq xL\left(x\right)
\]
where, $L\left(x\right)=\left\{ 2\Phi^{2}\left(x\right)-\phi^{2}\left(x\right)\right\} $.
Further, $\Phi\left(x\right)\geq\frac{1}{2}$ and $\phi^{2}\left(x\right)\leq\frac{1}{2\pi}$
$\Rightarrow L\left(x\right)\geq\frac{1}{2}-\frac{1}{2\pi}>0$. $K'\left(x\right)>0\Rightarrow K\left(x\right)$
is increasing. Hence, $K\left(x\right)>K\left(0\right)>0\;\left|\;\forall u\in\left(0,1\right)\right.$.
This gives, $K\left(x\right)>0$ and $\frac{\partial^{2}G\left(x\right)}{\partial x^{2}}>0\;\left|\;\forall x\in\left(0,\infty\right)\right.$.
It is worth noting the following asymptotic properties
\[
\left[\because\;\underset{x\rightarrow0+}{\lim}\frac{\partial^{2}G\left(x\right)}{\partial x^{2}}>0\;;\;\underset{x\rightarrow\infty}{\lim}\frac{\partial^{2}G\left(x\right)}{\partial x^{2}}=0\;;\;\frac{\partial G\left(x\right)}{\partial x}<0\quad\left|\;\forall x>0\right.\right]
\]

Next we show that $f\left(a-x\right)$ is convex, given $f''\left(x\right)>0\;;\;x>0$
\[
\text{Let }y=a-x
\]
\begin{eqnarray*}
\frac{\partial f\left(y\right)}{\partial x} & = & \frac{\partial f\left(y\right)}{\partial y}\frac{\partial\left(a-x\right)}{\partial x}\;;\;a>x
\end{eqnarray*}
\[
=\left(-1\right)f'\left(y\right)
\]
\begin{eqnarray*}
\frac{\partial^{2}f\left(y\right)}{\partial x^{2}} & = & \left(-1\right)\frac{\partial f'\left(y\right)}{\partial y}\frac{\partial\left(a-x\right)}{\partial x}
\end{eqnarray*}
\[
=f''\left(y\right)
\]
\[
>0\;\left[\because\;f''\left(y\right)>0\;\left|\;\forall y>0\right.\right]
\]
We can similarly show that $f\left(bx\right)$ is convex if $f\left(x\right)$
is convex (in our case, $b>0$, but the result holds $\forall b$).
Next we derive conditions when $x^{2}+xf\left(x\right)$ is convex,
given $f''\left(x\right)>0\;;\;x>0$ 
\begin{eqnarray*}
\text{Let }g\left(x\right) & = & x^{2}+xf\left(x\right)\\
g'\left(x\right) & = & 2x+xf'\left(x\right)+f\left(x\right)\\
g''\left(x\right) & = & 2+xf''\left(x\right)+2f'\left(x\right)
\end{eqnarray*}
\[
g''\left(x\right)>0\;\text{if}\;f'\left(x\right)>0\;\text{or if }\;2+xf''\left(x\right)>\left|2f'\left(x\right)\right|
\]
Finally,
\[
\text{Let }Q\left(x\right)=x^{2}+x\frac{\phi\left(x\right)}{\Phi\left(x\right)}
\]
\begin{eqnarray*}
\frac{\partial Q\left(x\right)}{\partial x} & = & 2x+x\left[-\frac{x\phi\left(x\right)}{\Phi\left(x\right)}-\left\{ \frac{\phi\left(x\right)}{\Phi\left(x\right)}\right\} ^{2}\right]+\frac{\phi\left(x\right)}{\Phi\left(x\right)}
\end{eqnarray*}
\begin{eqnarray*}
\left.\frac{\partial Q\left(x\right)}{\partial x}\right|_{x=0} & = & \frac{\phi\left(0\right)}{\Phi\left(0\right)}>0
\end{eqnarray*}
\begin{eqnarray*}
\frac{\partial^{2}Q\left(x\right)}{\partial x^{2}} & = & 2+x\phi\left(x\right)\left[\frac{-\Phi^{2}\left(x\right)+x^{2}\Phi^{2}\left(x\right)+3x\phi\left(x\right)\Phi\left(x\right)+2\phi^{2}\left(x\right)}{\Phi^{3}\left(x\right)}\right]+2\left[-x\frac{\phi\left(x\right)}{\Phi\left(x\right)}-\left\{ \frac{\phi\left(x\right)}{\Phi\left(x\right)}\right\} ^{2}\right]
\end{eqnarray*}
\begin{eqnarray*}
 & = & 2+\phi\left(x\right)\left[\frac{x^{3}\Phi^{2}\left(x\right)+3x^{2}\phi\left(x\right)\Phi\left(x\right)+2x\phi^{2}\left(x\right)-3x\Phi^{2}\left(x\right)-2\phi\left(x\right)\Phi\left(x\right)}{\Phi^{3}\left(x\right)}\right]
\end{eqnarray*}
\begin{eqnarray*}
 & = & \left[\frac{2\Phi^{3}\left(x\right)+x^{3}\Phi^{2}\left(x\right)\phi\left(x\right)+3x^{2}\phi^{2}\left(x\right)\Phi\left(x\right)+2x\phi^{3}\left(x\right)-3x\Phi^{2}\left(x\right)\phi\left(x\right)-2\phi^{2}\left(x\right)\Phi\left(x\right)}{\Phi^{3}\left(x\right)}\right]
\end{eqnarray*}
\[
\text{Let, }K\left(x\right)=2\Phi^{3}\left(x\right)+2x\phi^{3}\left(x\right)+x^{3}\Phi^{2}\left(x\right)\phi\left(x\right)+3x^{2}\phi^{2}\left(x\right)\Phi\left(x\right)-3x\Phi^{2}\left(x\right)\phi\left(x\right)-2\phi^{2}\left(x\right)\Phi\left(x\right)
\]
We need to show that $K\left(x\right)>0\:\left|\forall x>0\right.$.
First we note that,
\[
K\left(0\right)=\frac{1}{4}-\frac{1}{2\pi}>0
\]
We can write this as,
\[
K\left(x\right)=2x\phi^{3}\left(x\right)+x^{3}\Phi^{2}\left(x\right)\phi\left(x\right)+3x^{2}\phi^{2}\left(x\right)\Phi\left(x\right)+\Phi\left(x\right)\left[2\Phi^{2}\left(x\right)-3x\Phi\left(x\right)\phi\left(x\right)-2\phi^{2}\left(x\right)\right]
\]
We then need to show, 
\[
L\left(x\right)=\left[2\Phi^{2}\left(x\right)-3x\Phi\left(x\right)\phi\left(x\right)-2\phi^{2}\left(x\right)\right]>0\left|\forall x>0\right.
\]
\[
L\left(0\right)=\left[2\Phi^{2}\left(0\right)-2\phi^{2}\left(0\right)\right]=\frac{1}{2}-\frac{1}{\pi}>0
\]
\begin{eqnarray*}
\frac{\partial L\left(x\right)}{\partial x} & = & 4\Phi\left(x\right)\phi\left(x\right)-3\Phi\left(x\right)\phi\left(x\right)-3x\phi^{2}\left(x\right)+3x^{2}\Phi\left(x\right)\phi\left(x\right)+4x\phi^{2}\left(x\right)\\
 & = & \Phi\left(x\right)\phi\left(x\right)+3x^{2}\Phi\left(x\right)\phi\left(x\right)+x\phi^{2}\left(x\right)>0\left|\forall x\geq0\right.
\end{eqnarray*}
Therefore $L(x)$ is an increasing function on the interval $[0,\infty)$.
Its minimum must be at $L(0)>0$, proving $L(x)>0$ and $\frac{\partial^{2}Q\left(x\right)}{\partial x^{2}}>0\;\left|\;\forall x\in\left(0,\infty\right)\right.$.
It is worth noting the following asymptotic properties and the graphical
results shown in the main text, 
\[
\left[\because\;\underset{x\rightarrow0+}{\lim}\frac{\partial^{2}Q\left(x\right)}{\partial x^{2}}>0\;;\;\underset{x\rightarrow\infty}{\lim}\frac{\partial^{2}Q\left(x\right)}{\partial x^{2}}>0\;;\;\frac{\partial Q\left(x\right)}{\partial x}>0\quad\left|\;\forall x>0\right.\right]
\]
\end{doublespace}
\end{proof}
\begin{doublespace}

\subsection{\label{subsec:Proof-of-Proposition-benchmark-simple}Proof of Proposition
\ref{The-number-of-benchmark-simple}}
\end{doublespace}
\begin{proof}
\begin{doublespace}
Consider,
\begin{eqnarray*}
V_{T-1}\left(P_{T-2},W_{T-1}\right) & = & \underset{\left\{ S_{T-1}\right\} }{\min}\left[S_{T-1}\sigma_{\varepsilon}\psi\left(\xi S_{T-1}\right)+\left(W_{T-1}-S_{T-1}\right)\sigma_{\varepsilon}\psi\left\{ \xi\left(W_{T-1}-S_{T-1}\right)\right\} \right]\\
 &  & \text{Here, }\psi\left(u\right)=u+\phi\left(u\right)/\Phi\left(u\right)\;,\;\xi=\frac{\theta}{\sigma_{\varepsilon}},
\end{eqnarray*}
First Order Conditions (FOC) give,
\[
\frac{\partial}{\partial S_{T-1}}\left[S_{T-1}\sigma_{\varepsilon}\psi\left(\xi S_{T-1}\right)+\left(W_{T-1}-S_{T-1}\right)\sigma_{\varepsilon}\psi\left\{ \xi\left(W_{T-1}-S_{T-1}\right)\right\} \right]=0
\]
\[
\xi S_{T-1}\psi'\left(\xi S_{T-1}\right)+\psi\left(\xi S_{T-1}\right)-\xi\left(W_{T-1}-S_{T-1}\right)\psi'\left(\xi\left\{ W_{T-1}-S_{T-1}\right\} \right)-\psi\left(\xi\left\{ W_{T-1}-S_{T-1}\right\} \right)=0
\]
\begin{eqnarray*}
\xi^{2}S_{T-1}\left\{ 1-\frac{\xi S_{T-1}\phi\left(\xi S_{T-1}\right)}{\Phi\left(\xi S_{T-1}\right)}-\left[\frac{\phi\left(\xi S_{T-1}\right)}{\Phi\left(\xi S_{T-1}\right)}\right]^{2}\right\} +\left\{ \xi S_{T-1}+\frac{\phi\left(\xi S_{T-1}\right)}{\Phi\left(\xi S_{T-1}\right)}\right\} \\
+\xi^{2}\left(W_{T-1}-S_{T-1}\right)\left\{ -1+\frac{\xi\left(W_{T-1}-S_{T-1}\right)\phi\left(\xi\left\{ W_{T-1}-S_{T-1}\right\} \right)}{\Phi\left(\xi\left\{ W_{T-1}-S_{T-1}\right\} \right)}+\left[\frac{\phi\left(\xi\left\{ W_{T-1}-S_{T-1}\right\} \right)}{\Phi\left(\xi\left\{ W_{T-1}-S_{T-1}\right\} \right)}\right]^{2}\right\} \\
-\left\{ \xi\left\{ W_{T-1}-S_{T-1}\right\} +\frac{\phi\left(\xi\left\{ W_{T-1}-S_{T-1}\right\} \right)}{\Phi\left(\xi\left\{ W_{T-1}-S_{T-1}\right\} \right)}\right\}  & = & 0
\end{eqnarray*}
\[
\left[\because\;\psi'\left(u\right)=1-\frac{u\phi\left(u\right)}{\Phi\left(u\right)}-\left[\frac{\phi\left(u\right)}{\Phi\left(u\right)}\right]^{2}\text{ and }\psi\left(u\right)=u+\frac{\phi\left(u\right)}{\Phi\left(u\right)}\right]
\]
\begin{eqnarray*}
S_{T-1}+\frac{1}{\xi}S_{T-1}+\frac{1}{\xi^{2}}\frac{\phi\left(\xi S_{T-1}\right)}{\Phi\left(\xi S_{T-1}\right)}+\frac{\xi\left(W_{T-1}-S_{T-1}\right)^{2}\phi\left(\xi\left\{ W_{T-1}-S_{T-1}\right\} \right)}{\Phi\left(\xi\left\{ W_{T-1}-S_{T-1}\right\} \right)}\\
+\left(W_{T-1}-S_{T-1}\right)\left[\frac{\phi\left(\xi\left\{ W_{T-1}-S_{T-1}\right\} \right)}{\Phi\left(\xi\left\{ W_{T-1}-S_{T-1}\right\} \right)}\right]^{2} & =\\
\left(W_{T-1}-S_{T-1}\right)+\frac{1}{\xi}\left\{ W_{T-1}-S_{T-1}\right\} +\frac{1}{\xi^{2}}\frac{\phi\left(\xi\left\{ W_{T-1}-S_{T-1}\right\} \right)}{\Phi\left(\xi\left\{ W_{T-1}-S_{T-1}\right\} \right)}\\
+\frac{\xi S_{T-1}^{2}\phi\left(\xi S_{T-1}\right)}{\Phi\left(\xi S_{T-1}\right)}+S_{T-1}\left[\frac{\phi\left(\xi S_{T-1}\right)}{\Phi\left(\xi S_{T-1}\right)}\right]^{2}
\end{eqnarray*}
Setting $S_{T-1}=W_{T-1}/2$ gives $RHS=LHS$. We have, 
\begin{eqnarray*}
V_{T-1}\left(P_{T-2},W_{T-1}\right) & = & S_{T-1}\sigma_{\varepsilon}\left[\frac{\theta S_{T-1}}{\sigma_{\varepsilon}}+\frac{\phi\left(\frac{\theta S_{T-1}}{\sigma_{\varepsilon}}\right)}{\Phi\left(\frac{\theta S_{T-1}}{\sigma_{\varepsilon}}\right)}\right]\\
 &  & +\left(W_{T-1}-S_{T-1}\right)\sigma_{\varepsilon}\left[\frac{\theta\left(W_{T-1}-S_{T-1}\right)}{\sigma_{\varepsilon}}+\frac{\phi\left(\frac{\theta\left(W_{T-1}-S_{T-1}\right)}{\sigma_{\varepsilon}}\right)}{\Phi\left(\frac{\theta\left(W_{T-1}-S_{T-1}\right)}{\sigma_{\varepsilon}}\right)}\right]
\end{eqnarray*}
\[
V_{T-1}\left(P_{T-2},W_{T-1}\right)=W_{T-1}\sigma_{\varepsilon}\left[\frac{\theta W_{T-1}}{2\sigma_{\varepsilon}}+\frac{\phi\left(\frac{\theta W_{T-1}}{2\sigma_{\varepsilon}}\right)}{\Phi\left(\frac{\theta W_{T-1}}{2\sigma_{\varepsilon}}\right)}\right]
\]
Absent closed form solutions, numerical techniques using $\xi_{1}>0$
can be tried. We can also set $S_{T-1}\approx\omega_{1}\left(W_{T-1}\right)$
using a well behaved (continuous and differentiable) function, $\omega_{1}$.
But the former approach is simpler and lends itself easily to numerical
solutions that we will attempt in the more complex laws of motion
to follow. 
\[
S_{T-1}\approx\xi_{1}W_{T-1}
\]
or with additional terms including non-linear regressions as,
\[
S_{T-1}\approx\xi_{0}+\xi_{1}W_{T-1}+\xi_{2}\left(W_{T-1}\right)^{2}\text{ OR }S_{T-1}\approx\xi_{0}\left(W_{T-1}\right)^{\xi_{1}}
\]
\[
V_{T-1}\left(P_{T-2},W_{T-1}\right)=\left[\sigma_{\varepsilon}\xi_{1}W_{T-1}\psi\left(\xi\xi_{1}W_{T-1}\right)+\left\{ W_{T-1}-\xi_{1}W_{T-1}\right\} \sigma_{\varepsilon}\psi\left\{ \xi\left(W_{T-1}-\xi_{1}W_{T-1}\right)\right\} \right]
\]
\[
=\sigma_{\varepsilon}W_{T-1}\left[\xi_{1}\psi\left(\xi\xi_{1}W_{T-1}\right)+\left(1-\xi_{1}\right)\psi\left\{ \xi W_{T-1}\left(1-\xi_{1}\right)\right\} \right]
\]
\[
=\psi_{1}\left(W_{T-1}\right)\;,\text{here, }\psi_{1}\text{ is a convex function}.
\]
Continuing the recursion,
\[
V_{T-2}\left(P_{T-3},W_{T-2}\right)=\underset{\left\{ S_{T-2}\right\} }{\min}\:E_{T-2}\left[\max\left\{ \left(P_{T-2}-P_{T-3}\right),0\right\} S_{T-2}+V_{T-1}\left(P_{T-2},W_{T-1}\right)\right]
\]
\[
=\underset{\left\{ S_{T-2}\right\} }{\min}\:E_{T-2}\left[\max\left\{ \left(\theta S_{T-2}+\varepsilon_{T-2}\right),0\right\} S_{T-2}+V_{T-1}\left(P_{T-3}+\theta S_{T-2}+\varepsilon_{T-3},W_{T-2}-S_{T-2}\right)\right]
\]
\[
=\underset{\left\{ S_{T-2}\right\} }{\min}\left[S_{T-2}\sigma_{\varepsilon}\left\{ \frac{\theta S_{T-2}}{\sigma_{\varepsilon}}+\frac{\phi\left(\frac{\theta S_{T-2}}{\sigma_{\varepsilon}}\right)}{\Phi\left(\frac{\theta S_{T-2}}{\sigma_{\varepsilon}}\right)}\right\} +\left(W_{T-2}-S_{T-2}\right)\sigma_{\varepsilon}\left\{ \frac{\theta\left(W_{T-2}-S_{T-2}\right)}{2\sigma_{\varepsilon}}+\frac{\phi\left[\frac{\theta\left(W_{T-2}-S_{T-2}\right)}{2\sigma_{\varepsilon}}\right]}{\Phi\left[\frac{\theta\left(W_{T-2}-S_{T-2}\right)}{2\sigma_{\varepsilon}}\right]}\right\} \right]
\]
First Order Conditions (FOC) give,
\begin{eqnarray*}
3\theta S_{T-2}-\theta W_{T-2}+\frac{\sigma_{\varepsilon}\phi\left(\frac{\theta S_{T-2}}{\sigma_{\varepsilon}}\right)}{\Phi\left(\frac{\theta S_{T-2}}{\sigma_{\varepsilon}}\right)}-\frac{\theta^{2}S_{T-2}^{2}\phi\left(\frac{\theta S_{T-2}}{\sigma_{\varepsilon}}\right)}{\sigma_{\varepsilon}\Phi\left(\frac{\theta S_{T-2}}{\sigma_{\varepsilon}}\right)}-\theta S_{T-2}\left[\frac{\phi\left(\frac{\theta S_{T-2}}{\sigma_{\varepsilon}}\right)}{\Phi\left(\frac{\theta S_{T-2}}{\sigma_{\varepsilon}}\right)}\right]^{2}\\
-\frac{\sigma_{\varepsilon}\phi\left[\frac{\theta\left(W_{T-2}-S_{T-2}\right)}{2\sigma_{\varepsilon}}\right]}{\Phi\left[\frac{\theta\left(W_{T-2}-S_{T-2}\right)}{2\sigma_{\varepsilon}}\right]}+\frac{\theta^{2}\left(W_{T-2}-S_{T-2}\right)^{2}}{4\sigma_{\varepsilon}}\frac{\phi\left[\frac{\theta\left(W_{T-2}-S_{T-2}\right)}{2\sigma_{\varepsilon}}\right]}{\Phi\left[\frac{\theta\left(W_{T-2}-S_{T-2}\right)}{2\sigma_{\varepsilon}}\right]}+\frac{\theta\left(W_{T-2}-S_{T-2}\right)}{2}\left\{ \frac{\phi\left[\frac{\theta\left(W_{T-2}-S_{T-2}\right)}{2\sigma_{\varepsilon}}\right]}{\Phi\left[\frac{\theta\left(W_{T-2}-S_{T-2}\right)}{2\sigma_{\varepsilon}}\right]}\right\} ^{2} & = & 0
\end{eqnarray*}
This gives, $S_{T-2}=W_{T-2}/3$ and the corresponding value function
as, 
\[
V_{T-2}\left(P_{T-3},W_{T-2}\right)=\frac{W_{T-2}}{3}\sigma_{\varepsilon}\left\{ \frac{\theta}{\sigma_{\varepsilon}}\frac{W_{T-2}}{3}+\frac{\phi\left(\frac{\theta}{\sigma_{\varepsilon}}\frac{W_{T-2}}{3}\right)}{\Phi\left(\frac{\theta}{\sigma_{\varepsilon}}\frac{W_{T-2}}{3}\right)}\right\} +\left(\frac{2W_{T-2}}{3}\right)\sigma_{\varepsilon}\left\{ \frac{\theta}{2\sigma_{\varepsilon}}\frac{2W_{T-2}}{3}+\frac{\phi\left[\frac{\theta}{2\sigma_{\varepsilon}}\frac{2W_{T-2}}{3}\right]}{\Phi\left[\frac{\theta}{2\sigma_{\varepsilon}}\frac{2W_{T-2}}{3}\right]}\right\} 
\]
\[
V_{T-2}\left(P_{T-3},W_{T-2}\right)=\sigma_{\varepsilon}W_{T-2}\left[\frac{\theta W_{T-2}}{3\sigma_{\varepsilon}}+\frac{\phi\left(\frac{\theta W_{T-2}}{3\sigma_{\varepsilon}}\right)}{\Phi\left(\frac{\theta W_{T-2}}{3\sigma_{\varepsilon}}\right)}\right]
\]

We show the general case using induction. Let the value function hold
for $T-K$
\[
V_{T-K}\left(P_{T-K-1},W_{T-K}\right)=\sigma_{\varepsilon}W_{T-K}\left[\frac{\theta W_{T-K}}{\left(K+1\right)\sigma_{\varepsilon}}+\frac{\phi\left(\frac{\theta W_{T-K}}{\left(K+1\right)\sigma_{\varepsilon}}\right)}{\Phi\left(\frac{\theta W_{T-K}}{\left(K+1\right)\sigma_{\varepsilon}}\right)}\right]
\]
Continuing the recursion,
\begin{eqnarray*}
V_{T-K-1}\left(P_{T-K-2},W_{T-K-1}\right) & = & \underset{\left\{ S_{T-K-1}\right\} }{\min}\:E_{T-K-1}\left[\max\left\{ \left(P_{T-K-1}-P_{T-K-2}\right),0\right\} S_{T-K-1}\right.\\
 &  & +\left.V_{T-K}\left(P_{T-K-1},W_{T-K}\right)\right]
\end{eqnarray*}
\begin{eqnarray*}
 & = & \underset{\left\{ S_{T-K-1}\right\} }{\min}\:E_{T-K-1}\left[\max\left\{ \left(\theta S_{T-K-1}+\varepsilon_{T-K-1}\right),0\right\} S_{T-K-1}\right.\\
 &  & +\left.V_{T-K}\left(P_{T-K-2}+\theta S_{T-K-1}+\varepsilon_{T-K-2},W_{T-K-1}-S_{T-K-1}\right)\right]
\end{eqnarray*}
\begin{eqnarray*}
 & = & \underset{\left\{ S_{T-K-1}\right\} }{\min}\left[S_{T-K-1}\sigma_{\varepsilon}\left\{ \frac{\theta S_{T-K-1}}{\sigma_{\varepsilon}}+\frac{\phi\left(\frac{\theta S_{T-K-1}}{\sigma_{\varepsilon}}\right)}{\Phi\left(\frac{\theta S_{T-K-1}}{\sigma_{\varepsilon}}\right)}\right\} \right.\\
 &  & +\left.\left(W_{T-K-1}-S_{T-K-1}\right)\sigma_{\varepsilon}\left\{ \frac{\theta\left(W_{T-K-1}-S_{T-K-1}\right)}{\left(K+1\right)\sigma_{\varepsilon}}+\frac{\phi\left[\frac{\theta\left(W_{T-K-1}-S_{T-K-1}\right)}{\left(K+1\right)\sigma_{\varepsilon}}\right]}{\Phi\left[\frac{\theta\left(W_{T-K-1}-S_{T-K-1}\right)}{\left(K+1\right)\sigma_{\varepsilon}}\right]}\right\} \right]
\end{eqnarray*}
\begin{eqnarray*}
 & = & \underset{\left\{ S_{T-K-1}\right\} }{\min}\left[S_{T-K-1}\left\{ \xi S_{T-K-1}+\frac{\phi\left(\xi S_{T-K-1}\right)}{\Phi\left(\xi S_{T-K-1}\right)}\right\} \right.\\
 &  & +\left.\left(W_{T-K-1}-S_{T-K-1}\right)\left\{ \frac{\xi\left(W_{T-K-1}-S_{T-K-1}\right)}{\left(K+1\right)}+\frac{\phi\left[\frac{\xi\left(W_{T-K-1}-S_{T-K-1}\right)}{\left(K+1\right)}\right]}{\Phi\left[\frac{\xi\left(W_{T-K-1}-S_{T-K-1}\right)}{\left(K+1\right)}\right]}\right\} \right]
\end{eqnarray*}
First Order Conditions (FOC) give,
\begin{eqnarray*}
\left\{ \xi S_{T-K-1}+\frac{\phi\left(\xi S_{T-K-1}\right)}{\Phi\left(\xi S_{T-K-1}\right)}\right\} \\
+S_{T-K-1}\left\{ \xi-\xi^{2}S_{T-K-1}\frac{\phi\left(\xi S_{T-K-1}\right)}{\Phi\left(\xi S_{T-K-1}\right)}-\xi\left[\frac{\phi\left(\xi S_{T-K-1}\right)}{\Phi\left(\xi S_{T-K-1}\right)}\right]^{2}\right\} \\
-\left\{ \frac{\xi\left(W_{T-K-1}-S_{T-K-1}\right)}{\left(K+1\right)}+\frac{\phi\left[\frac{\xi\left(W_{T-K-1}-S_{T-K-1}\right)}{\left(K+1\right)}\right]}{\Phi\left[\frac{\xi\left(W_{T-K-1}-S_{T-K-1}\right)}{\left(K+1\right)}\right]}\right\} \\
+\left(W_{T-K-1}-S_{T-K-1}\right)\left\{ -\frac{\xi}{\left(K+1\right)}+\frac{\xi^{2}\left(W_{T-K-1}-S_{T-K-1}\right)}{\left(K+1\right)^{2}}\frac{\phi\left[\frac{\xi\left(W_{T-K-1}-S_{T-K-1}\right)}{\left(K+1\right)}\right]}{\Phi\left[\frac{\xi\left(W_{T-K-1}-S_{T-K-1}\right)}{\left(K+1\right)}\right]}\right.\\
+\left.\frac{\xi}{\left(K+1\right)}\left[\frac{\phi\left[\frac{\xi\left(W_{T-K-1}-S_{T-K-1}\right)}{\left(K+1\right)}\right]}{\Phi\left[\frac{\xi\left(W_{T-K-1}-S_{T-K-1}\right)}{\left(K+1\right)}\right]}\right]^{2}\right\}  & = & 0
\end{eqnarray*}
\begin{eqnarray*}
2\xi S_{T-K-1}+\frac{\phi\left(\xi S_{T-K-1}\right)}{\Phi\left(\xi S_{T-K-1}\right)}+\frac{\xi^{2}\left(W_{T-K-1}-S_{T-K-1}\right)^{2}}{\left(K+1\right)^{2}}\frac{\phi\left[\frac{\xi\left(W_{T-K-1}-S_{T-K-1}\right)}{\left(K+1\right)}\right]}{\Phi\left[\frac{\xi\left(W_{T-K-1}-S_{T-K-1}\right)}{\left(K+1\right)}\right]}\\
+\frac{\xi\left(W_{T-K-1}-S_{T-K-1}\right)}{\left(K+1\right)}\left[\frac{\phi\left[\frac{\xi\left(W_{T-K-1}-S_{T-K-1}\right)}{\left(K+1\right)}\right]}{\Phi\left[\frac{\xi\left(W_{T-K-1}-S_{T-K-1}\right)}{\left(K+1\right)}\right]}\right]^{2} & =\\
\xi^{2}S_{T-K-1}^{2}\frac{\phi\left(\xi S_{T-K-1}\right)}{\Phi\left(\xi S_{T-K-1}\right)}+\xi S_{T-K-1}\left[\frac{\phi\left(\xi S_{T-K-1}\right)}{\Phi\left(\xi S_{T-K-1}\right)}\right]^{2}\\
+\frac{2\xi\left(W_{T-K-1}-S_{T-K-1}\right)}{\left(K+1\right)}+\frac{\phi\left[\frac{\xi\left(W_{T-K-1}-S_{T-K-1}\right)}{\left(K+1\right)}\right]}{\Phi\left[\frac{\xi\left(W_{T-K-1}-S_{T-K-1}\right)}{\left(K+1\right)}\right]}
\end{eqnarray*}
This gives, $S_{T-K-1}=W_{T-K-1}/\left(K+2\right)$ and the corresponding
value function is, 
\begin{eqnarray*}
V_{T-K-1}\left(P_{T-K-2},W_{T-K-1}\right) & = & \frac{W_{T-K-1}}{\left(K+2\right)}\sigma_{\varepsilon}\left\{ \frac{\theta}{\sigma_{\varepsilon}}\frac{W_{T-K-1}}{\left(K+2\right)}+\frac{\phi\left(\frac{\theta}{\sigma_{\varepsilon}}\frac{W_{T-K-1}}{\left(K+2\right)}\right)}{\Phi\left(\frac{\theta}{\sigma_{\varepsilon}}\frac{W_{T-K-1}}{\left(K+2\right)}\right)}\right\} \\
 &  & +\left(W_{T-K-1}-\frac{W_{T-K-1}}{\left(K+2\right)}\right)\sigma_{\varepsilon}\left\{ \frac{\theta}{\left(K+1\right)\sigma_{\varepsilon}}\left(W_{T-K-1}-\frac{W_{T-K-1}}{\left(K+2\right)}\right)\right.\\
 &  & +\left.\frac{\phi\left[\frac{\theta}{\left(K+1\right)\sigma_{\varepsilon}}\left(W_{T-K-1}-\frac{W_{T-K-1}}{\left(K+2\right)}\right)\right]}{\Phi\left[\frac{\theta}{\left(K+1\right)\sigma_{\varepsilon}}\left(W_{T-K-1}-\frac{W_{T-K-1}}{\left(K+2\right)}\right)\right]}\right\} 
\end{eqnarray*}
\[
V_{T-K-1}\left(P_{T-K-2},W_{T-K-1}\right)=\sigma_{\varepsilon}W_{T-K-1}\left[\frac{\theta}{\sigma_{\varepsilon}}\frac{W_{T-K-1}}{\left(K+2\right)}+\frac{\phi\left(\frac{\theta}{\sigma_{\varepsilon}}\frac{W_{T-K-1}}{\left(K+2\right)}\right)}{\Phi\left(\frac{\theta}{\sigma_{\varepsilon}}\frac{W_{T-K-1}}{\left(K+2\right)}\right)}\right]
\]
This completes the induction. 
\end{doublespace}
\end{proof}
\begin{doublespace}

\subsection{\label{subsec:Proof-of-Proposition-benchmark-complex}Proof of Proposition
\ref{The-number-of-benchmark-complex}}
\end{doublespace}
\begin{proof}
\begin{doublespace}
Consider, 

\[
V_{T}\left(P_{T-1},W_{T}\right)=E_{T}\left[\max\left\{ \left(\theta W_{T}+\varepsilon_{T}\right),0\right\} W_{T}\right]
\]

\[
V_{T}\left(P_{T-1},W_{T}\right)=E_{T}\left[\max\left\{ \left(\theta W_{T}^{2}+W_{T}\varepsilon_{T}\right),0\right\} \right]
\]
\[
V_{T}\left(P_{T-1},W_{T}\right)=E_{T}\left[\left.\left(\theta W_{T}^{2}+W_{T}\varepsilon_{T}\right)\right|\left(\theta W_{T}^{2}+W_{T}\varepsilon_{T}\right)>0\right]
\]
\[
\left\{ \because\;E\left[max\left(X,c\right)\right]=E\left[X\left|X>c\right.\right]Pr\left[X>c\right]+E\left[c\left|X\text{\ensuremath{\le}}c\right.\right]Pr\left[X\text{\ensuremath{\le}}c\right]\;\right\} 
\]
This is of the form, $E\left[\left.Y\right|Y>0\right]$ where, $Y=\left(\theta W_{T}^{2}+W_{T}\varepsilon_{T}\right)$.
We then need to calculate, 
\[
E_{T}\left[\left.\left(\theta W_{T}^{2}+W_{T}\sigma_{\varepsilon}Z\right)\right|Z>\left(-\frac{\theta W_{T}}{\sigma_{\varepsilon}}\right)\right]\quad,where\;Z\sim N\left(0,1\right)
\]
\[
\left[\because\;Y\sim N\left(\theta W_{T}^{2},W_{T}^{2}\sigma_{\varepsilon}^{2}\right)\equiv Y\sim N\left(\mu,\sigma^{2}\right)\Rightarrow Y=\mu+\sigma Z\;;\;Y>0\Rightarrow Z>-\mu/\sigma\right]
\]
We have for every standard normal distribution, $Z$, and for every
$u,$ $Pr\left[Z>\text{\textminus}u\right]=Pr\left[Z<u\right]=\mathbf{\Phi}\left(u\right)$.
Here, $\phi$ and $\mathbf{\Phi}$ are the standard normal PDF and
CDF, respectively. 
\begin{eqnarray*}
E\left[\left.Z\right|Z>-u\right] & = & \frac{1}{\mathbf{\Phi}\left(u\right)}\left[\int_{-u}^{\infty}t\phi\left(t\right)dt\right]\\
 & = & \frac{1}{\mathbf{\Phi}\left(u\right)}\left[\left.-\phi\left(t\right)\right|_{-u}^{\infty}\right]=\frac{\phi\left(u\right)}{\mathbf{\Phi}\left(u\right)}
\end{eqnarray*}
Hence we have, 
\begin{eqnarray*}
E\left[\left.Y\right|Y>0\right] & = & \mu+\sigma E\left[\left.Z\right|Z>\left(-\frac{\mu}{\sigma}\right)\right]\\
 & = & \mu+\frac{\sigma\phi\left(\mu/\sigma\right)}{\mathbf{\Phi}\left(\mu/\sigma\right)}
\end{eqnarray*}
Setting, $\psi\left(u\right)=u+\phi\left(u\right)/\Phi\left(u\right)$,
\[
E\left[\left.Y\right|Y>0\right]=\sigma\psi\left(\mu/\sigma\right)
\]
\begin{eqnarray*}
V_{T}\left(P_{T-1},W_{T}\right) & = & E_{T}\left[\left.\left(\theta W_{T}^{2}+W_{T}\varepsilon_{T}\right)\right|\left(\theta W_{T}^{2}+W_{T}\varepsilon_{T}\right)>0\right]\\
 & = & W_{T}\sigma_{\varepsilon}\left[\frac{\theta W_{T}}{\sigma_{\varepsilon}}+\frac{\phi\left(\frac{\theta W_{T}}{\sigma_{\varepsilon}}\right)}{\Phi\left(\frac{\theta W_{T}}{\sigma_{\varepsilon}}\right)}\right]=\sigma_{\varepsilon}W_{T}\psi\left(\xi W_{T}\right)\;,\;\xi=\frac{\theta}{\sigma_{\varepsilon}}
\end{eqnarray*}
 In the next to last period, $T-1$, the Bellman equation is,
\[
V_{T-1}\left(P_{T-2},W_{T-1}\right)=\underset{\left\{ S_{T-1}\right\} }{\min}\:E_{T-1}\left[\max\left\{ \left(P_{T-1}-P_{T-2}\right),0\right\} W_{T-1}+V_{T}\left(P_{T-1},W_{T}\right)\right]
\]
\[
=\underset{\left\{ S_{T-1}\right\} }{\min}\:E_{T-1}\left[\max\left\{ \left(\theta S_{T-1}+\varepsilon_{T-1}\right),0\right\} W_{T-1}+V_{T}\left(P_{T-2}+\theta S_{T-1}+\varepsilon_{T-1},W_{T-1}-S_{T-1}\right)\right]
\]
\[
=\underset{\left\{ S_{T-1}\right\} }{\min}\left\{ W_{T-1}\sigma_{\varepsilon}\left[\frac{\theta S_{T-1}}{\sigma_{\varepsilon}}+\frac{\phi\left(\frac{\theta S_{T-1}}{\sigma_{\varepsilon}}\right)}{\Phi\left(\frac{\theta S_{T-1}}{\sigma_{\varepsilon}}\right)}\right]+\left(W_{T-1}-S_{T-1}\right)\sigma_{\varepsilon}\left[\frac{\theta\left(W_{T-1}-S_{T-1}\right)}{\sigma_{\varepsilon}}+\frac{\phi\left(\frac{\theta\left(W_{T-1}-S_{T-1}\right)}{\sigma_{\varepsilon}}\right)}{\Phi\left(\frac{\theta\left(W_{T-1}-S_{T-1}\right)}{\sigma_{\varepsilon}}\right)}\right]\right\} 
\]

\[
V_{T-1}\left(P_{T-2},W_{T-1}\right)=\underset{\left\{ S_{T-1}\right\} }{\min}\left[W_{T-1}\sigma_{\varepsilon}\psi\left(\xi S_{T-1}\right)+\left(W_{T-1}-S_{T-1}\right)\sigma_{\varepsilon}\psi\left\{ \xi\left(W_{T-1}-S_{T-1}\right)\right\} \right]
\]
\[
\text{Here, }\psi\left(u\right)=u+\phi\left(u\right)/\Phi\left(u\right)\;;\;\xi=\frac{\theta}{\sigma_{\varepsilon}}\;;\;\text{Note that, }W_{T-1}=S_{T-1}+W_{T}
\]
We can show the above expression to be a convex function with a unique
minimum, since it is the sum of the portion shown to be convex earlier
(Proposition \ref{The-value-function-convexity}, Appendix \ref{subsec:Proof-of-Proposition-Convexity}),
another convex function and a linear component.

First Order Conditions (FOC) give,
\[
\frac{\partial}{\partial S_{T-1}}\left[W_{T-1}\sigma_{\varepsilon}\psi\left(\xi S_{T-1}\right)+\left(W_{T-1}-S_{T-1}\right)\sigma_{\varepsilon}\psi\left\{ \xi\left(W_{T-1}-S_{T-1}\right)\right\} \right]=0
\]
\[
\xi W_{T-1}\psi'\left(\xi S_{T-1}\right)-\xi\left(W_{T-1}-S_{T-1}\right)\psi'\left(\xi\left\{ W_{T-1}-S_{T-1}\right\} \right)-\psi\left(\xi\left\{ W_{T-1}-S_{T-1}\right\} \right)=0
\]
\begin{eqnarray*}
\left(\xi W_{T-1}\right)\left\{ 1-\frac{\xi S_{T-1}\phi\left(\xi S_{T-1}\right)}{\Phi\left(\xi S_{T-1}\right)}-\left[\frac{\phi\left(\xi S_{T-1}\right)}{\Phi\left(\xi S_{T-1}\right)}\right]^{2}\right\} \\
-\xi\left(W_{T-1}-S_{T-1}\right)\left\{ 1-\frac{\xi\left(W_{T-1}-S_{T-1}\right)\phi\left(\xi\left\{ W_{T-1}-S_{T-1}\right\} \right)}{\Phi\left(\xi\left\{ W_{T-1}-S_{T-1}\right\} \right)}-\left[\frac{\phi\left(\xi\left\{ W_{T-1}-S_{T-1}\right\} \right)}{\Phi\left(\xi\left\{ W_{T-1}-S_{T-1}\right\} \right)}\right]^{2}\right\} \\
-\left\{ \xi\left\{ W_{T-1}-S_{T-1}\right\} +\frac{\phi\left(\xi\left\{ W_{T-1}-S_{T-1}\right\} \right)}{\Phi\left(\xi\left\{ W_{T-1}-S_{T-1}\right\} \right)}\right\}  & = & 0
\end{eqnarray*}
\[
\left[\because\;\psi'\left(u\right)=1-\frac{u\phi\left(u\right)}{\Phi\left(u\right)}-\left[\frac{\phi\left(u\right)}{\Phi\left(u\right)}\right]^{2}\text{ and }\psi\left(u\right)=u+\frac{\phi\left(u\right)}{\Phi\left(u\right)}\right]
\]
\begin{eqnarray*}
W_{T-1}+\frac{\xi\left(W_{T-1}-S_{T-1}\right)^{2}\phi\left(\xi\left\{ W_{T-1}-S_{T-1}\right\} \right)}{\Phi\left(\xi\left\{ W_{T-1}-S_{T-1}\right\} \right)}+\left(W_{T-1}-S_{T-1}\right)\left[\frac{\phi\left(\xi\left\{ W_{T-1}-S_{T-1}\right\} \right)}{\Phi\left(\xi\left\{ W_{T-1}-S_{T-1}\right\} \right)}\right]^{2} &  & =\\
\left(W_{T-1}-S_{T-1}\right)+\left\{ W_{T-1}-S_{T-1}\right\} +\frac{1}{\xi}\frac{\phi\left(\xi\left\{ W_{T-1}-S_{T-1}\right\} \right)}{\Phi\left(\xi\left\{ W_{T-1}-S_{T-1}\right\} \right)}+\frac{\xi W_{T-1}S_{T-1}\phi\left(\xi S_{T-1}\right)}{\Phi\left(\xi S_{T-1}\right)}+W_{T-1}\left[\frac{\phi\left(\xi S_{T-1}\right)}{\Phi\left(\xi S_{T-1}\right)}\right]^{2}
\end{eqnarray*}
\end{doublespace}
\end{proof}
\begin{doublespace}

\subsection{\label{subsec:Proof-of-Proposition-additional-simple}Proof of Proposition
\ref{The-number-of-additional-uncertainty-simple}}
\end{doublespace}
\begin{proof}
\begin{doublespace}
Consider,
\[
V_{T}\left(P_{T-1},X_{T-1},W_{T}\right)=E_{T}\left[\max\left\{ \left(\theta W_{T}+\varepsilon_{T}+\gamma X_{T}\right),0\right\} W_{T}\right]
\]

\[
V_{T}\left(P_{T-1},X_{T-1},W_{T}\right)=E_{T}\left[\max\left\{ \left(\theta W_{T}^{2}+W_{T}\varepsilon_{T}+\gamma\rho W_{T}X_{T-1}+\gamma W_{T}\eta_{T}\right),0\right\} \right]
\]
\[
=E_{T}\left[\left.\left(\theta W_{T}^{2}+W_{T}\varepsilon_{T}+\gamma\rho W_{T}X_{T-1}+\gamma W_{T}\eta_{T}\right)\right|\left(\theta W_{T}^{2}+W_{T}\varepsilon_{T}+\gamma\rho W_{T}X_{T-1}+\gamma W_{T}\eta_{T}\right)>0\right]
\]
\[
\left\{ \because\;E\left[max\left(X,c\right)\right]=E\left[X\left|X>c\right.\right]Pr\left[X>c\right]+E\left[c\left|X\text{\ensuremath{\le}}c\right.\right]Pr\left[X\text{\ensuremath{\le}}c\right]\;\right\} 
\]
This is of the form, $E\left[\left.Y\right|Y>0\right]$ where, $Y=\left(\theta W_{T}^{2}+W_{T}\varepsilon_{T}+\gamma\rho W_{T}X_{T-1}+\gamma W_{T}\eta_{T}\right)$.
We then need to calculate, 
\[
E_{T}\left[\left.\left\{ \theta W_{T}^{2}+\gamma\rho W_{T}X_{T-1}+W_{T}\left(\sqrt{\gamma^{2}\sigma_{\eta}^{2}+\sigma_{\varepsilon}^{2}}\;\right)Z\right\} \right|Z>\left(-\frac{\theta W_{T}+\gamma\rho X_{T-1}}{\sqrt{\gamma^{2}\sigma_{\eta}^{2}+\sigma_{\varepsilon}^{2}}}\right)\right]\quad,where\;Z\sim N\left(0,1\right)
\]
\[
\left[\because\;X\sim N(\mu_{X},\sigma_{X}^{2})\;;\;Y\sim N(\mu_{Y},\sigma_{Y}^{2})\;;\;U=X+Y\Rightarrow U\sim N(\mu_{X}+\mu_{Y},\sigma_{X}^{2}+\sigma_{Y}^{2})\right]
\]
\[
\left[\because\;Y\sim N\left(\theta W_{T}^{2}+\gamma\rho W_{T}X_{T-1},W_{T}^{2}\left\{ \gamma^{2}\sigma_{\eta}^{2}+\sigma_{\varepsilon}^{2}\right\} \right)\equiv Y\sim N\left(\mu,\sigma^{2}\right)\Rightarrow Y=\mu+\sigma Z\;;\;Y>0\Rightarrow Z>-\mu/\sigma\right]
\]
We have for every standard normal distribution, $Z$, and for every
$u,$ $Pr\left[Z>\text{\textminus}u\right]=Pr\left[Z<u\right]=\mathbf{\Phi}\left(u\right)$.
Here, $\phi$ and $\mathbf{\Phi}$ are the standard normal PDF and
CDF, respectively. 
\begin{eqnarray*}
E\left[\left.Z\right|Z>-u\right] & = & \frac{1}{\mathbf{\Phi}\left(u\right)}\left[\int_{-u}^{\infty}t\phi\left(t\right)dt\right]\\
 & = & \frac{1}{\mathbf{\Phi}\left(u\right)}\left[\left.-\phi\left(t\right)\right|_{-u}^{\infty}\right]=\frac{\phi\left(u\right)}{\mathbf{\Phi}\left(u\right)}
\end{eqnarray*}
Hence we have, 
\begin{eqnarray*}
E\left[\left.Y\right|Y>0\right] & = & \mu+\sigma E\left[\left.Z\right|Z>\left(-\frac{\mu}{\sigma}\right)\right]\\
 & = & \mu+\frac{\sigma\phi\left(\mu/\sigma\right)}{\mathbf{\Phi}\left(\mu/\sigma\right)}
\end{eqnarray*}
Setting, $\psi\left(u\right)=u+\phi\left(u\right)/\Phi\left(u\right)$,
\[
E\left[\left.Y\right|Y>0\right]=\sigma\psi\left(\mu/\sigma\right)
\]
\[
V_{T}\left(P_{T-1},X_{T-1},W_{T}\right)=E_{T}\left[\left.\left(\theta W_{T}^{2}+W_{T}\varepsilon_{T}+\gamma\rho W_{T}X_{T-1}+\gamma W_{T}\eta_{T}\right)\right|\left(\theta W_{T}^{2}+W_{T}\varepsilon_{T}+\gamma\rho W_{T}X_{T-1}+\gamma W_{T}\eta_{T}\right)>0\right]
\]
\[
=W_{T}\left(\sqrt{\gamma^{2}\sigma_{\eta}^{2}+\sigma_{\varepsilon}^{2}}\;\right)\left[\frac{\theta W_{T}+\gamma\rho X_{T-1}}{\sqrt{\gamma^{2}\sigma_{\eta}^{2}+\sigma_{\varepsilon}^{2}}}+\frac{\phi\left(\frac{\theta W_{T}+\gamma\rho X_{T-1}}{\sqrt{\gamma^{2}\sigma_{\eta}^{2}+\sigma_{\varepsilon}^{2}}}\right)}{\Phi\left(\frac{\theta W_{T}+\gamma\rho X_{T-1}}{\sqrt{\gamma^{2}\sigma_{\eta}^{2}+\sigma_{\varepsilon}^{2}}}\right)}\right]
\]
\[
=\left(\sqrt{\gamma^{2}\sigma_{\eta}^{2}+\sigma_{\varepsilon}^{2}}\;\right)W_{T}\psi\left(\xi W_{T}\right)\;,\;\xi W_{T}=\frac{\theta W_{T}+\gamma\rho X_{T-1}}{\sqrt{\gamma^{2}\sigma_{\eta}^{2}+\sigma_{\varepsilon}^{2}}}
\]
 In the next to last period, $T-1$, the Bellman equation is,
\[
V_{T-1}\left(P_{T-2},X_{T-2},W_{T-1}\right)=\underset{\left\{ S_{T-1}\right\} }{\min}\:E_{T-1}\left[\max\left\{ \left(P_{T-1}-P_{T-2}\right),0\right\} S_{T-1}+V_{T}\left(P_{T-1},X_{T-1},W_{T}\right)\right]
\]
\begin{eqnarray*}
 & = & \underset{\left\{ S_{T-1}\right\} }{\min}\:E_{T-1}\left[\max\left\{ \left(\theta S_{T-1}^{2}+S_{T-1}\varepsilon_{T-1}+\gamma\rho S_{T-1}X_{T-2}+\gamma S_{T-1}\eta_{T-1}\right),0\right\} \right.\\
 &  & \left.\quad\qquad+\quad V_{T}\left(P_{T-2}+\theta S_{T-1}+\varepsilon_{T-1}+\gamma\rho X_{T-2}+\gamma\eta_{T-1},\rho X_{T-2}+\eta_{T-1},W_{T-1}-S_{T-1}\right)\right]
\end{eqnarray*}
\begin{eqnarray*}
 & = & \underset{\left\{ S_{T-1}\right\} }{\min}\left\{ S_{T-1}\left(\sqrt{\gamma^{2}\sigma_{\eta}^{2}+\sigma_{\varepsilon}^{2}}\;\right)\left[\frac{\theta S_{T-1}+\gamma\rho X_{T-2}}{\sqrt{\gamma^{2}\sigma_{\eta}^{2}+\sigma_{\varepsilon}^{2}}}+\frac{\phi\left(\frac{\theta S_{T-1}+\gamma\rho X_{T-2}}{\sqrt{\gamma^{2}\sigma_{\eta}^{2}+\sigma_{\varepsilon}^{2}}}\right)}{\Phi\left(\frac{\theta S_{T-1}+\gamma\rho X_{T-2}}{\sqrt{\gamma^{2}\sigma_{\eta}^{2}+\sigma_{\varepsilon}^{2}}}\right)}\right]\right.\\
 &  & \left.+\left\{ W_{T-1}-S_{T-1}\right\} \left(\sqrt{\gamma^{2}\sigma_{\eta}^{2}+\sigma_{\varepsilon}^{2}}\;\right)\left[\frac{\theta\left(W_{T-1}-S_{T-1}\right)+\gamma\rho X_{T-2}}{\sqrt{\gamma^{2}\sigma_{\eta}^{2}+\sigma_{\varepsilon}^{2}}}+\frac{\phi\left(\frac{\theta\left(W_{T-1}-S_{T-1}\right)+\gamma\rho X_{T-2}}{\sqrt{\gamma^{2}\sigma_{\eta}^{2}+\sigma_{\varepsilon}^{2}}}\right)}{\Phi\left(\frac{\theta\left(W_{T-1}-S_{T-1}\right)+\gamma\rho X_{T-2}}{\sqrt{\gamma^{2}\sigma_{\eta}^{2}+\sigma_{\varepsilon}^{2}}}\right)}\right]\right\} 
\end{eqnarray*}
\[
=\underset{\left\{ S_{T-1}\right\} }{\min}\left[S_{T-1}\left(\sqrt{\gamma^{2}\sigma_{\eta}^{2}+\sigma_{\varepsilon}^{2}}\;\right)\psi\left(\xi_{1}S_{T-1}\right)+\left(W_{T-1}-S_{T-1}\right)\left(\sqrt{\gamma^{2}\sigma_{\eta}^{2}+\sigma_{\varepsilon}^{2}}\;\right)\psi\left\{ \xi_{1}\left(W_{T-1}-S_{T-1}\right)\right\} \right]
\]
\[
\text{Here, }\;\xi_{1}S_{T-1}=\frac{\theta S_{T-1}+\gamma\rho X_{T-2}}{\sqrt{\gamma^{2}\sigma_{\eta}^{2}+\sigma_{\varepsilon}^{2}}}\;\;\text{Also, let }\alpha_{1}=\gamma\rho X_{T-2},\;\beta=\sqrt{\gamma^{2}\sigma_{\eta}^{2}+\sigma_{\varepsilon}^{2}}
\]
\begin{eqnarray*}
 & = & \underset{\left\{ S_{T-1}\right\} }{\min}\left\{ \left[\theta S_{T-1}^{2}+\alpha_{1}S_{T-1}+\beta S_{T-1}\frac{\phi\left(\frac{\theta S_{T-1}+\alpha}{\beta}\right)}{\Phi\left(\frac{\theta S_{T-1}+\alpha}{\beta}\right)}\right]\right.\\
 &  & \left.+\left[\theta\left(W_{T-1}-S_{T-1}\right)^{2}+\alpha_{1}\left(W_{T-1}-S_{T-1}\right)+\beta\left(W_{T-1}-S_{T-1}\right)\frac{\phi\left(\frac{\theta\left(W_{T-1}-S_{T-1}\right)+\alpha_{1}}{\beta}\right)}{\Phi\left(\frac{\theta\left(W_{T-1}-S_{T-1}\right)+\alpha_{1}}{\beta}\right)}\right]\right\} 
\end{eqnarray*}
This is a convex function and numerical solutions can be obtained
at each stage of the recursion or taking First Order Conditions give,
\begin{eqnarray*}
2\theta S_{T-1}+\alpha_{1}+\beta\left\{ \frac{\phi\left(\frac{\theta S_{T-1}+\alpha_{1}}{\beta}\right)}{\Phi\left(\frac{\theta S_{T-1}+\alpha_{1}}{\beta}\right)}+\frac{\theta S_{T-1}}{\beta}\left[-\frac{\left(\frac{\theta S_{T-1}+\alpha_{1}}{\beta}\right)\phi\left(\frac{\theta S_{T-1}+\alpha_{1}}{\beta}\right)}{\Phi\left(\frac{\theta S_{T-1}+\alpha_{1}}{\beta}\right)}-\left\{ \frac{\phi\left(\frac{\theta S_{T-1}+\alpha_{1}}{\beta}\right)}{\Phi\left(\frac{\theta S_{T-1}+\alpha_{1}}{\beta}\right)}\right\} ^{2}\right]\right\} \\
-2\theta\left(W_{T-1}-S_{T-1}\right)-\alpha_{1}+\beta\left\{ -\frac{\phi\left(\frac{\theta\left(W_{T-1}-S_{T-1}\right)+\alpha_{1}}{\beta}\right)}{\Phi\left(\frac{\theta\left(W_{T-1}-S_{T-1}\right)+\alpha_{1}}{\beta}\right)}\right.\\
\left.+\frac{\theta\left(W_{T-1}-S_{T-1}\right)}{\beta}\left[\frac{\left(\frac{\theta\left(W_{T-1}-S_{T-1}\right)+\alpha_{1}}{\beta}\right)\phi\left(\frac{\theta\left(W_{T-1}-S_{T-1}\right)+\alpha_{1}}{\beta}\right)}{\Phi\left(\frac{\theta\left(W_{T-1}-S_{T-1}\right)+\alpha_{1}}{\beta}\right)}+\left\{ \frac{\phi\left(\frac{\theta\left(W_{T-1}-S_{T-1}\right)+\alpha_{1}}{\beta}\right)}{\Phi\left(\frac{\theta\left(W_{T-1}-S_{T-1}\right)+\alpha_{1}}{\beta}\right)}\right\} ^{2}\right]\right\}  & = & 0
\end{eqnarray*}
$S_{T-1}=W_{T-1}/2$ solves this, giving the value function,
\begin{eqnarray*}
V_{T-1}\left(P_{T-2},X_{T-2},W_{T-1}\right) & = & \frac{\theta}{2}W_{T-1}^{2}+\alpha_{1}W_{T-1}+\beta W_{T-1}\frac{\phi\left(\frac{\theta W_{T-1}+2\alpha_{1}}{2\beta}\right)}{\Phi\left(\frac{\theta W_{T-1}+2\alpha_{1}}{2\beta}\right)}
\end{eqnarray*}
Continuing the recursion,
\[
V_{T-2}\left(P_{T-3},X_{T-3},W_{T-2}\right)=\underset{\left\{ S_{T-2}\right\} }{\min}\:E_{T-2}\left[\max\left\{ \left(P_{T-2}-P_{T-3}\right),0\right\} S_{T-2}+V_{T-1}\left(P_{T-2},X_{T-2},W_{T-1}\right)\right]
\]
\begin{eqnarray*}
 & = & \underset{\left\{ S_{T-2}\right\} }{\min}\:E_{T-2}\left[\max\left\{ \left(\theta S_{T-2}^{2}+S_{T-2}\varepsilon_{T-2}+\gamma\rho S_{T-2}X_{T-3}+\gamma S_{T-2}\eta_{T-2}\right),0\right\} \right.\\
 &  & \left.\quad\qquad+\quad V_{T-1}\left(P_{T-3}+\theta S_{T-2}+\varepsilon_{T-2}+\gamma\rho X_{T-3}+\gamma\eta_{T-2},\rho X_{T-3}+\eta_{T-2},W_{T-2}-S_{T-2}\right)\right]
\end{eqnarray*}
\begin{eqnarray*}
 & = & \underset{\left\{ S_{T-2}\right\} }{\min}\left\{ S_{T-2}\left(\sqrt{\gamma^{2}\sigma_{\eta}^{2}+\sigma_{\varepsilon}^{2}}\;\right)\left[\frac{\theta S_{T-2}+\gamma\rho X_{T-3}}{\sqrt{\gamma^{2}\sigma_{\eta}^{2}+\sigma_{\varepsilon}^{2}}}+\frac{\phi\left(\frac{\theta S_{T-2}+\gamma\rho X_{T-3}}{\sqrt{\gamma^{2}\sigma_{\eta}^{2}+\sigma_{\varepsilon}^{2}}}\right)}{\Phi\left(\frac{\theta S_{T-2}+\gamma\rho X_{T-3}}{\sqrt{\gamma^{2}\sigma_{\eta}^{2}+\sigma_{\varepsilon}^{2}}}\right)}\right]\right.\\
 &  & \left.+\left\{ W_{T-2}-S_{T-2}\right\} \left(\sqrt{\gamma^{2}\sigma_{\eta}^{2}+\sigma_{\varepsilon}^{2}}\;\right)\left[\frac{\theta\left\{ W_{T-2}-S_{T-2}\right\} +2\gamma\rho X_{T-3}}{2\sqrt{\gamma^{2}\sigma_{\eta}^{2}+\sigma_{\varepsilon}^{2}}}+\frac{\phi\left(\frac{\theta\left\{ W_{T-2}-S_{T-2}\right\} +2\gamma\rho X_{T-3}}{2\sqrt{\gamma^{2}\sigma_{\eta}^{2}+\sigma_{\varepsilon}^{2}}}\right)}{\Phi\left(\frac{\theta\left\{ W_{T-2}-S_{T-2}\right\} +2\gamma\rho X_{T-3}}{2\sqrt{\gamma^{2}\sigma_{\eta}^{2}+\sigma_{\varepsilon}^{2}}}\right)}\right]\right\} 
\end{eqnarray*}
\[
=\underset{\left\{ S_{T-2}\right\} }{\min}\left[S_{T-2}\left(\sqrt{\gamma^{2}\sigma_{\eta}^{2}+\sigma_{\varepsilon}^{2}}\;\right)\psi\left(\xi_{2}S_{T-2}\right)+\left(W_{T-2}-S_{T-2}\right)\left(\sqrt{\gamma^{2}\sigma_{\eta}^{2}+\sigma_{\varepsilon}^{2}}\;\right)\psi\left\{ \xi_{2}\left(W_{T-2}-S_{T-2}\right)\right\} \right]
\]
\[
\text{Here, }\;\xi_{2}S_{T-2}=\frac{\theta S_{T-2}+2\gamma\rho X_{T-3}}{2\sqrt{\gamma^{2}\sigma_{\eta}^{2}+\sigma_{\varepsilon}^{2}}}\;\;\text{Also, let }\alpha_{2}=\gamma\rho X_{T-3},\;\beta=\sqrt{\gamma^{2}\sigma_{\eta}^{2}+\sigma_{\varepsilon}^{2}}
\]
\begin{eqnarray*}
 & = & \underset{\left\{ S_{T-2}\right\} }{\min}\left\{ \left[\theta S_{T-2}^{2}+\alpha_{2}S_{T-2}+\beta S_{T-2}\frac{\phi\left(\frac{\theta S_{T-2}+\alpha_{2}}{\beta}\right)}{\Phi\left(\frac{\theta S_{T-2}+\alpha_{2}}{\beta}\right)}\right]\right.\\
 &  & \left.+\left[\frac{\theta}{2}\left(W_{T-2}-S_{T-2}\right)^{2}+\alpha_{2}\left(W_{T-2}-S_{T-2}\right)+\beta\left(W_{T-2}-S_{T-2}\right)\frac{\phi\left(\frac{\theta\left(W_{T-2}-S_{T-2}\right)+2\alpha_{2}}{2\beta}\right)}{\Phi\left(\frac{\theta\left(W_{T-2}-S_{T-2}\right)+2\alpha_{2}}{2\beta}\right)}\right]\right\} 
\end{eqnarray*}
This is a convex function and numerical solutions can be obtained
at each stage of the recursion or taking First Order Conditions give,
\begin{eqnarray*}
2\theta S_{T-2}+\alpha_{2}+\beta\left\{ \frac{\phi\left(\frac{\theta S_{T-2}+\alpha_{2}}{\beta}\right)}{\Phi\left(\frac{\theta S_{T-2}+\alpha_{2}}{\beta}\right)}+\frac{\theta S_{T-2}}{\beta}\left[-\frac{\left(\frac{\theta S_{T-2}+\alpha_{2}}{\beta}\right)\phi\left(\frac{\theta S_{T-2}+\alpha_{2}}{\beta}\right)}{\Phi\left(\frac{\theta S_{T-2}+\alpha_{2}}{\beta}\right)}-\left\{ \frac{\phi\left(\frac{\theta S_{T-2}+\alpha_{2}}{\beta}\right)}{\Phi\left(\frac{\theta S_{T-2}+\alpha_{2}}{\beta}\right)}\right\} ^{2}\right]\right\} \\
-\theta\left(W_{T-2}-S_{T-2}\right)-\alpha_{2}+\beta\left\{ -\frac{\phi\left(\frac{\theta\left(W_{T-2}-S_{T-2}\right)+2\alpha_{2}}{2\beta}\right)}{\Phi\left(\frac{\theta\left(W_{T-2}-S_{T-2}\right)+2\alpha}{2\beta}\right)}\right.\\
\left.+\frac{\theta\left(W_{T-2}-S_{T-2}\right)}{2\beta}\left[\frac{\left(\frac{\theta\left(W_{T-2}-S_{T-2}\right)+2\alpha_{2}}{2\beta}\right)\phi\left(\frac{\theta\left(W_{T-2}-S_{T-2}\right)+2\alpha_{2}}{2\beta}\right)}{\Phi\left(\frac{\theta\left(W_{T-2}-S_{T-2}\right)+2\alpha_{2}}{2\beta}\right)}+\left\{ \frac{\phi\left(\frac{\theta\left(W_{T-2}-S_{T-2}\right)+2\alpha_{2}}{2\beta}\right)}{\Phi\left(\frac{\theta\left(W_{T-2}-S_{T-2}\right)+2\alpha_{2}}{2\beta}\right)}\right\} ^{2}\right]\right\}  & = & 0
\end{eqnarray*}
$S_{T-1}=W_{T-1}/3$ solves this, giving the value function,
\begin{eqnarray*}
V_{T-2}\left(P_{T-3},X_{T-3},W_{T-2}\right) & = & \frac{\theta}{3}W_{T-2}^{2}+\alpha_{2}W_{T-2}+\beta W_{T-2}\frac{\phi\left(\frac{\theta W_{T-2}+3\alpha_{2}}{3\beta}\right)}{\Phi\left(\frac{\theta W_{T-2}+3\alpha_{2}}{3\beta}\right)}
\end{eqnarray*}

We show the general case using induction. Let the value function hold
for $T-K$
\begin{eqnarray*}
V_{T-K}\left(P_{T-K-1},X_{T-K-1},W_{T-K}\right) & = & \frac{\theta}{\left(K+1\right)}W_{T-K}^{2}+\alpha_{K}W_{T-K}+\beta W_{T-K}\frac{\phi\left(\frac{\theta W_{T-K}+\left(K+1\right)\alpha_{K}}{\left(K+1\right)\beta}\right)}{\Phi\left(\frac{\theta W_{T-K}+\left(K+1\right)\alpha_{K}}{\left(K+1\right)\beta}\right)}
\end{eqnarray*}
\begin{eqnarray*}
V_{T-K-1}\left(P_{T-K-2},X_{T-K-2},W_{T-K-1}\right) & = & \underset{\left\{ S_{T-K-1}\right\} }{\min}\:E_{T-K-1}\left[\max\left\{ \left(P_{T-K-1}-P_{T-K-2}\right),0\right\} S_{T-K-1}\right.\\
 &  & \left.+V_{T-K}\left(P_{T-K-1},X_{T-K-1},W_{T-K}\right)\right]
\end{eqnarray*}
\begin{eqnarray*}
 & = & \underset{\left\{ S_{T-K-1}\right\} }{\min}\:E_{T-K-1}\left[\max\left\{ \left(\theta S_{T-K-1}^{2}+S_{T-K-1}\varepsilon_{T-K-1}+\gamma\rho S_{T-K-1}X_{T-K-2}+\gamma S_{T-K-1}\eta_{T-K-1}\right),0\right\} \right.
\end{eqnarray*}
\begin{eqnarray*}
\left.+V_{T-K}\left(P_{T-K-2}+\theta S_{T-K-1}+\varepsilon_{T-K-1}+\gamma\rho X_{T-K-2}+\gamma\eta_{T-K-1},\rho X_{T-K-2}+\eta_{T-K-1},W_{T-K-1}-S_{T-K-1}\right)\right]
\end{eqnarray*}
\begin{eqnarray*}
 & = & \underset{\left\{ S_{T-K-1}\right\} }{\min}\left\{ S_{T-K-1}\left(\sqrt{\gamma^{2}\sigma_{\eta}^{2}+\sigma_{\varepsilon}^{2}}\;\right)\left[\frac{\theta S_{T-K-1}+\gamma\rho X_{T-K-2}}{\sqrt{\gamma^{2}\sigma_{\eta}^{2}+\sigma_{\varepsilon}^{2}}}+\frac{\phi\left(\frac{\theta S_{T-K-1}+\gamma\rho X_{T-K-2}}{\sqrt{\gamma^{2}\sigma_{\eta}^{2}+\sigma_{\varepsilon}^{2}}}\right)}{\Phi\left(\frac{\theta S_{T-K-1}+\gamma\rho X_{T-K-2}}{\sqrt{\gamma^{2}\sigma_{\eta}^{2}+\sigma_{\varepsilon}^{2}}}\right)}\right]\right.\\
 &  & +\left\{ W_{T-K-1}-S_{T-K-1}\right\} \left(\sqrt{\gamma^{2}\sigma_{\eta}^{2}+\sigma_{\varepsilon}^{2}}\;\right)\\
 &  & \left.\left[\frac{\theta\left\{ W_{T-K-1}-S_{T-K-1}\right\} +\left(K+1\right)\gamma\rho X_{T-3}}{\left(K+1\right)\sqrt{\gamma^{2}\sigma_{\eta}^{2}+\sigma_{\varepsilon}^{2}}}+\frac{\phi\left(\frac{\theta\left\{ W_{T-K-1}-S_{T-K-1}\right\} +\left(K+1\right)\gamma\rho X_{T-K-2}}{\left(K+1\right)\sqrt{\gamma^{2}\sigma_{\eta}^{2}+\sigma_{\varepsilon}^{2}}}\right)}{\Phi\left(\frac{\theta\left\{ W_{T-K-1}-S_{T-K-1}\right\} +\left(K+1\right)\gamma\rho X_{T-K-2}}{\left(K+1\right)\sqrt{\gamma^{2}\sigma_{\eta}^{2}+\sigma_{\varepsilon}^{2}}}\right)}\right]\right\} 
\end{eqnarray*}
\begin{eqnarray*}
 & = & \underset{\left\{ S_{T-K-1}\right\} }{\min}\left[S_{T-K-1}\left(\sqrt{\gamma^{2}\sigma_{\eta}^{2}+\sigma_{\varepsilon}^{2}}\;\right)\psi\left(\xi_{K+1}S_{T-K-1}\right)\right.\\
 &  & \left.+\left(W_{T-K-1}-S_{T-K-1}\right)\left(\sqrt{\gamma^{2}\sigma_{\eta}^{2}+\sigma_{\varepsilon}^{2}}\;\right)\psi\left\{ \xi_{K+1}\left(W_{T-K-1}-S_{T-K-1}\right)\right\} \right]
\end{eqnarray*}
\[
\text{Here, }\;\xi_{K+1}S_{T-K-1}=\frac{\theta S_{T-K-1}+2\gamma\rho X_{T-K-2}}{2\sqrt{\gamma^{2}\sigma_{\eta}^{2}+\sigma_{\varepsilon}^{2}}}\;\;\text{Also, let }\alpha_{K+1}=\gamma\rho X_{T-K-2},\;\beta=\sqrt{\gamma^{2}\sigma_{\eta}^{2}+\sigma_{\varepsilon}^{2}}
\]
\begin{eqnarray*}
 & = & \underset{\left\{ S_{T-K-1}\right\} }{\min}\left\{ \left[\theta S_{T-K-1}^{2}+\alpha_{K+1}S_{T-K-1}+\beta S_{T-K-1}\frac{\phi\left(\frac{\theta S_{T-K-1}+\alpha_{K+1}}{\beta}\right)}{\Phi\left(\frac{\theta S_{T-K-1}+\alpha_{K+1}}{\beta}\right)}\right]\right.\\
 &  & +\left[\frac{\theta}{\left(K+1\right)}\left(W_{T-K-1}-S_{T-K-1}\right)^{2}+\alpha_{K+1}\left(W_{T-K-1}-S_{T-K-1}\right)\right]\\
 &  & \left.+\left[\beta\left(W_{T-K-1}-S_{T-K-1}\right)\frac{\phi\left(\frac{\theta\left(W_{T-K-1}-S_{T-K-1}\right)+\left(K+1\right)\alpha_{K+1}}{\left(K+1\right)\beta}\right)}{\Phi\left(\frac{\theta\left(W_{T-K-1}-S_{T-K-1}\right)+\left(K+1\right)\alpha_{K+1}}{\left(K+1\right)\beta}\right)}\right]\right\} 
\end{eqnarray*}
This is a convex function and numerical solutions can be obtained
at each stage of the recursion or taking First Order Conditions give,
\begin{eqnarray*}
2\theta S_{T-K-1}+\alpha_{K+1}+\beta\left\{ \frac{\phi\left(\frac{\theta S_{T-K-1}+\alpha_{K+1}}{\beta}\right)}{\Phi\left(\frac{\theta S_{T-K-1}+\alpha_{K+1}}{\beta}\right)}\right.\\
\left.+\frac{\theta S_{T-K-1}}{\beta}\left[-\frac{\left(\frac{\theta S_{T-K-1}+\alpha_{K+1}}{\beta}\right)\phi\left(\frac{\theta S_{T-K-1}+\alpha_{K+1}}{\beta}\right)}{\Phi\left(\frac{\theta S_{T-K-1}+\alpha_{K+1}}{\beta}\right)}-\left\{ \frac{\phi\left(\frac{\theta S_{T-K-1}+\alpha_{K+1}}{\beta}\right)}{\Phi\left(\frac{\theta S_{T-K-1}+\alpha_{K+1}}{\beta}\right)}\right\} ^{2}\right]\right\} \\
-\frac{2\theta}{\left(K+1\right)}\left(W_{T-K-1}-S_{T-K-1}\right)-\alpha_{K+1}+\beta\left\{ -\frac{\phi\left(\frac{\theta\left(W_{T-K-1}-S_{T-K-1}\right)+\left(K+1\right)\alpha_{K+1}}{\left(K+1\right)\beta}\right)}{\Phi\left(\frac{\theta\left(W_{T-K-1}-S_{T-K-1}\right)+\left(K+1\right)\alpha_{K+1}}{\left(K+1\right)\beta}\right)}\right.\\
+\frac{\theta\left(W_{T-K-1}-S_{T-K-1}\right)}{\left(K+1\right)\beta}\left[\frac{\left(\frac{\theta\left(W_{T-K-1}-S_{T-K-1}\right)+\left(K+1\right)\alpha_{K+1}}{\left(K+1\right)\beta}\right)\phi\left(\frac{\theta\left(W_{T-K-1}-S_{T-K-1}\right)+\left(K+1\right)\alpha_{K+1}}{\left(K+1\right)\beta}\right)}{\Phi\left(\frac{\theta\left(W_{T-K-1}-S_{T-K-1}\right)+\left(K+1\right)\alpha_{K+1}}{\left(K+1\right)\beta}\right)}\right.\\
\left.\left.+\left\{ \frac{\phi\left(\frac{\theta\left(W_{T-K-1}-S_{T-K-1}\right)+\left(K+1\right)\alpha_{K+1}}{\left(K+1\right)\beta}\right)}{\Phi\left(\frac{\theta\left(W_{T-K-1}-S_{T-K-1}\right)+\left(K+1\right)\alpha_{K+1}}{\left(K+1\right)\beta}\right)}\right\} ^{2}\right]\right\}  & = & 0
\end{eqnarray*}
$S_{T-K-1}=W_{T-K-1}/\left(K+2\right)$ solves this, giving the value
function and completing the induction.
\begin{eqnarray*}
V_{T-K-1}\left(P_{T-K-2},X_{T-K-2},W_{T-K-1}\right) & = & \frac{\theta}{\left(K+2\right)}W_{T-K-1}^{2}+\alpha_{K+1}W_{T-K-1}+\beta W_{T-K-1}\frac{\phi\left(\frac{\theta W_{T-K-1}+\left(K+2\right)\alpha_{K+1}}{\left(K+2\right)\beta}\right)}{\Phi\left(\frac{\theta W_{T-K-1}+\left(K+2\right)\alpha_{K+1}}{\left(K+2\right)\beta}\right)}
\end{eqnarray*}
\end{doublespace}
\end{proof}
\begin{doublespace}

\subsection{\label{subsec:Proof-of-Proposition-additional-complex}Proof of Proposition
\ref{The-number-of-additional-uncertainty-complex}}
\end{doublespace}
\begin{proof}
\begin{doublespace}
Consider,
\[
V_{T}\left(P_{T-1},X_{T-1},W_{T}\right)=E_{T}\left[\max\left\{ \left(\theta W_{T}+\varepsilon_{T}+\gamma X_{T}\right),0\right\} W_{T}\right]
\]

\[
V_{T}\left(P_{T-1},X_{T-1},W_{T}\right)=E_{T}\left[\max\left\{ \left(\theta W_{T}^{2}+W_{T}\varepsilon_{T}+\gamma\rho W_{T}X_{T-1}+\gamma W_{T}\eta_{T}\right),0\right\} \right]
\]
\[
=E_{T}\left[\left.\left(\theta W_{T}^{2}+W_{T}\varepsilon_{T}+\gamma\rho W_{T}X_{T-1}+\gamma W_{T}\eta_{T}\right)\right|\left(\theta W_{T}^{2}+W_{T}\varepsilon_{T}+\gamma\rho W_{T}X_{T-1}+\gamma W_{T}\eta_{T}\right)>0\right]
\]
\[
\left\{ \because\;E\left[max\left(X,c\right)\right]=E\left[X\left|X>c\right.\right]Pr\left[X>c\right]+E\left[c\left|X\text{\ensuremath{\le}}c\right.\right]Pr\left[X\text{\ensuremath{\le}}c\right]\;\right\} 
\]
This is of the form, $E\left[\left.Y\right|Y>0\right]$ where, $Y=\left(\theta W_{T}^{2}+W_{T}\varepsilon_{T}+\gamma\rho W_{T}X_{T-1}+\gamma W_{T}\eta_{T}\right)$.
We then need to calculate, 
\[
E_{T}\left[\left.\left\{ \theta W_{T}^{2}+\gamma\rho W_{T}X_{T-1}+W_{T}\left(\sqrt{\gamma^{2}\sigma_{\eta}^{2}+\sigma_{\varepsilon}^{2}}\;\right)Z\right\} \right|Z>\left(-\frac{\theta W_{T}+\gamma\rho X_{T-1}}{\sqrt{\gamma^{2}\sigma_{\eta}^{2}+\sigma_{\varepsilon}^{2}}}\right)\right]\quad,where\;Z\sim N\left(0,1\right)
\]
\[
\left[\because\;X\sim N(\mu_{X},\sigma_{X}^{2})\;;\;Y\sim N(\mu_{Y},\sigma_{Y}^{2})\;;\;U=X+Y\Rightarrow U\sim N(\mu_{X}+\mu_{Y},\sigma_{X}^{2}+\sigma_{Y}^{2})\right]
\]
\[
\left[\because\;Y\sim N\left(\theta W_{T}^{2}+\gamma\rho W_{T}X_{T-1},W_{T}^{2}\left\{ \gamma^{2}\sigma_{\eta}^{2}+\sigma_{\varepsilon}^{2}\right\} \right)\equiv Y\sim N\left(\mu,\sigma^{2}\right)\Rightarrow Y=\mu+\sigma Z\;;\;Y>0\Rightarrow Z>-\mu/\sigma\right]
\]
We have for every standard normal distribution, $Z$, and for every
$u,$ $Pr\left[Z>\text{\textminus}u\right]=Pr\left[Z<u\right]=\mathbf{\Phi}\left(u\right)$.
Here, $\phi$ and $\mathbf{\Phi}$ are the standard normal PDF and
CDF, respectively. 
\begin{eqnarray*}
E\left[\left.Z\right|Z>-u\right] & = & \frac{1}{\mathbf{\Phi}\left(u\right)}\left[\int_{-u}^{\infty}t\phi\left(t\right)dt\right]\\
 & = & \frac{1}{\mathbf{\Phi}\left(u\right)}\left[\left.-\phi\left(t\right)\right|_{-u}^{\infty}\right]=\frac{\phi\left(u\right)}{\mathbf{\Phi}\left(u\right)}
\end{eqnarray*}
Hence we have, 
\begin{eqnarray*}
E\left[\left.Y\right|Y>0\right] & = & \mu+\sigma E\left[\left.Z\right|Z>\left(-\frac{\mu}{\sigma}\right)\right]\\
 & = & \mu+\frac{\sigma\phi\left(\mu/\sigma\right)}{\mathbf{\Phi}\left(\mu/\sigma\right)}
\end{eqnarray*}
Setting, $\psi\left(u\right)=u+\phi\left(u\right)/\Phi\left(u\right)$,
\[
E\left[\left.Y\right|Y>0\right]=\sigma\psi\left(\mu/\sigma\right)
\]
\[
V_{T}\left(P_{T-1},X_{T-1},W_{T}\right)=E_{T}\left[\left.\left(\theta W_{T}^{2}+W_{T}\varepsilon_{T}+\gamma\rho W_{T}X_{T-1}+\gamma W_{T}\eta_{T}\right)\right|\left(\theta W_{T}^{2}+W_{T}\varepsilon_{T}+\gamma\rho W_{T}X_{T-1}+\gamma W_{T}\eta_{T}\right)>0\right]
\]
\[
=W_{T}\left(\sqrt{\gamma^{2}\sigma_{\eta}^{2}+\sigma_{\varepsilon}^{2}}\;\right)\left[\frac{\theta W_{T}+\gamma\rho X_{T-1}}{\sqrt{\gamma^{2}\sigma_{\eta}^{2}+\sigma_{\varepsilon}^{2}}}+\frac{\phi\left(\frac{\theta W_{T}+\gamma\rho X_{T-1}}{\sqrt{\gamma^{2}\sigma_{\eta}^{2}+\sigma_{\varepsilon}^{2}}}\right)}{\Phi\left(\frac{\theta W_{T}+\gamma\rho X_{T-1}}{\sqrt{\gamma^{2}\sigma_{\eta}^{2}+\sigma_{\varepsilon}^{2}}}\right)}\right]
\]
\[
=\left(\sqrt{\gamma^{2}\sigma_{\eta}^{2}+\sigma_{\varepsilon}^{2}}\;\right)W_{T}\psi\left(\xi W_{T}\right)\;,\;\xi W_{T}=\frac{\theta W_{T}+\gamma\rho X_{T-1}}{\sqrt{\gamma^{2}\sigma_{\eta}^{2}+\sigma_{\varepsilon}^{2}}}
\]
 In the next to last period, $T-1$, the Bellman equation is,
\[
V_{T-1}\left(P_{T-2},X_{T-2},W_{T-1}\right)=\underset{\left\{ S_{T-1}\right\} }{\min}\:E_{T-1}\left[\max\left\{ \left(P_{T-1}-P_{T-2}\right),0\right\} W_{T-1}+V_{T}\left(P_{T-1},X_{T-1},W_{T}\right)\right]
\]
\begin{eqnarray*}
 & = & \underset{\left\{ S_{T-1}\right\} }{\min}\:E_{T-1}\left[\max\left\{ \left(\theta W_{T-1}S_{T-1}+W_{T-1}\varepsilon_{T-1}+\gamma\rho W_{T-1}X_{T-2}+\gamma W_{T-1}\eta_{T-1}\right),0\right\} \right.\\
 &  & \left.\quad\qquad+\quad V_{T}\left(P_{T-2}+\theta S_{T-1}+\varepsilon_{T-1}+\gamma\rho X_{T-2}+\gamma\eta_{T-1},\rho X_{T-2}+\eta_{T-1},W_{T-1}-S_{T-1}\right)\right]
\end{eqnarray*}
\begin{eqnarray*}
 & = & \underset{\left\{ S_{T-1}\right\} }{\min}\left\{ W_{T-1}\left(\sqrt{\gamma^{2}\sigma_{\eta}^{2}+\sigma_{\varepsilon}^{2}}\;\right)\left[\frac{\theta S_{T-1}+\gamma\rho X_{T-2}}{\sqrt{\gamma^{2}\sigma_{\eta}^{2}+\sigma_{\varepsilon}^{2}}}+\frac{\phi\left(\frac{\theta S_{T-1}+\gamma\rho X_{T-2}}{\sqrt{\gamma^{2}\sigma_{\eta}^{2}+\sigma_{\varepsilon}^{2}}}\right)}{\Phi\left(\frac{\theta S_{T-1}+\gamma\rho X_{T-2}}{\sqrt{\gamma^{2}\sigma_{\eta}^{2}+\sigma_{\varepsilon}^{2}}}\right)}\right]\right.\\
 &  & \left.+\left\{ W_{T-1}-S_{T-1}\right\} \left(\sqrt{\gamma^{2}\sigma_{\eta}^{2}+\sigma_{\varepsilon}^{2}}\;\right)\left[\frac{\theta\left(W_{T-1}-S_{T-1}\right)+\gamma\rho X_{T-2}}{\sqrt{\gamma^{2}\sigma_{\eta}^{2}+\sigma_{\varepsilon}^{2}}}+\frac{\phi\left(\frac{\theta\left(W_{T-1}-S_{T-1}\right)+\gamma\rho X_{T-2}}{\sqrt{\gamma^{2}\sigma_{\eta}^{2}+\sigma_{\varepsilon}^{2}}}\right)}{\Phi\left(\frac{\theta\left(W_{T-1}-S_{T-1}\right)+\gamma\rho X_{T-2}}{\sqrt{\gamma^{2}\sigma_{\eta}^{2}+\sigma_{\varepsilon}^{2}}}\right)}\right]\right\} 
\end{eqnarray*}
\[
=\underset{\left\{ S_{T-1}\right\} }{\min}\left[W_{T-1}\left(\sqrt{\gamma^{2}\sigma_{\eta}^{2}+\sigma_{\varepsilon}^{2}}\;\right)\psi\left(\xi_{1}S_{T-1}\right)+\left(W_{T-1}-S_{T-1}\right)\left(\sqrt{\gamma^{2}\sigma_{\eta}^{2}+\sigma_{\varepsilon}^{2}}\;\right)\psi\left\{ \xi_{1}\left(W_{T-1}-S_{T-1}\right)\right\} \right]
\]
\[
\text{Here, }\;\xi_{1}S_{T-1}=\frac{\theta S_{T-1}+\gamma\rho X_{T-2}}{\sqrt{\gamma^{2}\sigma_{\eta}^{2}+\sigma_{\varepsilon}^{2}}}\;\;\text{Also, let }\alpha=\gamma\rho X_{T-2},\;\beta=\sqrt{\gamma^{2}\sigma_{\eta}^{2}+\sigma_{\varepsilon}^{2}}
\]
\begin{eqnarray*}
 & = & \underset{\left\{ S_{T-1}\right\} }{\min}\left\{ \left[\theta W_{T-1}S_{T-1}+\alpha W_{T-1}+\beta W_{T-1}\frac{\phi\left(\frac{\theta S_{T-1}+\alpha}{\beta}\right)}{\Phi\left(\frac{\theta S_{T-1}+\alpha}{\beta}\right)}\right]\right.\\
 &  & \left.+\left[\theta\left(W_{T-1}-S_{T-1}\right)^{2}+\alpha\left(W_{T-1}-S_{T-1}\right)+\beta\left(W_{T-1}-S_{T-1}\right)\frac{\phi\left(\frac{\theta\left(W_{T-1}-S_{T-1}\right)+\alpha}{\beta}\right)}{\Phi\left(\frac{\theta\left(W_{T-1}-S_{T-1}\right)+\alpha}{\beta}\right)}\right]\right\} 
\end{eqnarray*}
This is a convex function and taking First Order Conditions give,
\begin{eqnarray*}
\theta W_{T-1}+\beta W_{T-1}\left\{ \frac{\theta}{\beta}\left[-\frac{\left(\frac{\theta S_{T-1}+\alpha}{\beta}\right)\phi\left(\frac{\theta S_{T-1}+\alpha}{\beta}\right)}{\Phi\left(\frac{\theta S_{T-1}+\alpha}{\beta}\right)}-\left\{ \frac{\phi\left(\frac{\theta S_{T-1}+\alpha}{\beta}\right)}{\Phi\left(\frac{\theta S_{T-1}+\alpha}{\beta}\right)}\right\} ^{2}\right]\right\} \\
-2\theta\left(W_{T-1}-S_{T-1}\right)-\alpha+\beta\left\{ -\frac{\phi\left(\frac{\theta\left(W_{T-1}-S_{T-1}\right)+\alpha}{\beta}\right)}{\Phi\left(\frac{\theta\left(W_{T-1}-S_{T-1}\right)+\alpha}{\beta}\right)}\right.\\
\left.+\frac{\theta\left(W_{T-1}-S_{T-1}\right)}{\beta}\left[\frac{\left(\frac{\theta\left(W_{T-1}-S_{T-1}\right)+\alpha}{\beta}\right)\phi\left(\frac{\theta\left(W_{T-1}-S_{T-1}\right)+\alpha}{\beta}\right)}{\Phi\left(\frac{\theta\left(W_{T-1}-S_{T-1}\right)+\alpha}{\beta}\right)}+\left\{ \frac{\phi\left(\frac{\theta\left(W_{T-1}-S_{T-1}\right)+\alpha}{\beta}\right)}{\Phi\left(\frac{\theta\left(W_{T-1}-S_{T-1}\right)+\alpha}{\beta}\right)}\right\} ^{2}\right]\right\}  & = & 0
\end{eqnarray*}
\end{doublespace}
\end{proof}
\begin{doublespace}

\subsection{\label{subsec:Proof-of-Proposition-linear-percentage}Proof of Proposition
\ref{The-value-function-linear-percentage}}
\end{doublespace}
\begin{proof}
\begin{doublespace}
Consider,

\[
V_{T}\left(P_{T-1},X_{T-1},W_{T}\right)=E_{T}\left[\max\left\{ \left(\widetilde{P}_{T}\left(1+\theta W_{T}+\gamma X_{T}\right)-P_{T-1}\right),0\right\} W_{T}\right]
\]

\[
V_{T}\left(P_{T-1},X_{T-1},W_{T}\right)=E_{T}\left[\max\left\{ \left(\widetilde{P}_{T-1}e^{B_{T}}\left[1+\theta W_{T}+\gamma\left(\rho X_{T-1}+\eta_{T}\right)\right]-P_{T-1}\right),0\right\} W_{T}\right]
\]
\[
=E_{T}\left[\max\left\{ \widetilde{P}_{T-1}W_{T}e^{B_{T}}+\theta W_{T}^{2}\widetilde{P}_{T-1}e^{B_{T}}+\gamma\rho X_{T-1}\widetilde{P}_{T-1}W_{T}e^{B_{T}}+\gamma\widetilde{P}_{T-1}W_{T}e^{B_{T}}\eta_{T}-W_{T}P_{T-1},0\right\} \right]
\]
\begin{eqnarray*}
 & = & E_{T}\left[\left.\left(\widetilde{P}_{T-1}W_{T}e^{B_{T}}+\theta W_{T}^{2}\widetilde{P}_{T-1}e^{B_{T}}+\gamma\rho X_{T-1}\widetilde{P}_{T-1}W_{T}e^{B_{T}}+\gamma\widetilde{P}_{T-1}W_{T}e^{B_{T}}\eta_{T}-W_{T}P_{T-1}\right)\right|\right.\\
 &  & \left.\left(\widetilde{P}_{T-1}W_{T}e^{B_{T}}+\theta W_{T}^{2}\widetilde{P}_{T-1}e^{B_{T}}+\gamma\rho X_{T-1}\widetilde{P}_{T-1}W_{T}e^{B_{T}}+\gamma\widetilde{P}_{T-1}W_{T}e^{B_{T}}\eta_{T}-W_{T}P_{T-1}\right)>0\right]
\end{eqnarray*}
\[
\left\{ \because\;E\left[max\left(X,c\right)\right]=E\left[X\left|X>c\right.\right]Pr\left[X>c\right]+E\left[c\left|X\text{\ensuremath{\le}}c\right.\right]Pr\left[X\text{\ensuremath{\le}}c\right]\;\right\} 
\]
This is of the form, $E\left[\left.Y_{2}\right|Y_{2}>0\right]$ where, 

$Y_{2}=\left(\widetilde{P}_{T-1}W_{T}e^{B_{T}}+\theta W_{T}^{2}\widetilde{P}_{T-1}e^{B_{T}}+\gamma\rho X_{T-1}\widetilde{P}_{T-1}W_{T}e^{B_{T}}+\gamma\widetilde{P}_{T-1}W_{T}e^{B_{T}}\eta_{T}-W_{T}P_{T-1}\right)$.
We simplify using some notational shortcuts,
\begin{eqnarray*}
 &  & E\left[\left.\left(ae^{X}+be^{X}+ce^{X}+de^{X}Y_{1}+k\right)\right|\left.\left(ae^{X}+be^{X}+ce^{X}+de^{X}Y_{1}+k\right)>0\right]\right.
\end{eqnarray*}
\[
X\sim N\left(\mu_{X},\sigma_{X}^{2}\right);Y_{1}\sim N\left(0,\sigma_{Y_{1}}^{2}\right);X\;\text{and }Y_{1}\text{ are independent. Also, }a,b,c,d>0,k<0
\]
\begin{eqnarray*}
\equiv &  & E\left[\left.\left(e^{X}\left\{ a+b+c+dY_{1}\right\} +k\right)\right|\left.\left(e^{X}\left\{ a+b+c+dY_{1}\right\} +k\right)>0\right]\right.
\end{eqnarray*}
\begin{eqnarray*}
\equiv &  & E\left[\left.\left(e^{X}Y+k\right)\right|\left.\left(e^{X}Y+k\right)>0\right]\right.
\end{eqnarray*}
\[
X\sim N\left(\mu_{X},\sigma_{X}^{2}\right);Y\sim N\left(\mu_{Y},\sigma_{Y}^{2}\right);X\;\text{and }Y\text{ are independent. Also, }k<0
\]
Consider,
\begin{eqnarray*}
E\left[\left.\left(e^{X}Y+k\right)\right|\left(e^{X}Y+k\right)>0\right] & = & E\left[k\left|\left(e^{X}Y+k\right)>0\right.\right]+E\left[\left(e^{X}Y\right)\left|\left(e^{X}Y+k\right)>0\right.\right]
\end{eqnarray*}
\begin{eqnarray*}
 & = & k+E\left[\left.\left(Ye^{X}\right)\right|\left.\left(Ye^{X}+k\right)>0\right]\right.
\end{eqnarray*}
\begin{eqnarray*}
 & = & k+\int\int ye^{x}f\left(\left.ye^{x}\right|\left\{ ye^{x}+k\right\} >0\right)dxdy
\end{eqnarray*}
Here, $f\left(w\right)$ is the probability density function for $w$,
\begin{eqnarray*}
 & = & k+\int\int ye^{x}\frac{f\left(ye^{x};\left\{ ye^{x}+k\right\} >0\right)}{f\left(\left\{ ye^{x}+k\right\} >0\right)}dxdy
\end{eqnarray*}
\[
\left[\text{We note that, }ye^{x}>-k>0\Rightarrow y>0\right]
\]
\begin{eqnarray*}
 & = & k+\int\int ye^{x}\frac{f\left(y\right)f\left(e^{x};\left\{ ye^{x}+k\right\} >0\right)}{f\left(\left\{ ye^{x}+k\right\} >0\right)}dxdy
\end{eqnarray*}
\begin{eqnarray*}
 & = & k+\int y\left[\int\frac{e^{x}f\left(e^{x};\left\{ e^{x}>-\frac{k}{y}\right\} \right)}{f\left(e^{x}>-\frac{k}{y}\right)}dx\right]f\left(y\right)dy
\end{eqnarray*}
\begin{eqnarray*}
 & = & k+\int y\left[\int e^{x}f\left(e^{x}\left|\left\{ e^{x}>-\frac{k}{y}\right\} \right.\right)dx\right]f\left(y\right)dy
\end{eqnarray*}
\begin{eqnarray*}
 & = & k+\int_{0}^{\left(y<-k\right)}y\left[\int e^{x}f\left(e^{x}\left|\left\{ e^{x}>1\right\} \right.\right)dx\right]f\left(y\right)dy+\int_{\left(y>-k\right)}^{\infty}y\left[\int e^{x}f\left(e^{x}\left|\left\{ e^{x}<1\right\} \right.\right)dx\right]f\left(y\right)dy
\end{eqnarray*}
\begin{eqnarray*}
 & = & k+\int_{0}^{\left(-k\right)}y\left[E\left(\left.W\right|W>c\right)\right]f\left(y\right)dy+\int_{\left(-k\right)}^{\infty}y\left[E\left(\left.W\right|W<c\right)\right]f\left(y\right)dy\quad;\;\text{here, }W=e^{X}\text{ and }c=1
\end{eqnarray*}
Simplifying the inner expectations,
\[
E\left(\left.W\right|W>c\right)=\frac{1}{P\left(e^{X}>c\right)}\int_{c}^{\infty}w\frac{1}{w\sigma_{X}\sqrt{2\pi}}e^{-\frac{1}{2}\left[\frac{ln\left(w\right)-\mu_{X}}{\sigma_{X}}\right]^{2}}dw
\]
Put $t=ln\left(w\right)$, we have, $dw=e^{t}dt$
\[
E\left(\left.W\right|W>c\right)=\frac{1}{P\left(X>ln\left(c\right)\right)}\int_{ln\left(c\right)}^{\infty}\frac{e^{t}}{\sigma_{X}\sqrt{2\pi}}e^{-\frac{1}{2}\left(\frac{t-\mu_{X}}{\sigma_{X}}\right)^{2}}dt
\]
\[
t-\frac{1}{2}\left(\frac{t-\mu_{X}}{\sigma_{X}}\right)^{2}=-\frac{1}{2\sigma_{X}^{2}}\left(t-\left(\mu_{X}+\sigma_{X}^{2}\right)\right)^{2}+\mu_{X}+\frac{\sigma_{X}^{2}}{2}
\]
\[
E\left(\left.W\right|W>c\right)=\frac{e^{\left(\mu_{X}+\frac{1}{2}\sigma_{X}^{2}\right)}}{P\left(\mu_{X}+\sigma_{X}Z>ln\left(c\right)\right)}\int_{ln\left(c\right)}^{\infty}\frac{1}{\sigma_{X}\sqrt{2\pi}}e^{-\frac{1}{2}\left[\frac{t-\left(\mu_{X}+\sigma_{X}^{2}\right)}{\sigma_{X}}\right]^{2}}dt\quad;Z\sim N\left(0,1\right)
\]
Put $s=\left[\frac{t-\left(\mu_{X}+\sigma_{X}^{2}\right)}{\sigma_{X}}\right]$
and $b=\left[\frac{ln\left(c\right)-\left(\mu_{X}+\sigma_{X}^{2}\right)}{\sigma_{X}}\right]$
we have, $ds=\frac{dt}{\sigma_{X}}$
\[
E\left(\left.W\right|W>c\right)=\frac{e^{\left(\mu_{X}+\frac{1}{2}\sigma_{X}^{2}\right)}}{P\left(Z>\frac{ln\left(c\right)-\mu_{X}}{\sigma_{X}}\right)}\int_{b}^{\infty}\frac{1}{\sqrt{2\pi}}e^{-\frac{1}{2}s^{2}}ds
\]
\[
=\frac{e^{\left(\mu_{X}+\frac{1}{2}\sigma_{X}^{2}\right)}}{P\left(Z<\frac{-ln\left(c\right)+\mu_{X}}{\sigma_{X}}\right)}\left[\int_{-\infty}^{\infty}\frac{1}{\sqrt{2\pi}}e^{-\frac{1}{2}s^{2}}ds-\int_{-\infty}^{b}\frac{1}{\sqrt{2\pi}}e^{-\frac{1}{2}s^{2}}ds\right]
\]
\[
=\frac{e^{\left(\mu_{X}+\frac{1}{2}\sigma_{X}^{2}\right)}}{P\left(Z<\frac{-ln\left(c\right)+\mu_{X}}{\sigma_{X}}\right)}\left[1-\Phi\left(b\right)\right]\quad;\Phi\text{ is the standard normal CDF}
\]
\[
=\frac{e^{\left(\mu_{X}+\frac{1}{2}\sigma_{X}^{2}\right)}}{\Phi\left(\frac{-ln\left(c\right)+\mu_{X}}{\sigma_{X}}\right)}\left[\Phi\left(-b\right)\right]
\]
Similarly for the other case,
\[
E\left(\left.W\right|W<c\right)=\frac{1}{P\left(e^{X}<c\right)}\int_{0}^{c}w\frac{1}{w\sigma_{X}\sqrt{2\pi}}e^{-\frac{1}{2}\left[\frac{ln\left(w\right)-\mu_{X}}{\sigma_{X}}\right]^{2}}dw
\]
Put $t=ln\left(w\right)$, we have, $dw=e^{t}dt$
\[
E\left(\left.W\right|W<c\right)=\frac{1}{P\left(X<ln\left(c\right)\right)}\int_{-\infty}^{ln\left(c\right)}\frac{e^{t}}{\sigma_{X}\sqrt{2\pi}}e^{-\frac{1}{2}\left(\frac{t-\mu_{X}}{\sigma_{X}}\right)^{2}}dt
\]
\[
t-\frac{1}{2}\left(\frac{t-\mu_{X}}{\sigma_{X}}\right)^{2}=-\frac{1}{2\sigma_{X}^{2}}\left(t-\left(\mu_{X}+\sigma_{X}^{2}\right)\right)^{2}+\mu_{X}+\frac{\sigma_{X}^{2}}{2}
\]
\[
E\left(\left.W\right|W<c\right)=\frac{e^{\left(\mu_{X}+\frac{1}{2}\sigma_{X}^{2}\right)}}{P\left(\mu_{X}+\sigma_{X}Z<ln\left(c\right)\right)}\int_{-\infty}^{ln\left(c\right)}\frac{1}{\sigma_{X}\sqrt{2\pi}}e^{-\frac{1}{2}\left[\frac{t-\left(\mu_{X}+\sigma_{X}^{2}\right)}{\sigma_{X}}\right]^{2}}dt\quad;Z\sim N\left(0,1\right)
\]
Put $s=\left[\frac{t-\left(\mu_{X}+\sigma_{X}^{2}\right)}{\sigma_{X}}\right]$
and $b=\left[\frac{ln\left(c\right)-\left(\mu_{X}+\sigma_{X}^{2}\right)}{\sigma_{X}}\right]$
we have, $ds=\frac{dt}{\sigma_{X}}$
\[
E\left(\left.W\right|W<c\right)=\frac{e^{\left(\mu_{X}+\frac{1}{2}\sigma_{X}^{2}\right)}}{P\left(Z<\frac{ln\left(c\right)-\mu_{X}}{\sigma_{X}}\right)}\int_{-\infty}^{b}\frac{1}{\sqrt{2\pi}}e^{-\frac{1}{2}s^{2}}ds
\]
\[
=\frac{e^{\left(\mu_{X}+\frac{1}{2}\sigma_{X}^{2}\right)}}{P\left(Z<\frac{ln\left(c\right)-\mu_{X}}{\sigma_{X}}\right)}\left[\Phi\left(b\right)\right]\quad;\Phi\text{ is the standard normal CDF}
\]
\[
=\frac{e^{\left(\mu_{X}+\frac{1}{2}\sigma_{X}^{2}\right)}}{\Phi\left(\frac{ln\left(c\right)-\mu_{X}}{\sigma_{X}}\right)}\left[\Phi\left(b\right)\right]
\]
Using the results for the inner expectations,
\[
E\left[\left.\left(e^{X}Y+k\right)\right|\left(e^{X}Y+k\right)>0\right]=k+\int_{0}^{\left(-k\right)}y\left[\frac{e^{\left(\mu_{X}+\frac{1}{2}\sigma_{X}^{2}\right)}}{\Phi\left(\frac{-ln\left(c\right)+\mu_{X}}{\sigma_{X}}\right)}\left[\Phi\left(-b\right)\right]\right]f\left(y\right)dy+\int_{\left(-k\right)}^{\infty}y\left[\frac{e^{\left(\mu_{X}+\frac{1}{2}\sigma_{X}^{2}\right)}}{\Phi\left(\frac{ln\left(c\right)-\mu_{X}}{\sigma_{X}}\right)}\left[\Phi\left(b\right)\right]\right]f\left(y\right)dy
\]
\[
=k+e^{\left(\mu_{X}+\frac{1}{2}\sigma_{X}^{2}\right)}\left[\int_{0}^{\left(-k\right)}y\left\{ \frac{\Phi\left(\frac{\mu_{X}+\sigma_{X}^{2}}{\sigma_{X}}\right)}{\Phi\left(\frac{\mu_{X}}{\sigma_{X}}\right)}\right\} f\left(y\right)dy+\int_{\left(-k\right)}^{\infty}y\left\{ \frac{\Phi\left(-\left[\frac{\mu_{X}+\sigma_{X}^{2}}{\sigma_{X}}\right]\right)}{\Phi\left(-\left[\frac{\mu_{X}}{\sigma_{X}}\right]\right)}\right\} f\left(y\right)dy\right]
\]
\[
=k+e^{\left(\mu_{X}+\frac{1}{2}\sigma_{X}^{2}\right)}\left[\int_{0}^{\left(-k\right)}y\left\{ \frac{\Phi\left(\frac{\mu_{X}+\sigma_{X}^{2}}{\sigma_{X}}\right)}{\Phi\left(\frac{\mu_{X}}{\sigma_{X}}\right)}\right\} f\left(y\right)dy+\int_{\left(-k\right)}^{\infty}y\left\{ \frac{1-\Phi\left(\frac{\mu_{X}+\sigma_{X}^{2}}{\sigma_{X}}\right)}{1-\Phi\left(\frac{\mu_{X}}{\sigma_{X}}\right)}\right\} f\left(y\right)dy\right]
\]
\begin{eqnarray*}
 & = & k+e^{\left(\mu_{X}+\frac{1}{2}\sigma_{X}^{2}\right)}\left[\left\{ \frac{\Phi\left(\frac{\mu_{X}+\sigma_{X}^{2}}{\sigma_{X}}\right)}{\Phi\left(\frac{\mu_{X}}{\sigma_{X}}\right)}\right\} \int_{-\frac{\mu_{Y}}{\sigma_{Y}}}^{-\left(\frac{k+\mu_{Y}}{\sigma_{Y}}\right)}\left(\mu_{Y}+\sigma_{Y}z\right)\frac{1}{\sqrt{2\pi}}e^{-\frac{1}{2}z^{2}}dz\right.\\
 &  & +\left.\left\{ \frac{1-\Phi\left(\frac{\mu_{X}+\sigma_{X}^{2}}{\sigma_{X}}\right)}{1-\Phi\left(\frac{\mu_{X}}{\sigma_{X}}\right)}\right\} \int_{-\left(\frac{k+\mu_{Y}}{\sigma_{Y}}\right)}^{\infty}\left(\mu_{Y}+\sigma_{Y}z\right)\frac{1}{\sqrt{2\pi}}e^{-\frac{1}{2}z^{2}}dz\right]\quad;Z\sim N\left(0,1\right)
\end{eqnarray*}
\begin{eqnarray*}
 & = & k+e^{\left(\mu_{X}+\frac{1}{2}\sigma_{X}^{2}\right)}\left[\left\{ \frac{\Phi\left(\frac{\mu_{X}+\sigma_{X}^{2}}{\sigma_{X}}\right)}{\Phi\left(\frac{\mu_{X}}{\sigma_{X}}\right)}\right\} \left\{ \mu_{Y}\left[\Phi\left(-\left[\frac{k+\mu_{Y}}{\sigma_{Y}}\right]\right)-\Phi\left(-\frac{\mu_{Y}}{\sigma_{Y}}\right)\right]-\frac{\sigma_{Y}}{\sqrt{2\pi}}\left[e^{-\frac{1}{2}\left(\frac{k+\mu_{Y}}{\sigma_{Y}}\right)^{2}}-e^{-\frac{1}{2}\left(\frac{\mu_{Y}}{\sigma_{Y}}\right)^{2}}\right]\right\} \right.\\
 &  & +\left.\left\{ \frac{1-\Phi\left(\frac{\mu_{X}+\sigma_{X}^{2}}{\sigma_{X}}\right)}{1-\Phi\left(\frac{\mu_{X}}{\sigma_{X}}\right)}\right\} \left\{ \mu_{Y}\left[1-\Phi\left(-\left[\frac{k+\mu_{Y}}{\sigma_{Y}}\right]\right)\right]+\frac{\sigma_{Y}}{\sqrt{2\pi}}\left[e^{-\frac{1}{2}\left(\frac{k+\mu_{Y}}{\sigma_{Y}}\right)^{2}}\right]\right\} \right]
\end{eqnarray*}
\end{doublespace}
\end{proof}
\begin{doublespace}

\subsection{\label{subsec:Proof-of-Proposition-liquidity-constraints}Proof of
Proposition \ref{The-value-function-liquidity-constraints}}
\end{doublespace}
\begin{proof}
\begin{doublespace}
Consider,
\[
V_{T}\left(P_{T-1},O_{T-1},W_{T}\right)=\underset{\left\{ S_{T}\right\} }{\min}\:E_{T}\left[\max\left\{ \left(P_{T}-P_{T-1}\right),0\right\} S_{T}\right]
\]

\[
V_{T}\left(P_{T-1},O_{T-1},W_{T}\right)=E_{T}\left[\max\left\{ \left(\alpha P_{T-1}+\theta W_{T}P_{T-1}-\gamma\left(O_{T}-W_{T}\right)P_{T-1}+\varepsilon_{T}\right),0\right\} W_{T}\right]
\]
\[
=E_{T}\left[\max\left\{ \left(\alpha P_{T-1}W_{T}+\theta W_{T}^{2}P_{T-1}+\gamma W_{T}^{2}P_{T-1}-\gamma\rho O_{T-1}W_{T}P_{T-1}-\gamma W_{T}P_{T-1}\eta_{T}+W_{T}\varepsilon_{T}\right),0\right\} \right]
\]
Setting $\beta=\theta+\gamma$,
\begin{eqnarray*}
V_{T}\left(P_{T-1},O_{T-1},W_{T}\right) & = & E_{T}\left[\left.\left(\alpha P_{T-1}W_{T}+\beta W_{T}^{2}P_{T-1}-\gamma\rho O_{T-1}W_{T}P_{T-1}-\gamma W_{T}P_{T-1}\eta_{T}+W_{T}\varepsilon_{T}\right)\right|\right.\\
 &  & \left.\left(\alpha P_{T-1}W_{T}+\beta W_{T}^{2}P_{T-1}-\gamma\rho O_{T-1}W_{T}P_{T-1}-\gamma W_{T}P_{T-1}\eta_{T}+W_{T}\varepsilon_{T}\right)>0\right]
\end{eqnarray*}
\[
\left\{ \because\;E\left[max\left(X,c\right)\right]=E\left[X\left|X>c\right.\right]Pr\left[X>c\right]+E\left[c\left|X\text{\ensuremath{\le}}c\right.\right]Pr\left[X\text{\ensuremath{\le}}c\right]\;\right\} 
\]
This is of the form, $E\left[\left.Y\right|Y>0\right]$ where, 

$Y=\left(\alpha P_{T-1}W_{T}+\beta W_{T}^{2}P_{T-1}-\gamma\rho O_{T-1}W_{T}P_{T-1}-\gamma W_{T}P_{T-1}\eta_{T}+W_{T}\varepsilon_{T}\right)$.
We then need to calculate, 
\begin{eqnarray*}
E_{T}\left[\vphantom{\left(\frac{\frac{A}{B}}{\frac{V}{b}}\right)}\left.\left\{ \alpha P_{T-1}W_{T}+\beta W_{T}^{2}P_{T-1}-\gamma\rho O_{T-1}W_{T}P_{T-1}+W_{T}\left(\sqrt{\gamma^{2}P_{T-1}^{2}\sigma_{\eta}^{2}+\sigma_{\varepsilon}^{2}}\;\right)Z\right\} \right|\right.\\
\left.Z>\left(-\frac{\alpha P_{T-1}+\beta W_{T}P_{T-1}-\gamma\rho O_{T-1}P_{T-1}}{\sqrt{\gamma^{2}P_{T-1}^{2}\sigma_{\eta}^{2}+\sigma_{\varepsilon}^{2}}}\right)\right]\quad,where\;Z\sim N\left(0,1\right)
\end{eqnarray*}
\[
\left[\because\;X\sim N(\mu_{X},\sigma_{X}^{2})\;;\;Y\sim N(\mu_{Y},\sigma_{Y}^{2})\;;\;U=X+Y\Rightarrow U\sim N(\mu_{X}+\mu_{Y},\sigma_{X}^{2}+\sigma_{Y}^{2})\right]
\]
\begin{eqnarray*}
\left[\vphantom{\frac{A}{B}}\because\;Y\sim N\left(\alpha P_{T-1}W_{T}+\beta W_{T}^{2}P_{T-1}-\gamma\rho O_{T-1}W_{T}P_{T-1}-\gamma W_{T}P_{T-1}\eta_{T}+W_{T}\varepsilon_{T}\right)\right.\\
\left.\equiv Y\sim N\left(\mu,\sigma^{2}\right)\Rightarrow Y=\mu+\sigma Z\;;\;Y>0\Rightarrow Z>-\mu/\sigma\;\vphantom{\frac{A}{B}}\right]
\end{eqnarray*}
We have for every standard normal distribution, $Z$, and for every
$u,$ $Pr\left[Z>\text{\textminus}u\right]=Pr\left[Z<u\right]=\mathbf{\Phi}\left(u\right)$.
Here, $\phi$ and $\mathbf{\Phi}$ are the standard normal PDF and
CDF, respectively.
\begin{eqnarray*}
E\left[\left.Z\right|Z>-u\right] & = & \frac{1}{\mathbf{\Phi}\left(u\right)}\left[\int_{-u}^{\infty}t\phi\left(t\right)dt\right]\\
 & = & \frac{1}{\mathbf{\Phi}\left(u\right)}\left[\left.-\phi\left(t\right)\right|_{-u}^{\infty}\right]=\frac{\phi\left(u\right)}{\mathbf{\Phi}\left(u\right)}
\end{eqnarray*}
Hence we have, 
\begin{eqnarray*}
E\left[\left.Y\right|Y>0\right] & = & \mu+\sigma E\left[\left.Z\right|Z>\left(-\frac{\mu}{\sigma}\right)\right]\\
 & = & \mu+\frac{\sigma\phi\left(\mu/\sigma\right)}{\mathbf{\Phi}\left(\mu/\sigma\right)}
\end{eqnarray*}
Setting, $\psi\left(u\right)=u+\phi\left(u\right)/\Phi\left(u\right)$,
\[
E\left[\left.Y\right|Y>0\right]=\sigma\psi\left(\mu/\sigma\right)
\]
\begin{eqnarray*}
V_{T}\left(P_{T-1},O_{T-1},W_{T}\right) & = & W_{T}\left(\sqrt{\gamma^{2}P_{T-1}^{2}\sigma_{\eta}^{2}+\sigma_{\varepsilon}^{2}}\;\right)\left[\left(\frac{\alpha P_{T-1}+\beta W_{T}P_{T-1}-\gamma\rho O_{T-1}P_{T-1}}{\sqrt{\gamma^{2}P_{T-1}^{2}\sigma_{\eta}^{2}+\sigma_{\varepsilon}^{2}}}\right)\right.\\
 &  & \left.+\frac{\phi\left(\frac{\alpha P_{T-1}+\beta W_{T}P_{T-1}-\gamma\rho O_{T-1}P_{T-1}}{\sqrt{\gamma^{2}P_{T-1}^{2}\sigma_{\eta}^{2}+\sigma_{\varepsilon}^{2}}}\right)}{\Phi\left(\frac{\alpha P_{T-1}+\beta W_{T}P_{T-1}-\gamma\rho O_{T-1}P_{T-1}}{\sqrt{\gamma^{2}P_{T-1}^{2}\sigma_{\eta}^{2}+\sigma_{\varepsilon}^{2}}}\right)}\right]
\end{eqnarray*}
\[
=\left(\sqrt{\gamma^{2}P_{T-1}^{2}\sigma_{\eta}^{2}+\sigma_{\varepsilon}^{2}}\;\right)W_{T}\psi\left(\xi W_{T}\right)\;,\;\xi W_{T}=\left(\frac{\alpha P_{T-1}+\beta W_{T}P_{T-1}-\gamma\rho O_{T-1}P_{T-1}}{\sqrt{\gamma^{2}P_{T-1}^{2}\sigma_{\eta}^{2}+\sigma_{\varepsilon}^{2}}}\right)
\]
In the next to last period, $T-1$, the Bellman equation is,
\[
V_{T-1}\left(P_{T-2},O_{T-2},W_{T-1}\right)=\underset{\left\{ S_{T-1}\right\} }{\min}\:E_{T-1}\left[\max\left\{ \left(P_{T-1}-P_{T-2}\right),0\right\} S_{T-1}+V_{T}\left(P_{T-1},O_{T-1},W_{T}\right)\right]
\]
\begin{eqnarray*}
 & = & \underset{\left\{ S_{T-1}\right\} }{\min}\:E_{T-1}\left[\max\left\{ \left(\alpha P_{T-2}S_{T-1}+\beta S_{T-1}^{2}P_{T-2}-\gamma\rho O_{T-2}S_{T-1}P_{T-2}-\gamma S_{T-1}P_{T-2}\eta_{T-1}+S_{T-1}\varepsilon_{T-1}\right),0\right\} \right.\\
 &  & \left.\quad\qquad+\quad V_{T}\left(\left(\alpha+1\right)P_{T-2}+\beta S_{T-1}P_{T-2}-\gamma\rho O_{T-2}P_{T-2}-\gamma P_{T-2}\eta_{T-1}+\varepsilon_{T-1},\rho O_{T-2}+\eta_{T-1},W_{T-1}-S_{T-1}\right)\right]
\end{eqnarray*}
\begin{eqnarray*}
= & \underset{\left\{ S_{T-1}\right\} }{\min} & \:E_{T-1}\left\{ \vphantom{\left(\frac{\frac{\alpha P_{T-2}}{\sqrt{\sigma_{\varepsilon}^{2}}}}{\frac{\alpha P_{T-2}}{\sqrt{\sigma_{\varepsilon}^{2}}}}\right)}S_{T-1}\left(\sqrt{\gamma^{2}P_{T-2}^{2}\sigma_{\eta}^{2}+\sigma_{\varepsilon}^{2}}\;\right)\right.
\end{eqnarray*}
\[
\left[\left(\frac{\alpha P_{T-2}+\beta S_{T-1}P_{T-2}-\gamma\rho O_{T-2}P_{T-2}}{\sqrt{\gamma^{2}P_{T-2}^{2}\sigma_{\eta}^{2}+\sigma_{\varepsilon}^{2}}}\right)+\frac{\phi\left(\frac{\alpha P_{T-2}+\beta S_{T-1}P_{T-2}-\gamma\rho O_{T-2}P_{T-2}}{\sqrt{\gamma^{2}P_{T-2}^{2}\sigma_{\eta}^{2}+\sigma_{\varepsilon}^{2}}}\right)}{\Phi\left(\frac{\alpha P_{T-2}+\beta S_{T-1}P_{T-2}-\gamma\rho O_{T-2}P_{T-2}}{\sqrt{\gamma^{2}P_{T-2}^{2}\sigma_{\eta}^{2}+\sigma_{\varepsilon}^{2}}}\right)}\right]
\]
\[
+\left(W_{T-1}-S_{T-1}\right)\left(\sqrt{\gamma^{2}\left\{ P_{T-2}\left(\alpha+1+\beta S_{T-1}-\gamma\rho O_{T-2}-\gamma\eta_{T-1}\right)+\varepsilon_{T-1}\right\} ^{2}\sigma_{\eta}^{2}+\sigma_{\varepsilon}^{2}}\;\right)
\]
\[
\left[\left(\frac{\left\{ P_{T-2}\left(\alpha+1+\beta S_{T-1}-\gamma\rho O_{T-2}-\gamma\eta_{T-1}\right)+\varepsilon_{T-1}\right\} \left\{ \alpha+\beta\left(W_{T-1}-S_{T-1}\right)-\gamma\rho^{2}O_{T-2}-\gamma\rho\eta_{T-1}\right\} }{\sqrt{\gamma^{2}\left\{ P_{T-2}\left(\alpha+1+\beta S_{T-1}-\gamma\rho O_{T-2}-\gamma\eta_{T-1}\right)+\varepsilon_{T-1}\right\} ^{2}\sigma_{\eta}^{2}+\sigma_{\varepsilon}^{2}}}\right)\right.
\]
\[
\left.+\left.\frac{\phi\left(\frac{\left\{ P_{T-2}\left(\alpha+1+\beta S_{T-1}-\gamma\rho O_{T-2}-\gamma\eta_{T-1}\right)+\varepsilon_{T-1}\right\} \left\{ \alpha+\beta\left(W_{T-1}-S_{T-1}\right)-\gamma\rho^{2}O_{T-2}-\gamma\rho\eta_{T-1}\right\} }{\sqrt{\gamma^{2}\left\{ P_{T-2}\left(\alpha+1+\beta S_{T-1}-\gamma\rho O_{T-2}-\gamma\eta_{T-1}\right)+\varepsilon_{T-1}\right\} ^{2}\sigma_{\eta}^{2}+\sigma_{\varepsilon}^{2}}}\right)}{\Phi\left(\frac{\left\{ P_{T-2}\left(\alpha+1+\beta S_{T-1}-\gamma\rho O_{T-2}-\gamma\eta_{T-1}\right)+\varepsilon_{T-1}\right\} \left\{ \alpha+\beta\left(W_{T-1}-S_{T-1}\right)-\gamma\rho^{2}O_{T-2}-\gamma\rho\eta_{T-1}\right\} }{\sqrt{\gamma^{2}\left\{ P_{T-2}\left(\alpha+1+\beta S_{T-1}-\gamma\rho O_{T-2}-\gamma\eta_{T-1}\right)+\varepsilon_{T-1}\right\} ^{2}\sigma_{\eta}^{2}+\sigma_{\varepsilon}^{2}}}\right)}\right]\right\} 
\]
\end{doublespace}
\end{proof}

\end{document}